\documentclass{article}[11pt]
\usepackage{titling}
\usepackage{ amssymb }
\usepackage{amsfonts}
\usepackage{amsthm}
\usepackage{amsmath}
\usepackage{bbm}
\usepackage{setspace}
\usepackage[normalem]{ulem}

\usepackage{anyfontsize}

\usepackage{float}

\usepackage[utf8]{inputenc}

\def\showauthornotes{0}

\def\showkeys{0}
\def\showdraftbox{0}
\def\showcolorlinks{1}
\def\usemicrotype{1}
\def\showfixme{0}

\usepackage{csquotes}

\usepackage[
backend=biber,
maxcitenames=50,
maxbibnames=99,
style=alphabetic,
]{biblatex}
 \addbibresource{bib_for_stoc.bib}
 \usepackage{titlesec}
 \setcounter{secnumdepth}{3}

\title{Low Acceptance Agreement Tests\\ via Bounded-Degree Symplectic HDXs}
\author{Anonymous}
\author{Yotam Dikstein\thanks{Institute for Advanced Study, USA. email: yotam.dikstein@gmail.com.}, \;Irit Dinur\thanks{Weizmann Institute of Science, ISRAEL. email: irit.dinur@weizmann.ac.il. Supported by ERC grant 772839, and ISF grant 2073/21.}, and Alexander Lubotzky\thanks{Weizmann Institute of Science, ISRAEL. email: alex.lubotzky@mail.huji.ac.il. Supported by the ERC grant 882751, and by a research grant from the Center for New Scientists at the Weizmann Institute of Science.}}
\date{\today}

\providecommand{\RR}{\mathbb{R}}




\usepackage[l2tabu, orthodox]{nag}


\usepackage{xspace,enumerate}

\usepackage[dvipsnames]{xcolor}

\usepackage[T1]{fontenc}
\usepackage[full]{textcomp}


\usepackage[american]{babel}


\usepackage{mathtools}





\usepackage{amsthm}
\usepackage{amsmath}

\newtheorem{theorem}{Theorem}[section]
\newtheorem*{theorem*}{Theorem}

\newtheorem{proposition}[theorem]{Proposition}
\newtheorem*{proposition*}{Proposition}
\newtheorem{lemma}[theorem]{Lemma}
\newtheorem*{lemma*}{Lemma}
\newtheorem{corollary}[theorem]{Corollary}
\newtheorem*{corollary*}{Corollary}
\newtheorem*{conjecture*}{Conjecture}
\newtheorem{fact}[theorem]{Fact}
\newtheorem*{fact*}{Fact}

\newtheorem*{hypothesis*}{Hypothesis}

\theoremstyle{definition}
\newtheorem{definition}[theorem]{Definition}
\newtheorem*{definition*}{Definition}

\newtheorem{example}[theorem]{Example}

\newtheorem{algorithm}[theorem]{Algorithm}

\theoremstyle{remark}
\newtheorem{claim}[theorem]{Claim}
\newtheorem*{claim*}{Claim}

\newtheorem*{remark*}{Remark}
\newtheorem{observation}[theorem]{Observation}
\newtheorem*{observation*}{Observation}





 \usepackage[letterpaper,
 top=1in,
 bottom=1in,
 left=1in,
 right=1in]{geometry}


\usepackage[varg]{pxfonts} 

\usepackage[sc,osf]{mathpazo}
\usepackage{lmodern}


\ifnum\showkeys=1
\usepackage[color]{showkeys}
\fi


\ifnum\showcolorlinks=1
\usepackage[
colorlinks=true,
urlcolor=blue,
linkcolor=blue,
citecolor=OliveGreen,
]{hyperref}
\fi

\ifnum\showcolorlinks=0
\usepackage[
colorlinks=false,
pdfborder={0 0 0}
]{hyperref}
\fi

\usepackage{prettyref}


\newcommand{\savehyperref}[2]{\texorpdfstring{\hyperref[#1]{#2}}{#2}}

\newrefformat{eq}{\savehyperref{#1}{\textup{(\ref*{#1})}}}
\newrefformat{lem}{\savehyperref{#1}{Lemma~\ref*{#1}}}
\newrefformat{def}{\savehyperref{#1}{Definition~\ref*{#1}}}
\newrefformat{thm}{\savehyperref{#1}{Theorem~\ref*{#1}}}
\newrefformat{cor}{\savehyperref{#1}{Corollary~\ref*{#1}}}
\newrefformat{cha}{\savehyperref{#1}{Chapter~\ref*{#1}}}
\newrefformat{sec}{\savehyperref{#1}{Section~\ref*{#1}}}
\newrefformat{app}{\savehyperref{#1}{Appendix~\ref*{#1}}}
\newrefformat{tab}{\savehyperref{#1}{Table~\ref*{#1}}}
\newrefformat{fig}{\savehyperref{#1}{Figure~\ref*{#1}}}
\newrefformat{hyp}{\savehyperref{#1}{Hypothesis~\ref*{#1}}}
\newrefformat{alg}{\savehyperref{#1}{Algorithm~\ref*{#1}}}
\newrefformat{rem}{\savehyperref{#1}{Remark~\ref*{#1}}}
\newrefformat{item}{\savehyperref{#1}{Item~\ref*{#1}}}
\newrefformat{obs}{\savehyperref{#1}{Observation~\ref*{#1}}}
\newrefformat{step}{\savehyperref{#1}{step~\ref*{#1}}}
\newrefformat{conj}{\savehyperref{#1}{Conjecture~\ref*{#1}}}
\newrefformat{fact}{\savehyperref{#1}{Fact~\ref*{#1}}}
\newrefformat{prop}{\savehyperref{#1}{Proposition~\ref*{#1}}}
\newrefformat{prob}{\savehyperref{#1}{Problem~\ref*{#1}}}
\newrefformat{claim}{\savehyperref{#1}{Claim~\ref*{#1}}}
\newrefformat{relax}{\savehyperref{#1}{Relaxation~\ref*{#1}}}
\newrefformat{red}{\savehyperref{#1}{Reduction~\ref*{#1}}}
\newrefformat{part}{\savehyperref{#1}{Part~\ref*{#1}}}
\newrefformat{ex}{\savehyperref{#1}{Example~\ref*{#1}}}
\newrefformat{para}{\savehyperref{#1}{Paragraph~\ref*{#1}}}
\newrefformat{ass}{\savehyperref{#1}{Assumption~\ref*{#1}}}
\newrefformat{question}{\savehyperref{#1}{Question~\ref*{#1}}}


\newcommand{\Sref}[1]{\hyperref[#1]{\S\ref*{#1}}}

\usepackage{nicefrac}



\ifnum\usemicrotype=1
\usepackage{microtype}
\fi

\ifnum\showauthornotes=1
\newcommand{\Authornote}[2]{{\sffamily\small\color{red}{[#1: #2]}}}
\newcommand{\Authornotecolored}[3]{{\sffamily\small\color{#1}{[#2: #3]}}}
\newcommand{\Authorcomment}[2]{{\sffamily\small\color{gray}{[#1: #2]}}}
\newcommand{\Authorstartcomment}[1]{\sffamily\small\color{gray}[#1: }

\newcommand{\Authorfnote}[2]{\footnote{\color{red}{#1: #2}}}
\newcommand{\Authorfixme}[1]{\Authornote{#1}{\textbf{??}}}
\newcommand{\Authormarginmark}[1]{\marginpar{\textcolor{red}{\fbox{\Large #1:!}}}}
\else
\newcommand{\Authornote}[2]{}
\newcommand{\Authornotecolored}[3]{}
\newcommand{\Authorcomment}[2]{}
\newcommand{\Authorstartcomment}[1]{}

\newcommand{\Authorfnote}[2]{}
\newcommand{\Authorfixme}[1]{}
\newcommand{\Authormarginmark}[1]{}
\fi




\ifnum\showfixme=0

\fi

\usepackage{boxedminipage}



\newcommand{\Brac}[1]{\left[#1\right]}

\newcommand{\abs}[1]{\lvert#1\rvert}
\newcommand{\Abs}[1]{\left\lvert#1\right\rvert}

\newcommand{\card}[1]{\lvert#1\rvert}

\newcommand\sett[2]{\left\{ #1 \left| \; \vphantom{#1 #2} \right. #2  \right\}}
\newcommand{\set}[1]{\{#1\}}

\newcommand{\norm}[1]{\lVert#1\rVert}






\newcommand{\iprod}[1]{\langle#1\rangle}

\def\dim{\mathrm{ dim}}
\def\sp{\mathrm{ span}}

\newcommand{\F}{\mathbb{F}}
\newcommand{\Esymb}{\mathbb{E}}
\newcommand{\Psymb}{\mathbb{P}}

\DeclareMathOperator*{\E}{\Esymb}

\DeclareMathOperator*{\ProbOp}{\Psymb}

\renewcommand{\Pr}{\ProbOp}

\def\wt{{\mathrm{wt}}}

\newcommand{\prob}[1]{\Pr \left[ {#1} \right] }
\newcommand{\Prob}[2][]{\Pr_{{#1}}\left[#2\right]} 

\newcommand{\Ex}[2][]{\E_{{#1}}\Brac{#2}}

\newcommand{\ve}{\;\hbox{and}\;}


 \usepackage{dsfont}
\usepackage{mathrsfs}





\newcommand{\textparen}[1]{\text{(#1)}}

\ifx\because\undefined
\newcommand{\because}[1]{\textparen{because #1}}
\else
\renewcommand{\because}[1]{\textparen{because #1}}
\fi






















\newcommand\bdot\bullet







\DeclareMathOperator{\poly}{poly}

\DeclareMathOperator{\dist}{dist}
\DeclareMathOperator{\sign}{sign}









\newcommand{\N}{\mathbb N}

\newcommand{\C}{\mathcal C}

\def\Col{\mathsf{C}}
\def\cont{f}






\renewcommand{\leq}{\leqslant}

\renewcommand{\geq}{\geqslant}

\ifnum\showdraftbox=1

\else

\fi


\let\epsilon=\varepsilon

\numberwithin{equation}{section}



\newcommand{\MYstore}[2]{%
  \global\expandafter \def \csname MYMEMORY #1 \endcsname{#2}%
}

\newcommand{\MYload}[1]{%
  \csname MYMEMORY #1 \endcsname%
}

\newcommand{\MYnewlabel}[1]{%
  \newcommand\MYcurrentlabel{#1}%
  \MYoldlabel{#1}%
}

\newcommand{\MYdummylabel}[1]{}

\newcommand{\torestate}[1]{%
  \let\MYoldlabel\label%
  \let\label\MYnewlabel%
  #1%
  \MYstore{\MYcurrentlabel}{#1}%
  \let\label\MYoldlabel%
}

\newcommand{\restatetheorem}[1]{%
  \let\MYoldlabel\label
  \let\label\MYdummylabel
  \begin{theorem*}[Restatement of \prettyref{#1}]
    \MYload{#1}
  \end{theorem*}
  \let\label\MYoldlabel
}

\newcommand{\restatelemma}[1]{%
  \let\MYoldlabel\label
  \let\label\MYdummylabel
  \begin{lemma*}[Restatement of \prettyref{#1}]
    \MYload{#1}
  \end{lemma*}
  \let\label\MYoldlabel
}

\newcommand{\restateprop}[1]{%
  \let\MYoldlabel\label
  \let\label\MYdummylabel
  \begin{proposition*}[Restatement of \prettyref{#1}]
    \MYload{#1}
  \end{proposition*}
  \let\label\MYoldlabel
}

\newcommand{\restateclaim}[1]{%
  \let\MYoldlabel\label
  \let\label\MYdummylabel
  \begin{claim*}[Restatement of \prettyref{#1}]
    \MYload{#1}
  \end{claim*}
  \let\label\MYoldlabel
}

\newcommand{\restatecorollary}[1]{%
  \let\MYoldlabel\label
  \let\label\MYdummylabel
  \begin{corollary*}[Restatement of \prettyref{#1}]
    \MYload{#1}
  \end{corollary*}
  \let\label\MYoldlabel
}

\newcommand{\restatefact}[1]{%
  \let\MYoldlabel\label
  \let\label\MYdummylabel
  \begin{fact*}[Restatement of \prettyref{#1}]
    \MYload{#1}
  \end{fact*}
  \let\label\MYoldlabel
}

\newcommand{\restatedefinition}[1]{
\let\MYoldlabel\label 
\let\label\MYdummylabel 
\begin{definition*}[Restatement of \prettyref{#1}] 
    \MYload{#1} 
\end{definition*} 
\let\label\MYoldlabel 
} 

\newcommand{\restate}[1]{%
  \let\MYoldlabel\label
  \let\label\MYdummylabel
  \MYload{#1}
  \let\label\MYoldlabel
}




\newcommand{\eps}{\epsilon}




\allowdisplaybreaks


\sloppy



\newcommand{\dunion}{\mathbin{\mathaccent\cdot\cup}}

\let\pref=\prettyref


\newcommand{\dir}[1]{\overset{\to}{#1}}
\newcommand{\bproof}[1]{\begin{proof}[Proof of \pref{#1}]}
\newcommand{\eproof}{\end{proof}}
\newcommand{\coboundary}{{\delta}}

\newcommand{\Img}{{\textrm {Im}}}
\newcommand{\agr}{{\textrm {Agree}}}

\def\Q{\mathbb{Q}}

\def\S{\mathcal{S}}
\def\J{\mathcal{J}}
\def\d_1{d}
\def\X{\mathcal{Z}}
\newcommand{\FA}[2][]{\mathrm{F}^{{#1}}\!{#2}}
\newcommand{\FX}[1][r]{\mathrm{F}^{#1}\!X}

\newcommand{\FS}[1][r]{\mathrm{F}^{#1}\!\S}
\newcommand{\FD}[1][r]{\mathrm{F}^{#1}\!\Delta}

\newcommand{\xrightarrowdbl}[2][]{%
  \xrightarrow[#1]{#2}\mathrel{\mkern-14mu}\rightarrow
}

\begin{document}\clearpage\thispagestyle{empty}

\maketitle
\begin{abstract}
We solve the derandomized direct product testing question in the low acceptance regime, by constructing new high dimensional expanders that have no small connected covers. We show that our complexes have swap cocycle expansion, which allows us to deduce the agreement theorem by relying on previous work.

Derandomized direct product testing, also known as agreement testing, is the following problem. Let $X$ be a family of $k$-element subsets of $[N]$ and let $\sett{f_s:s\to\Sigma}{s\in X}$ be an ensemble of local functions, each defined over a subset $s\subset [N]$. Suppose that we run the following so-called agreement test: choose a random pair of sets $s_1,s_2\in X$ that intersect on $\sqrt k$ elements, and accept if $f_{s_1},f_{s_2}$ agree on the elements in $s_1\cap s_2$. We denote the success probability of this test by $\agr(\set{f_s})$. Given that $\agr(\set{f_s})=\eps>0$, is there a global function $G:[N]\to\Sigma$ such that $f_s = G|_s$ for a non-negligible fraction of $s\in X$ ? 

We construct a family $X$ of $k$-subsets of $[N]$ such that $|X| = O(N)$ and such that it satisfies the low acceptance agreement theorem. Namely, %
   \begin{equation*} 
    \agr (\set{f_s}) > \eps  \quad \Longrightarrow \quad \exists G:[N]\to\Sigma,\quad \Pr_s[f_s\overset{0.99}{\approx} G|_s]\geq\poly(\eps).
\end{equation*}
A key idea is to replace the well-studied LSV complexes by symplectic high dimensional expanders (HDXs). The family $X$ is just the $k$-faces of the new symplectic HDXs. The later serve our needs better since their fundamental group satisfies the congruence subgroup property, which implies that they lack small covers. We also give a polynomial-time algorithm to construct this family of symplectic HDXs.
\end{abstract}

\section{Introduction}
Any function $f:[N]\to\Sigma$ can be encoded by specifying its restrictions to certain subsets $s_1,s_2,\ldots \subset [N]$. The direct product encoding specifies the restriction of $f$ to all $k$-element subsets. While certainly redundant, this encoding allows an algorithm such as a PCP verifier to access $k$ inputs of $f$ with only a single query to the direct product encoding. But, this is only useful as long as the encoding is valid. An agreement test is a property tester for the validity of this encoding. The natural two-query test, called the V-test, is as follows: choose a random pair of sets $s_1,s_2$ with prescribed intersection size ($\sqrt k$ in our case) and accept if $f_{s_1},f_{s_2}$ agree on the elements in $s_1\cap s_2$. 

Direct product tests were introduced by Goldreich and Safra in \cite{GolSaf97} as an abstraction of PCP low degree tests. A sequence of works analyzed direct product tests, \cite{GolSaf97,DinurR06, DinurG2008,ImpagliazzoKW2012, DinurS14, DinurL2017}. 
There are two main parameter regimes of interest. The ``$99\%$'' or high acceptance regime which is natural in the world of property testing, and the ``$1\%$'' or low acceptance regime, which is the main regime of interest in the world of PCPs, and is often more challenging. In this regime the goal is to show that given an ensemble of local functions $\set{f_s}_{s\in X}$, where $X$ is some family of subsets, even if the test accepts with a small but non-negligible probability $\varepsilon$, the given encoding is $\varepsilon'$ correlated to a valid one, for some $\varepsilon'$ that depends favorably on $\varepsilon$.
Denoting the success probability of the test by $\agr (\set{f_s})$, 
\begin{equation}\label{eq:LA}
    \agr (\set{f_s}) > \varepsilon  \quad \Longrightarrow \quad \exists G:[N]\to\Sigma,\quad \Pr_s[f_s {\approx} G|_s]\geq \varepsilon'.
\end{equation}

The high redundancy of the direct product encoding, taking $N$ symbols to $\binom N k\approx N^k$ symbols, has lead researchers to look for {\em derandomized} direct product tests. Derandomization means, in this context, coming up with a family of $k$-sets that is much smaller than $\binom{N}{k}$, and with an appropriate testing distribution, such that the new encoding supports an agreement test just like in the fully redundant direct product case. Goldreich and Safra \cite{GolSaf97} showed a certain derandomized direct product test in the $99\%$ regime, and Impagliazzo, Kabanets, and Wigderson \cite{ImpagliazzoKW2012} showed a derandomized direct product test in the $1\%$ regime, with a family of $k$-sets whose size is $N^{c}$ for some $c>2$ that is independent of $k$. 

The recent emergence of the area of high dimensional expansion gave hope for a derandomized direct product test using a family of subsets that is based on high dimensional expanders. One of the early works that involve high dimensional expanders within theoretical computer science \cite{DinurK2017} has shown that indeed, local spectral expanders provide an optimal derandomization for direct product tests in the 99\% regime (the work \cite{DinurK2017} has some restrictions which were later removed in \cite{DiksteinD2019}). 

The works of \cite{DinurK2017,DiksteinD2019} combine spectral techniques from high dimensional expansion with combinatorial machinery from the non-derandomized setup. For a while it wasn't clear why these techniques fail to go beyond the 99\% regime and into the 1\% regime.

It turns out that unlike the case of the 99\% regime, {\em not every high dimensional expander supports a 1\% agreement test}. The obstacle was discovered recently, in two concurrent works \cite{DiksteinD2023agr, BafnaM2023}: complexes $X$ that have small covers do not support an agreement test (for a definition of a covering map, see \pref{sec:building-background}), since the cover itself provides a counterexample. 

Both \cite{BafnaM2023} and \cite{DiksteinD2023agr} show that an agreement theorem holds for any high dimensional expander that lacks small covers\footnote{\cite{BafnaM2023} don't use the language of covers but rather of {\em UG coboundary expansion}, with respect to non-abelian group coefficients. It is known, see e.g. \cite{DinurM2019}, that coboundary expansion with non abelian coefficients is related to covers and their stability.}. Moreover, embracing the cover obstacle, \cite{DiksteinD2023agr} proved a modified agreement test theorem which shows that for high dimensional expanders, covers are the only obstacle. The theorem, whose details are given in full below in \pref{thm:agrcover}, holds under an additional condition on the complex, which is called {\em swap cocycle expansion}. Swap cocycle expansion of a complex $X$ pertains to cocycle expansion of a related complex, called the faces complex, and denoted $FX$ (see \pref{def:face-complex}). It was further shown in \cite{DiksteinD2023swap} that the spherical building $A_n$ associated with $SL_n(\F_p)$ is a swap cocycle expander\footnote{In previous works, this was termed a swap \emph{cosystolic} expander.}. Spherical buildings lack {\em any} connected covers, so the above theorems suffice for deducing a derandomized direct product test (see \cite[Corollary 1.6]{DiksteinD2023agr}) with parameters improving upon the previously best known from \cite{ImpagliazzoKW2012}. Nevertheless, the complex $A_n$ is not bounded-degree, so it still does not solve the main problem, which is to find a family of subsets of $[N]$ whose size is {\em linear} in $N$, and which supports a 1\% agreement test.

\subsection{Results} 
We construct new high dimensional expanders with no small covers, and show that they support 1\% agreement tests. 
Our complexes are constructed as quotients of the affine Bruhat-Tits building associated with the symplectic group  \(Sp(2g,\mathbb{Q}_p)\), denoted Let \(\tilde{C}_g=\tilde{C}_g(\mathbb{Q}_p)\). We provide some background in \pref{sec:building-background}. Our main theorem is as follows,
\begin{theorem} \label{thm:main}
    For every $\eps>0$, there exist $c>0$ and large enough integers $k<g$ and a prime $p$ such that the following holds. There exists an infinite family of constant degree connected $g+1$-partite simplicial complexes \(\set{X_N}_N\) that are finite quotients of \(\tilde{C}_g\) such that $X_N$ has $N$ vertices and $O_g(N)$ faces, and such that the following holds. Let $\Sigma$ be a finite alphabet, and let $\sett{f_s:s\to\Sigma}{s\in X(k)}$ be an ensemble of local functions on $X(k)$ (where $X$ is any complex in the family $\set{X_N}$). Let $\agr(\set{f_s})$ denote the success probability of the agreement test with respect to the distribution that chooses $s_1,s_2\in X(k)$ conditioned on $|s_1\cap s_2| = \sqrt k$. Then
\begin{equation}\label{eq:CLA}
    \agr (\set{f_s}) > \eps  \quad \Longrightarrow \quad \exists G:X(0)\to\Sigma,\quad \Pr_s[f_s \approx G|_s]\geq \eps^c. 
\end{equation} 
where $f\approx f'$ indicates that the two functions agree on $0.99$ of their domain.
\end{theorem}
This theorem was also concurrently proved by \cite{BafnaLM2024}. 

In \pref{sec:poly-const} we will show that one can choose the complexes in \pref{thm:main} to be \emph{polynomially constructible}.

In previous work, \cite{DiksteinD2023agr} showed a similar conclusion under the assumption that $X$ is a quotient of the building associated with $SL_n(\mathbb{Q}_p)$, and assuming that it has no small covers. Unfortunately, it is not known how to construct such a quotient. The novel idea of the current paper is to replace $SL_n(\mathbb{Q}_p)$ by $Sp(2g,\mathbb{Q}_p)$ and to construct complexes that are analogous to those of \cite{LubotzkySV2005a,LubotzkySV2005b} with the additional property of having no small cover. 
\begin{theorem} \label{thm:CLhighdim}
    Let \(m\geq 2\) and \(g \geq 100 \sqrt{m\log (m)}\). For every prime \(p\), there exists an infinite family of connected simplicial complexes \(X\) that are finite quotients of \(\tilde{C}_g(\mathbb{Q}_p)\) such that every \(X\) has no connected \(m'\)-covers for any \(1< m' \leq m\). 
\end{theorem}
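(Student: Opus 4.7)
The plan is to realize each $X$ in the family as a quotient $X=\Gamma\backslash\mathcal{B}$, where $\mathcal{B}$ is the Bruhat--Tits building of $Sp(2g,\mathbb{Q}_p)$ and $\Gamma\leq Sp(2g,\mathbb{Q}_p)$ is a torsion-free cocompact arithmetic lattice (produced, \eg, from an appropriate Hermitian form over an imaginary quadratic field that is anisotropic at infinity and gives $Sp(2g)$ at $p$, via the standard Margulis--Borel--Harish-Chandra construction). Varying $\Gamma$ through a descending chain of congruence subgroups yields the required infinite family. Since $\mathcal{B}$ is contractible (Solomon--Tits), we have $\pi_1(X)=\Gamma$, so a connected $m'$-sheeted cover of $X$ is exactly the same datum as a subgroup of $\Gamma$ of index $m'$. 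The theorem therefore reduces to the purely group-theoretic claim that $\Gamma$ has no proper subgroup of index at most $m$.

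The main input will be the congruence subgroup property (CSP) for $Sp(2g)$ with $g\geq 2$, due to Matsumoto, Raghunathan and Rapinchuk: every finite-index subgroup $\Lambda\leq\Gamma$ contains a principal congruence subgroup $\Gamma(J)=\ker(\Gamma\to Sp(2g,\mathcal{O}/J))$ for some nonzero ideal $J$ of the ring of integers $\mathcal{O}$. By strong approximation, the image of $\Gamma$ in $Sp(2g,\mathcal{O}/J)$ is, modulo the finite centre, a direct product $\prod_i PSp(2g,\kappa_i)$ over the residue fields $\kappa_i$ at the primes dividing $J$. Hence the small-index question for $\Gamma$ reduces to the same question for a product of non-abelian simple groups of the form $PSp(2g,\mathbb{F}_q)$.

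To finish, I would apply the classical bound on the minimum degree of a faithful permutation representation: every proper subgroup of $PSp(2g,\mathbb{F}_q)$ has index at least $(q^{2g}-1)/(q-1)\geq 2^{2g-1}$, realised by the stabiliser of a point in $\mathbb{P}^{2g-1}(\mathbb{F}_q)$. Since the factors $PSp(2g,\kappa_i)$ are non-abelian simple (for $g\geq 2$, outside a few tiny exceptional cases that cause no trouble here), a standard Goursat-type analysis of subgroups of a product of non-abelian simple groups extends the same lower bound to $\prod_i PSp(2g,\kappa_i)$: the only maximal subgroups not of the form ``everything times a maximal subgroup of a single factor'' are diagonal subgroups of pairs of isomorphic factors, which have even larger index $|PSp(2g,\kappa_i)|$. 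The hypothesis $g\geq 100\sqrt{m\log m}$ comfortably ensures $2^{2g-1}>m$, ruling out all covers of degree in $(1,m]$.

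The main obstacle I foresee is not the group theory but the arithmetic set-up: choosing $\Gamma$ so that it is simultaneously torsion-free (so that the action on $\mathcal{B}$ is free and $X$ is a genuine simplicial complex with $\pi_1(X)=\Gamma$), cocompact (so that $X$ is finite), and $S$-arithmetic with trivial congruence kernel (so that the CSP reduction above is sharp). All three conditions are known to be simultaneously realisable for $Sp(2g)$ with $g\geq 2$, and passing to a deep enough initial principal congruence subgroup handles torsion-freeness; the one slightly delicate point is carrying this out uniformly as the prime $p$ ranges, as required by the statement.
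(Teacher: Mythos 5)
Your proposal is, at its core, the same strategy as the paper: realize $X=\Gamma\backslash\mathcal{B}$ for a torsion-free cocompact arithmetic lattice $\Gamma\leq Sp(2g,\mathbb{Q}_p)$, invoke CSP and strong approximation to identify the profinite (= congruence) completion of $\Gamma$ as a product of local factors, and then show these local factors have no small-index open subgroups because $PSp(2g,\mathbb{F}_q)$ is simple. Where your outline diverges favorably is in the final numerical step: you use the minimal degree of a faithful permutation representation of $PSp(2g,\mathbb{F}_q)$, namely $\geq(q^{2g}-1)/(q-1)\geq 2^{2g-1}$, whereas the paper uses only the cruder bound $\abs{PSp(2g,\mathbb{F}_q)}>m!$. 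Your bound is exponentially stronger and would let one take $g=O(\log m)$ rather than $g\geq 100\sqrt{m\log m}$, so that part of your argument is genuinely better.

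There are, however, two real gaps. First, the arithmetic construction you sketch does not produce a $\mathbb{Q}$-form of $Sp(2g)$. A Hermitian form over an imaginary quadratic field gives a \emph{unitary} group, which is of type $A$, not type $C$; its local group at a split prime is $GL$, not $Sp$. The correct construction (and the one the paper uses) is via a quaternion division algebra $H^{\ell}/\mathbb{Q}$ that ramifies at $\infty$ and at a finite prime $\ell$ but splits at $p$, with the special unitary group $SU_g(H^{\ell},h)$ for the standard quaternionic Hermitian form: over $\mathbb{R}$ this is compact (so the lattice is cocompact), and over $\mathbb{Q}_p$ it is $Sp(2g,\mathbb{Q}_p)$. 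You flag the arithmetic setup as your main uncertainty, but this is a specific error, not just an underspecified detail.

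Second and more substantively, your congruence analysis ignores the ramified place. Strong approximation gives $\widehat{\Gamma}\cong H_{\ell}\times\prod_{q\neq \ell,p}Sp(2g,\mathbb{Z}_q)$, where $H_{\ell}$ is an $\ell$-adic \emph{compact} group coming from the ramified quaternion algebra and is not of the form $Sp(2g,\mathbb{Z}_q)$. Your Goursat/Frattini analysis only treats factors whose Frattini quotients are $PSp(2g,\mathbb{F}_q)$, so it says nothing about $H_{\ell}$, which could a priori have index-$2$ (or generally small-index) open subgroups. The paper's key additional idea, which is missing from your proposal, is to choose the ramified prime $\ell>m$ at the outset: then $H_{\ell}$ is virtually pro-$\ell$, and by descending along a chain of pro-$\ell$ subgroups $H_1\geq H_2\geq\cdots$ one arranges that the $\ell$-factor of $\widehat{\Gamma_i}$ has no proper open subgroup of index $\leq m$, while simultaneously (via $\bigcap_i H_i=\{1\}$) ensuring the action on $\mathcal{B}$ eventually becomes free of torsion and of short returns. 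Without this device your argument does not close.
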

The case of $m=2$ follows from \cite{ChapmanL2023}. Given \pref{thm:CLhighdim}, the remaining work is to show swap coboundary expansion (a key requirement for the agreement theorem to hold, see \pref{def:swap}) for buildings of symplectic type. This was previously shown \cite{DiksteinD2023swap} for buildings associated with $SL_n$. We adapt those techniques for the symplectic case, and show,
\begin{theorem}\torestate{ \label{thm:swap-coboundary-expansion}
Let \(d\) be an integer. There is some \(p_0=p_0(d)\) such that for all primes \(p > p_0\) the following holds. Let \(X\) be a quotient of $\tilde C_g(\Q_p)$, the affine symplectic building associated with \(Sp(2g,\mathbb{Q}_p)\), for \(g \geq d^5\). Then \(X\) is a \((d,\exp(-O(\sqrt{d})))\)-swap cocycle expander.}
\end{theorem}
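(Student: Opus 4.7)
The plan is to follow the template of \cite{DiksteinD2023swap}, which proved the analogous statement for quotients of the affine building of $SL_n(\mathbb{Q}_p)$ (type $\widetilde{A}$), and adapt each step to the symplectic setting (type $\widetilde{C}$). The argument naturally splits into two phases: first, a local phase, in which one establishes strong swap coboundary expansion on the spherical links of $\S$; then a global phase, in which a local-to-global principle for cocycles lifts this to a global swap cocycle expansion statement for $\S$ itself.

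The heart of the argument is the local phase. The vertex links of the affine building of $Sp(2g,\mathbb{Q}_p)$ are spherical buildings over the residue field $\mathbb{F}_p$ whose type is obtained by deleting a node from the extended $\widetilde{C}_g$ Dynkin diagram; they decompose as products of type $C$ and type $A$ factors. The type $A$ factors are handled directly by \cite{DiksteinD2023swap}, so the new content is to show that the order complex of isotropic flags in a $2g$-dimensional symplectic space over $\mathbb{F}_p$ is a $(d,\exp(-O(\sqrt{d})))$-swap coboundary expander whenever $g \geq d^5$ and $p > p_0(d)$. Mirroring the type $A$ argument, I would decompose $d$-cochains into isotypic components under the action of $Sp(2g,\mathbb{F}_p)$, analogous to the $GL_n(\mathbb{F}_p)$-isotypic decomposition used for the $SL_n$ buildings, and bound the swap-walk eigenvalue on each component using branching rules between isotropic Grassmannians. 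Swap coboundary expansion of product links then follows from swap coboundary expansion of the factors by a standard tensorization step.

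The global phase invokes the local-to-global framework for swap cocycle expansion already available in the literature: once every link is a swap coboundary expander with the target parameters, the classical high-dimensional spectral expansion of the affine symplectic building and its finite quotients allows one to glue the local guarantees into global swap cocycle expansion of $\S$. The main obstacle is the representation-theoretic analysis of the spherical $C$-type building. The representation theory of $Sp(2g,\mathbb{F}_p)$ is notably subtler than that of $GL_n(\mathbb{F}_p)$: the Weyl group is the hyperoctahedral group rather than the symmetric group, and the combinatorics of isotropic flags involves both long and short roots and the polar self-duality of the symplectic form, features absent in type $A$. Controlling the swap-walk eigenvalues on the ``small'' irreducibles of $Sp$, and verifying that the slack provided by $g\geq d^5$ together with $p>p_0(d)$ indeed yields the target $\exp(-O(\sqrt{d}))$ decay, is where genuinely new work beyond \cite{DiksteinD2023swap} is required. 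Once this delicate step is in place, the local-to-global assembly is essentially a transcription of the type $A$ proof.
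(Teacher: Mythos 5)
Your high-level decomposition — a local phase establishing coboundary expansion on the spherical links, followed by a local-to-global assembly via the known cosystolic-expansion machinery — matches the paper's structure. However, the method you propose for the local phase is not the paper's method and, as stated, would not work.

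You propose to ``decompose $d$-cochains into isotypic components under the action of $Sp(2g,\mathbb{F}_p)$'' and ``bound the swap-walk eigenvalue on each component using branching rules between isotropic Grassmannians.'' This conflates two different notions. The \emph{swap walk} $S_{k,\ell}$ is a bipartite graph whose second eigenvalue is bounded by spectral (and, if you like, representation-theoretic) methods; the paper does use such spectral bounds for \emph{local spectral expansion} of the building. But \emph{swap coboundary expansion} (\pref{def:swap}) is a cohomological property of the $1$-cochains of the faces complex $\FX[r]$, and moreover it is required to hold over \emph{arbitrary}, not necessarily abelian, coefficient groups $\Gamma$. An eigenvalue estimate on a swap walk does not yield a bound on $h^1(\FX[r])$. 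Your characterization of the type-$A$ argument in \cite{DiksteinD2023swap} as resting on a $GL_n(\mathbb{F}_p)$-isotypic decomposition is likewise not accurate; that argument is also cohomological.

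What the paper actually does for the local phase is combinatorial and topological. It proves coboundary expansion of the symplectic spherical building (\pref{thm:symplectic-building-coboundary-expansion}) via the \emph{cone method} (\pref{lem:group-and-cones}): it fixes a base vertex, chooses canonical paths to every vertex, and exhibits for every edge an explicit contraction of the associated cycle using a bounded number of triangles. The symplectic input is a diameter bound for the containment graph between isotropic subspaces (\pref{claim:symplectic-diameter}), proved using Witt's theorem (\pref{claim:shared-isotropic-subspace}). This gives coboundary expansion for carefully chosen triples of colors; coboundary expansion for general four-color restrictions is then obtained by an iterative color-swap argument (\pref{lem:inductive-bound-on-S-I}). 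The global faces-complex bound is then assembled via well-spread color sets, a tensor decomposition of the faces complex of a join (\pref{cor:tensor-bound}), and the generic color-product lower bounds (\pref{prop:colored-exponential-decay-bound}, \pref{prop:improved-generic-lower-bound-for-colors}). Only at the very end does one invoke \pref{thm:cosystolic-expansion-from-link-coboundary-expansion} to pass from link coboundary expansion to global cosystolic expansion. The missing idea in your proposal is thus the cone/filling argument for the symplectic spherical building and its color restrictions; the representation-theoretic route you describe targets the wrong invariant.
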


We rely on the following low soundness agreement theorem from \cite{DiksteinD2023agr}.

\begin{theorem}[Informal version of \pref{thm:agrcover-technical}]\label{thm:agrcover} Let $k\in \mathbb{N}$, and let $\eps >\Omega(1/\log k)$. There exists \(\lambda > 0\) and a sufficiently large \(d > k\) such that the following holds. Let $X$ be a $d$-dimensional \(\lambda\)-high dimensional expander with \((d,\exp(-O(\sqrt{d}))\)-swap-cocycle-expansion. 
    Let $\sett{f_s:s\to\Sigma}{s\in X(k)}$ be an ensemble of local functions on $X(k)$.
\begin{equation}\label{eq:CLA2}
    \agr (\set{f_s}) > \eps  \quad \Longrightarrow \quad \exists Y\xrightarrowdbl{\rho} X, \exists G:Y(0)\to\Sigma,\quad \Pr_s[f_s\hbox{ is explained by }G]\geq\poly(\eps).
\end{equation} 
where $\rho:Y\to X$ is a $\ell=\poly(1/\epsilon)$ covering map. 
\end{theorem}
Here by ``explained'' we formally mean that there exists some \(\tilde{s} \in \rho^{-1}(s)\) such that \(f_s \approx G \circ \rho|_{\tilde{s}}\). We remark that a similar result was shown in \cite{BafnaM2023} for the simplified case where $X$ has no connected covers of size $\poly(1/\varepsilon)$, which still suffices to prove \pref{thm:main}.

Since we construct in \pref{thm:CLhighdim} complexes $X$ with no connected $m$-covers, for $m \leq \poly(1/\eps)$, we deduce that $Y$ must be a collection of disjoint copies of $X$ and this proves our main result.

\subsection{Proof overview}

Our main theorem, \pref{thm:main}, is stated in terms of agreement testing. We rely on \pref{thm:agrcover} from \cite{DiksteinD2023agr} which translates the problem into constructing a complex that is a swap cocycle expander with no non-trivial covers of size \(\leq m = \poly(\varepsilon)\). If $X$ is such a complex, then the cover $Y$ appearing in \eqref{eq:CLA2} is necessarily disconnected into \(m\) disjoint copies of \(X\). It follows easily that in such a situation there exists at least one disjoint copy of \(X\) such that the restriction of the function \(G\) to that copy also satisfies \eqref{eq:CLA}, thus proving \pref{thm:main}.

Previous work on the quotients of \(\tilde{A}_n\), the \(SL_n\)-affine building, showed that they are sufficient swap cocycle expanders on which one can apply \pref{thm:agrcover}, \cite{DiksteinD2023swap}. These complexes include the \cite{LubotzkySV2005b} complexes. Unfortunately, it is not known whether there exist such complexes that have no small covers. In fact, by \cite{KaufmanKL2014} some \cite{LubotzkySV2005b} complexes are known to have \(2\)-covers. Therefore we searched for complexes that are similar enough to those of \cite{LubotzkySV2005b}, so that we could argue about the swap cocycle expansion, but distinct enough so that we could also show that they have no small covers.

The solution to our problem is \emph{going symplectic}. An \cite{LubotzkySV2005b} complex \(X\) is a quotient of \(\tilde{A}_n(\mathbb{Q}_p)\), the affine \(SL_n\)-building by some lattice \(\Gamma \leq SL_n\) (i.e. a discrete cocompact subgroup). As \(\tilde{A}_n(\mathbb{Q}_p)\) is simply connected, a classical fact from topology is that $X$ has a connected \(m\)-cover if and only if $\Gamma$ has a subgroup \(\Gamma' \leq \Gamma\) of index $m$. This observation appeared in \cite{KaufmanKL2014}, which showed that \cite{LubotzkySV2005b} complexes with no \(2\)-covers exist, relying on a special case of Serre's conjecture \cite{Serre1970}, namely, the congruence subgroup property on \(SL_n\). Recent work \cite{ChapmanL2023}, used the fact that this conjecture is known to hold in \emph{the symplectic group}, \(Sp(2g,\mathbb{Q}_p)\), to construct such complexes with no connected \(2\)-covers and to show that they are coboundary expanders over \(\mathbb{F}_2\). These complexes are quotients of \(\tilde{C}_g\), the affine symplectic building. Our work extends this idea to any \(m > 1\) by choosing carefully different parameters. That is, for any fixed $m$, we construct quotients of \(\tilde{C}_g\) that have no connected \(m'\)-covers for all $m'<m$.

\paragraph*{Constructing the complexes}We construct a family of subgroups \(\Gamma \leq Sp(2g,\mathbb{Q}_p)\) and quotient \(\tilde{C}_g\) by them to construct our family of agreement testers. 
As a first step, we use \cite{LubotzkyD1981} which says that there is a \(1-1\)-correspondence between \(m\)-index subgroups of \(\Gamma\) and \(m\)-index open subgroups of \(\widehat{\Gamma}\), the profinite completion of \(\Gamma\) (see \pref{sec:no-small-covers} for the precise definition of the profinite completion). For \(\Gamma \leq Sp(2g,\mathbb{Q}_p)\) as above, the congruence subgroup problem has an affirmative solution (proven by Rapinchuk \cite{Rapinchuk1989}), from this solution one obtains a clear structure of the profinite completion from which one can describe all small index open subgroups.

Using the identification of \(Sp(2g,\mathbb{Q}_p)\) with a unitary group over a suitable quaternion algebra, we construct a discrete cocompact arithmetic subgroup of \(\Gamma_0 \subseteq Sp(2g,\mathbb{Q}_p)\). We call this lattice \(\Gamma_0\) and search for our subgroups inside \(\Gamma_0\). Inside this \(\Gamma_0\), we are able to show existence of normal subgroups \(\Gamma_0 \geq \Gamma_1 \geq \Gamma_2 \geq \dots\) whose profinite completion is well structured - it is a product \(\widehat{\Gamma_i} = H_i \times \prod_{q} Sp(2g,\mathbb{Z}_q)\) for some primes \(q\). The group \(H_i\) is pro-\(\ell\) for a prime $\ell$ so it has no open subgroups of index less than \(\ell\) (here we can take \(\ell > m\)).
We show by using \cite{Weigel1996} that the groups \(Sp(2g,\mathbb{Z}_q)\) do not have open subgroups of index less than \(m\), and that this implies that the product also has no such subgroups. This is enough to obtain \pref{thm:CLhighdim}.

\paragraph*{Swap cocycle expansion}Our work is not done. We still need to argue that the complexes we constructed are swap cocycle expanders to apply \pref{thm:agrcover}. Fortunately, in a previous work, swap cocycle expansion was proven for quotients of \(\tilde{A}_n\) \cite{DiksteinD2023swap}. We extend the proof of \cite{DiksteinD2023swap} to quotients of \(\tilde{C}_g\). But in order to do this, we need to show that color restrictions (certain subcomplexes) of links of \(\tilde{C}_g\) are coboundary expanders. For an overview of the extension itself, we refer the reader to \pref{sec:proof-of-faces-complex-lower-bound}; for the rest of the overview we focus on the main new component, which is showing coboundary expansion of the links of \(\tilde{C}_g\).

The color restrictions we need to analyze are the following family of three partite complexes \(C_g^{i_1,i_2,i_3}\). To define this complex we consider \(\mathbb{F}_p^{2g}\) together with an asymmetric bilinear form denoted by \(\iprod{\cdot , \cdot}\). An isotropic subspace is a subspace \(v \subseteq \mathbb{F}_p^{2g}\) such that for every two vectors \(x,y \in v\), \(\iprod{x,y} = 0\). The complex \(C_g^{i_0,i_1,i_2}\) has three types of vertices, 
\[C_g^{i_0,i_1,i_2} [i_j] = \sett{v \subseteq \mathbb{F}_p^{2g}}{\dim(v) = i_j \ve v \text{ is isotropic}}.\]
We have a triangle \(\set{v_0,v_1,v_2} \in C_g^{i_0,i_1,i_2}(2)\) if \(v_0 \subset v_1 \subset v_2\). Henceforth we denote by \(I = \set{i_0,i_1,i_2}\) and the complex \(C_g^I\) for short. A technique in \cite{DiksteinD2023swap} allows us to reduce to the case where \(i_0 \ll i_1 \ll i_2 \ll g\) so we assume so.

To show coboundary expansion of this complex, we use the well known cone method. This method first appeared implicitly in \cite{Gromov2010}, and by now is one of the main tools for showing coboundary expansion \cite{LubotzkyMM2016, KozlovM2019,KaufmanO2021,KaufmanM2018,DiksteinD2023cbdry,DiksteinD2023swap}. This method is based on the observation that isoperimetric inequalities in symmetric complexes imply coboundary expansion. In other words, if one can show that \(O(1)\)-length cycles in \(C_g^I\) have triangle tilings with \(O(1)\)-triangles, then link is a coboundary expander. See \pref{sec:preliminaries} for the precise definitions and statement.

Let \(\cont=(v_0,v_1,\dots,v_t=v_0)\) be a cycle where \(t\) is a constant. For simplicity we assume our cycles only contain vertices in \(C_g^I[i_0]\) and \(C_g^I[i_1]\) (the proof reduces to this case). The tiling is simple so we describe it here:
\begin{enumerate}
        \item We find an isotropic subspace \(u^*\) of dimension $i_1$ that is perpendicular to every subspace participating in \(\cont\). Here we use the assumption that the sum of all subspaces in the cycle has low enough dimension so one such subspace exists by dimension considerations. To do so we rely on the fact that \(v_i + u^*\) is isotropic.
        \item We fix an arbitrary ``middle vertex'' \(u^{**} \subseteq u^{*}\) such that \(u^{**}\in C_g^I[i_0]\), and decompose the \(t\)-cycle to many \(t\)-cycles touching \(u^{**}\). These six cycles have three of the original subspaces in the cycle \((v_i \subseteq v_{i+1} \supseteq v_{i+2})\), and three `new' subspaces: \(u^{**}\) and two \(u_i, u_{i+2}\) that are formed by adding to \(v_i + u^{**}\) new vectors from \(u^*\) until reaching an \(i_1\)-dimensional isotropic subspace.
        \item We tile every such \(6\)-cycle separately. We observe that the sum of spaces in a \(6\)-cycle as above is contained in an isotropic subspace \(u^* + v_{i+1}\). A classical fact is that every isotropic subspace is contained in a maximal isotropic subspace, and in particular one can find some \(x \in C_g[i_2]\) that contains \(u^* + v_{i+1}\). This implies that \(x\) contains all vertices in the \(6\)-cycle, and we can therefore tile the cycle with triangles connected to \(x\).
\end{enumerate}
We illustrate this in \pref{fig:subspace-contraction}.
    \begin{figure}
        \centering
        \includegraphics[scale=0.5]{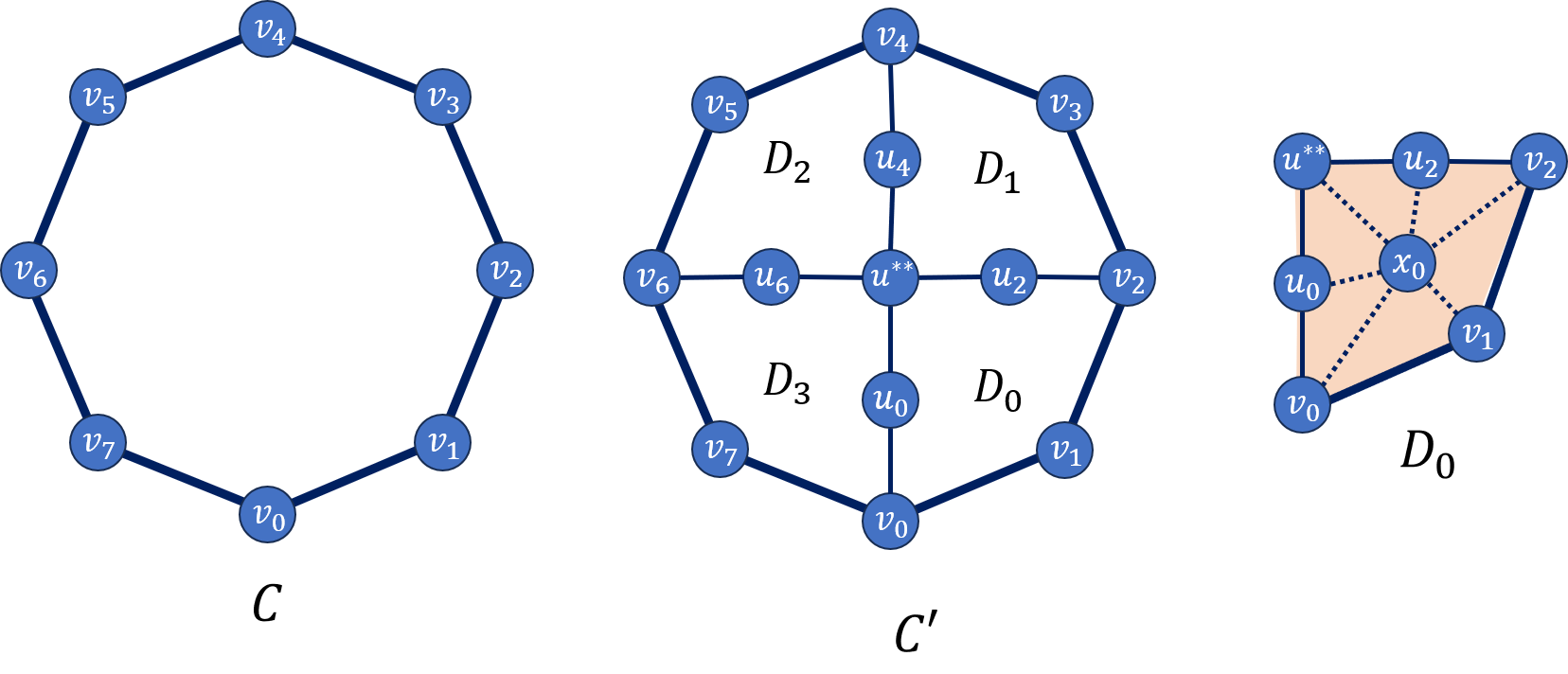}
        \caption{The contraction}
        \label{fig:subspace-contraction}
    \end{figure}
Local spectral expansion of these complexes is also needed so that we can apply \pref{thm:agrcover}. This property of the quotients is already known to experts in the field, but we include a proof here for completeness.

\subsection{Related works}
This work touches upon different fields of study.
\paragraph*{Agreement testing and PCPs} A main motivation behind agreement tests in the low acceptance ``$1\%$'' regime, is for constructing PCPs. Works such as \cite{RaSa} and \cite{ArSu} showed agreement in this regime that was translated into PCPs with low soundness. The most popular application of agreement tests is in relation to the parallel repetition theorem \cite{Raz-parrep}, where an agreement theorem on the complete complex is used to construct a large gap PCP.  Work by \cite{ImpagliazzoKW2012} gave a black-box conversion from an agreement-test theorem to a parallel repetition theorem, see also \cite{DinurS14, DinurS14-parrep}. The work \cite{ImpagliazzoKW2012} constructed a more derandomized family $X$ that satisfies \eqref{eq:CLA}; and later work by \cite{DinurM11} transformed it into an length-efficient PCP with a large gap. Agreement tests on subspaces played a role in the proof of the  2-to-1 theorem \cite{DKKMS,KMS}. Although in this setup the set system at hand did not have \eqref{eq:CLA}, a weaker guarantee sufficed to make these tests useful in the 2:2 inner verifier.
Agreement tests on high dimensional expanders were studied first by \cite{DinurK2017}. Work by \cite{GotlibK2022} suggested a connection between list agreement and coboundary expansion. The works \cite{BafnaM2023} and \cite{DiksteinD2023agr} independently showed that swap coboundary expanding HDXs are good for agreement tests as in \pref{thm:agrcover}.

\paragraph*{Coboundary expansion}Coboundary expansion was defined by Linial, Meshulam and Wallach \cite{LinialM2006}, \cite{MeshulamW09}, and indpendently by Gromov \cite{Gromov2010}. Kaufman, Kazhdan and Lubotzky \cite{KaufmanKL2014} introduced a local to global argument for proving cocycle expansion of $1$-cochains in the \emph{bounded-degree} complexes of \cite{LubotzkySV2005a,LubotzkySV2005b}, solving a \(2\)-dimensional case of the Gromov topological overlap problem. This was generalized later by Evra and Kaufman \cite{EvraK2016} to cocycle expansion in all dimensions, thus solving the problem in all dimensions. Following ideas that implicitly appeared in Gromov's work, Lubotzky Mozes and Meshulam analyzed the coboundary expansion spherical buildings \cite{LubotzkyMM2016}, using what was later known as the cone method. Works by \cite{KozlovM2019} and \cite{KaufmanO2021} abstracted this notion of cones. Techniques for lower bounding coboundary expansion were further developed in \cite{DiksteinD2023cbdry} and \cite{DiksteinD2023swap}. 
Dinur and Meshulam observed the connection between cocycle expansion and cover-stability. Later on, this connection was used by \cite{GotlibK2022} to analyze the problem of list-agreement on coboundary expanders. Work by Chapman and Lubotzky implemented a version of cocycle expansion to group stability \cite{ChapmanL2023stability}.
Another work by \cite{ChapmanL2023} constructed bounded degree coboundary expanders over \(\mathbb{F}_2\), by using a similar construction to the one presented in this work.

\paragraph*{Buildings and their quotients}The theory of buildings is extremely important for understanding the structure of classical groups. They were introduced by Jacques Tits \cite{Tits1974}. For more about buildings see \cite{AbramenkoB2008} or \cite{Weiss2008} and references therein. In the early 00's a lot of work was done to construct quotients of the \(SL_n\)-Bruhat Tits building to get Ramanujan complexes such as \cite{Ballantine2000}, \cite{CartwrightSZ2003}, \cite{Li2004}, \cite{LubotzkySV2005a} and \cite{LubotzkySV2005b}. After the work of \cite{KaufmanKL2014}, these complexes caught the attention of the TCS community due to their local properties. The local spectral expansion of these objects was put to a good use in the seminal work of Garland \cite{Garland1973}. This was further used in \cite{KaufmanM2017high} and \cite{EvraK2016}.
The notion of local spectral expansion we use today due to \cite{DinurK2017}, was tailored to suit these complexes.

\medskip

Bafna, Lifshitz and Minzer concurrently proved a theorem similar to \pref{thm:main} \cite{BafnaLM2024}. Their construction also uses quotients of \(\tilde{C}_g\) and relies on a theorem similar to \pref{thm:agrcover} (which appeared in independent previous work \cite{BafnaM2023}, concurrent to \cite{DiksteinD2023agr}) and on swap cocycle expansion of the quotients of \(\tilde{C}_g\).
 
\subsection{Open Questions}
We constructed a family of bounded degree complexes with sound agreement tests in the \(1\%\)-regime. These tests sometimes appear as a combinatorial gadget in PCP constructions. However, there is still no construction of a PCP that uses high dimensional expanders. It seems that constructing such a PCP may improve the tradeoff between set size, degree and soundness, possibly leading to an advancement on the famous sliding scale conjecture (see e.g. \cite{MR10}). However to do so, one must first improve upon the somewhat modest parameters in \pref{thm:agrcover}, and figure out a way to embed hard constraint satisfaction problems into high dimensional expanders.

Most prior work on bounded degree high dimensional expanders use quotients of \(SL_n\)-buildings (or other constructions such as the one suggested by \cite{KaufmanO181}). The complexes we construct, coming from a different infinite object \(\tilde{C}_g\), have similar expansion properties but a different link structure. Are there other desirable properties of these complexes, that quotients of \(\tilde{A}_n\) do not possess? For instance, can other buildings provide a better degree-to-expansion tradeoff?

Finally, we mention that this work showed cocycle expansion, and got coboundary expansion with respect to small groups (it is not a priori clear, but this is an equivalent statement to saying that the complex has no small covers). However, we do not know whether there exists bounded degree local spectral expanders \(X\) that are coboundary expanders with respect to \emph{all possible groups}. In particular, \(X\) is simply connected \cite{Surowski1984}. Current constructions of bounded degree local spectral expanders are all quotients of some infinite object by some group acting on the object. Any such complex is not a coboundary expander with respect to the acting group. Hence such a construction is an interesting open problem in this area. As an intermediate step, could one construct a complex that is a coboundary expander with respect to all finite groups?

\subsection{Organization}
\pref{sec:preliminaries} contains the necessary preliminaries for the paper. We describe the buildings we work with in the paper in \pref{sec:buildings}. \pref{thm:main} is proven in \pref{sec:mainproof}, and follows directly from \pref{cor:v-z-soundness}. 
\pref{sec:no-small-covers} contains the new construction of expanders with no small covers, proving  \pref{thm:CLhighdim}. We prove the swap cocycle expansion theorem, \pref{thm:swap-coboundary-expansion}, in \pref{sec:expansion-of-symplectic-building} where we also bound the coboundary expansion and the local spectral expansion of buildings of type \(C\). 

\subsection*{Acknowledgements}
We thank Alan Reid and Andrei Rapinchuk for helpful discussions. We thank Gil Melnik for his help with the figures.

\section{Preliminaries} \label{sec:preliminaries}
Most of this section follows definitions used in previous works \cite{DiksteinD2023agr}, \cite{DiksteinD2023cbdry} and \cite{DiksteinD2023swap}.

A pure \(d\)-dimensional simplicial complex \(X\) is a hypergraph that consists of an arbitrary collection of sets of size \((d+1)\) together with all their subsets. The $i$-faces are sets of size \(i+1\) in \(X\), denoted by \(X(i)\). 

A simplicial action of a group \(\Gamma\) on a complex \(X\) is a homomorphism \(\phi: \Gamma \to Aut(X)\). Sometimes instead of writing \(\phi(\gamma)(v)\) we write \(\gamma . v\) or even \(\gamma v\) for short.

We denote the set of {\em oriented} \(k\)-faces in \(X\) by \(\dir{X}(k) = \sett{(v_0,v_1,...,v_k)}{\set{v_0,v_1,...,v_k} \in X(k)}\).
We denote by \(diam(X)\) the {\em diameter} of the graph underlying $X$.
A {\em \((d+1)\)-partite} \(d\)-dimensional simplicial complex is a complex \(X\) such that one can decompose \(X(0) = A_0 \dunion A_1 \dunion \dots \dunion A_d\) such that for every \(s \in X(d)\) and \(0\leq i \leq d\) it holds that \(\abs{s \cap A_i} = 1\). The \emph{color} of a vertex is $col(v)=i$ such that \(v \in A_i\). More generally, the color of a face \(s\) is \(c = col(s) = \sett{col(v)}{v \in s}\). We denote by \(X[c]\) the set of faces of color \(c\) in \(X\), and for a singleton \(\set{i}\) we sometimes write \(X[i]\) instead of \(X[\set{i}]\).
We also denote by $X^I$, for $I\subset \set{0,\ldots ,d}$, the complex induced on vertices whose colors are in $I$.

\begin{definition}[Join of complexes]
    Let \(S_i\) be an \(\ell_i\) dimensional complex, for $i=1,...,k$. Let \(n=(\sum_{i=1}^k \ell_i )+k-1\). The join \(\S = \bigvee S_i\) is the \(n\)-dimensional complex whose faces are all the \(s_1 \dunion s_2 \dunion \dots s_k\) so that \(s_i \in S_i\). The distribution over top-level faces is to (independently) choose \(s_i \sim S_i(\ell_i)\) and output \(s_1 \dunion s_2 \dunion \dots\dunion s_k\).
\end{definition}
Observe that if each \(S_i\) is $(\ell_i+1)$-partite, then \(S\) is $(n+1)$-partite. Moreover, if we restrict \(S\) to a set $I$ of colors so that every two distinct colors \(j_1,j_2 \in I\) come from different complexes \(S_{i_1}\) and $S_{i_2}$, then \(S^I\) is a complete $(|I|+1)$-partite complex.

\begin{definition}[local spectral expander]
    Let \(X\) be a \(d\)-dimensional simplicial complex and let \(\lambda \in (0,1)\). We say that \(X\) is a \emph{\(\lambda\)-one sided local spectral expander} if for every \(s \in X^{\leq d-2}\) it holds that \(\lambda(X_s) \leq \lambda\). We say that \(X\) is a \emph{\(\lambda\)-two sided local spectral expander} if for every \(s \in X^{\leq d-2}\) it holds that \(\abs{\lambda}(X_s) \leq \lambda\).

    Here \(\lambda(X_s)\) is the normalized second largest eigenvalue of the adjacency operator of the graph \(X_s^{\leq 1}\), and  \(\abs{\lambda}(X_s)\) is the second largest eigenvalue in absolute value.
\end{definition}
We stress that this definition includes \(s= \emptyset\), which also implies that the graph \(X^{\leq 1}\) should have a small second largest eigenvalue.

We will use the `trickle down' theorem \cite{Oppenheim2018}. 
\begin{theorem}[\cite{Oppenheim2018}] \label{thm:trickle-down}
    Let \(\lambda,\tau \geq 0\). Let \(X\) be a connected simplicial complex and assume that for any vertex \(v \in X(0)\) it holds that the non-trivial eigenvalues of \(X_v\) are between \([-\tau,\lambda]\). Then the non-trivial eigenvalues of \(X\) are between  \(\left [ \frac{\tau}{1+\tau},\frac{\lambda}{1-\lambda} \right ]\).
    In particular, \(X\) is a \(\frac{\lambda}{1-\lambda}\)-one or two sided high dimensional expander (respectively).
\end{theorem}
As a corollary of reiterating this theorem, one gets the following.
\begin{corollary} \label{cor:skel-two-sided-hdx}
    Let \(X\) be a \(d\)-dimensional \(\lambda\)-one sided local spectral expander. Then \(X^{\leq k}\) is a \(\max \set{\lambda, \frac{1}{d-k+1}}\)-local spectral expander
\end{corollary}

\subsubsection*{Walks on local spectral expanders}
Let \(k,\ell,d\) be integers such that \(\ell+k\leq d-1\). The \(k,\ell\)-swap walk \(S_{k,\ell}=S_{k,\ell}(X)\) is the bipartite graph whose vertices are \(L=X(k), R=X(\ell)\) and whose edges are all \((t,s)\) such that \(t \dunion s \in X\). The probability of choosing such an edge is the probability of choosing \(u \in X(k+\ell+1)\) and then uniformly at random partitioning it to \(u=t\dunion s\). This walk has been defined and studied independently by \cite{DiksteinD2019} and by \cite{AlevFT2019}, who bounded its spectral expansion.

\begin{theorem}[\cite{DiksteinD2019,AlevFT2019}]
    Let \(X\) be a \(\lambda\)-two sided local spectral expander. Then the second largest eigenvalue of \(S_{k,\ell}(X)\) is upper bounded by \(\lambda(S_{k,\ell}(X)) \leq (k+1)(\ell+1)\lambda\).
\end{theorem}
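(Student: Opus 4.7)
The plan is to prove this spectral bound by induction on \(k+\ell\). For the base case \(k=\ell=0\), the swap walk \(S_{0,0}(X)\) is exactly the underlying graph \(X^{\leq 1}\), whose normalized second eigenvalue in absolute value is at most \(\lambda\) by the two-sided local spectral expansion of \(X\) at the empty face; this matches the bound \((0+1)(0+1)\lambda\).

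For the inductive step, assume without loss of generality that \(\ell \geq 1\) and that the claimed bound holds for all strictly smaller values of \(k+\ell\). The key observation is a recursive realization of the swap walk through vertex links: sampling an edge \((s,t)\) of \(S_{k,\ell}(X)\) is equivalent, after averaging over a distinguished vertex \(v \in t\), to first sampling \(v\) from the weighted vertex distribution on \(X(0)\) and then sampling \((s,t')\) as an edge of the swap walk \(S_{k,\ell-1}(X_v)\) inside the link, setting \(t=t'\cup\{v\}\). Tracking weights, this gives the identity
\[
    \langle f,\,S_{k,\ell}(X)\,g\rangle \;=\; \mathbb{E}_{v\sim X(0)}\bigl[\langle f_v,\,S_{k,\ell-1}(X_v)\,g_v\rangle_{X_v}\bigr],
\]
where \(f_v\) is the restriction of \(f\) to \(X_v(k)\) and \(g_v(t'):=g(t'\cup\{v\})\).

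Since each link \(X_v\) is itself a \(\lambda\)-two-sided local spectral expander, the inductive hypothesis yields \(\lambda(S_{k,\ell-1}(X_v))\leq (k+1)\ell\,\lambda\). To deduce the target bound, I would orthogonally decompose \(g = g^{\parallel} + g^{\perp}\), where \(g^{\parallel}\) is the "pulled back from vertices" component (a function of the vertex-averages \(v \mapsto \mathbb{E}_{t \ni v}[g(t)]\)) and \(g^{\perp}\) has mean zero in every link. The inductive bound applies pointwise to \(g^{\perp}_v\) and integrates to \((k+1)\ell\,\lambda\). The \(g^{\parallel}\) component reduces the inner product to a walk on vertex functions, which is controlled by the spectral gap of \(X^{\leq 1}\) and a combinatorial factor \(k+1\) arising from the containment walk \(G_{k,0}\) bounded by Theorem~\ref{thm:eignevalues-of-walk}; this contributes an additional \((k+1)\lambda\). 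Summing the two contributions yields \((k+1)\ell\,\lambda+(k+1)\lambda = (k+1)(\ell+1)\lambda\).

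The main obstacle I expect is the bookkeeping of the orthogonal decomposition: the restriction/pullback operators relating \(L^2(X(\ell))\) and \(L^2(X_v(\ell-1))\) are not isometries, and one must verify that the \(v\)-marginal on the right of the identity really matches the projection of the \((k+\ell+1)\)-face distribution under the "distinguish a vertex of \(t\)" map. The delicate step is ensuring that the cross term \(\langle f_v, S_{k,\ell-1}(X_v)\,g^{\parallel}_v\rangle\), after integration over \(v\), can be absorbed into the \((k+1)\lambda\) contribution from \(g^{\parallel}\) rather than producing an extra lower-order term that would spoil the tight \((k+1)(\ell+1)\lambda\) bound. This factor arises naturally as the product of the vertex-to-face multiplicities on each side, which is exactly the slack one must pay when decomposing a \((k+1)\)- and \((\ell+1)\)-face into their constituent vertices.
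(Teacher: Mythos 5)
First, a framing remark: the paper does not prove this theorem at all --- it is quoted as a known result of \cite{DiksteinD2019,AlevFT2019} --- so I can only judge your sketch on its own terms; its link-induction strategy is indeed the standard route of those references. Several of your steps are sound: the base case, the identity \(\iprod{f, S_{k,\ell}g} = \E_{v}\bigl[\iprod{f_v, S_{k,\ell-1}(X_v)g_v}_{X_v}\bigr]\) (the marginal of a uniform vertex of the \(\ell\)-side is exactly the level-\(0\) measure, and conditioned on \(v\) the pair is an edge of the swap walk in \(X_v\)), and the treatment of the link-orthogonal part: if one takes the exact orthogonal decomposition \(C(X(\ell))=\Img U_{0\to\ell}\oplus \ker D_{\ell\to 0}\) (so \(g^{\parallel}=U h\) for some vertex function \(h\), which is \emph{not} in general the pullback of the vertex averages \(Dg\)), then \(g^{\perp}_v\) has zero mean in every link, \(\E_v\|f_v\|^2=\|f\|^2\), \(\E_v\|g^{\perp}_v\|^2=\|g^{\perp}\|^2\), and Cauchy--Schwarz plus the inductive hypothesis in each link gives the contribution \((k+1)\ell\lambda\,\|f\|\,\|g^{\perp}\|\).

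The genuine gap is the accounting for \(g^{\parallel}\). Two problems. (i) The operator you need there is the composition \(S_{k,\ell}\circ U_{0\to\ell}\), which one checks equals the swap walk \(S_{k,0}\) between \(X(k)\) and \(X(0)\); its second singular value is at most \((k+1)\lambda\) by your own induction. The containment walk \(G_{k,0}\) and \pref{thm:eignevalues-of-walk} are the wrong tool: \(\lambda(G_{k,0})\approx \frac{1}{k+1}+O(k\lambda)\) is a constant, not \(O(\lambda)\), and would destroy the bound. (ii) More seriously, this route gives \((k+1)\lambda\,\|f\|\,\|h\|\), and you must convert \(\|h\|\) into \(\|g^{\parallel}\|=\|Uh\|\). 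The up operator is far from an isometry on mean-zero functions: already in the complete complex \(\|U_{0\to\ell}h\|^2\approx \|h\|^2/(\ell+1)\), so the parallel part really contributes about \((k+1)\sqrt{\ell+1}\,\lambda\,\|g^{\parallel}\|\), not the ``additional \((k+1)\lambda\)'' in your additive accounting; and \((k+1)\ell\lambda+(k+1)\sqrt{\ell+1}\,\lambda\) exceeds \((k+1)(\ell+1)\lambda\). The bound can be salvaged --- combine the two contributions by Cauchy--Schwarz against \(\|g^{\parallel}\|^2+\|g^{\perp}\|^2=\|g\|^2\), using \(\ell^2+(\ell+1)\leq(\ell+1)^2\) --- but this additionally requires a quantitative lower bound of the form \(\|U_{0\to\ell}h\|^2\geq \frac{1-O(\ell^2\lambda)}{\ell+1}\|h\|^2\) on a \(\lambda\)-two-sided expander, and the resulting defect must be absorbed into the slack. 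None of this appears in your sketch (your stated worry about a cross term \(\iprod{f_v,S_{k,\ell-1}(X_v)g^{\parallel}_v}\) disappears once \(g^{\parallel}\) is handled globally via \(S_{k,\ell}U=S_{k,0}\); the norm transfer is the real obstruction), so as written the induction does not close with the claimed constant.
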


For a \(d\)-partite complex and two disjoint sets of colors \(J_1,J_2 \subseteq [d]\) one can also define the \emph{colored swap walk} \(S_{J_1,J_2}\) as the bipartite graph whose vertices are \(L=X[J_1],R=X[J_2]\). and whose edges are all \((s,t)\) such that \(t \dunion s \in X[J_1 \dunion J_2]\). The probability of choosing this edge is \(\Prob[{X[J_1 \dunion J_2]}]{t \dunion s}\).

\begin{theorem}[\cite{DiksteinD2019}]
    Let \(X\) be a \(d\)-partite \(\lambda\)-one sided local spectral expander. Then the second largest eigenvalue of \(S_{J_1,J_2}(X)\) is upper bounded by \(\lambda(S_{J_1,J_2}(X)) \leq |J_1|\cdot |J_2|\cdot \lambda\).
\end{theorem}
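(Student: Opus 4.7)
The plan is to prove this spectral bound by induction on $|J_1| + |J_2|$, peeling off one color at a time.

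\textbf{Base case:} When $|J_1| = |J_2| = 1$, say $J_1 = \{i\}$ and $J_2 = \{j\}$, the colored swap walk $S_{J_1,J_2}$ is just the bipartite graph induced on $X[i] \sqcup X[j]$ with edge weights inherited from $X[\{i,j\}] \subseteq X(1)$. One shows its second singular value is at most $\lambda$ by lifting a test pair $(f,g)$ with $f: X[i] \to \R$, $g: X[j] \to \R$, both mean-zero, to a function on $X(0)$ supported on $A_i \cup A_j$. By the partite structure, this lifted function is orthogonal to the trivial eigenvectors of the multipartite graph $X^{\leq 1}$ (which live on the spans of color-class indicators), and the assumption $\lambda(X_\emptyset) \leq \lambda$ yields the required bound.

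\textbf{Inductive step:} Assume the bound for all smaller color sets, and without loss of generality let $|J_1| \geq 2$. Pick a color $c \in J_1$ and set $J_1' = J_1 \setminus \{c\}$. Sampling an edge of $S_{J_1,J_2}$ factors as: sample $v \in X[c]$ according to its marginal, then sample an edge of the swap walk $S_{J_1',J_2}(X_v)$ in the link $X_v$. Given mean-zero $f: X[J_1] \to \R$ and $g: X[J_2] \to \R$, I would decompose $f = f_{\text{top}} + f_{\text{link}}$ where $f_{\text{top}}(s)$ depends only on the color-$c$ vertex of $s$ (equivalently, is the conditional expectation of $f$ given that vertex) and $f_{\text{link}}$ has conditional mean zero given each $v \in X[c]$. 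This decomposition is orthogonal, so $\langle f, S_{J_1,J_2} g \rangle$ splits into two pieces. The link piece is bounded via the inductive hypothesis applied inside each $X_v$, giving $(|J_1|-1)|J_2|\,\lambda(X_v) \leq (|J_1|-1)|J_2|\lambda$, since links of $d$-partite one-sided LSEs are themselves partite one-sided LSEs with the same parameter. The top piece reduces to bounding $\langle \tilde f, S_{\{c\},J_2} g\rangle$ for $\tilde f$ mean-zero on $X[c]$, which by symmetric induction on $|J_2|$ is at most $|J_2|\cdot\lambda$. Adding via the triangle inequality gives $|J_1|\cdot|J_2|\cdot\lambda$.

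\textbf{Main obstacle:} The delicate point is ensuring that the decomposition of $f$ into ``top'' and ``link'' parts is genuinely orthogonal under the inner product induced by the marginal of $X[J_1]$, and that the marginal distributions on the links $X_v$ match those under which the partite local spectral expansion is assumed. One also needs to justify that the $|J_1|=1$ sub-base case (namely $\lambda(S_{\{c\},J_2}) \leq |J_2|\lambda$) can itself be handled by induction on $|J_2|$ using the same peeling argument—this completes the recursion and avoids circularity. A secondary issue is that the theorem assumes only one-sided expansion; one must check that the singular-value arguments above only ever require upper bounds on the second eigenvalue of the link graphs, not on $|\lambda|$, which is the case throughout the above decomposition.
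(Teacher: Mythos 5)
First, a point of calibration: the paper does not prove this statement at all — it is imported verbatim from \cite{DiksteinD2019} — so your proposal can only be compared with the argument in that source. Your inductive skeleton (peel a color $c$ off $J_1$, condition on the color-$c$ vertex, apply the hypothesis inside the links $X_v$, and let the errors add up to $|J_1|\cdot|J_2|\cdot\lambda$) is indeed the same high-level route taken there, and the observation that only one-sided bounds on links are ever needed is correct in spirit.

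The genuine gap is your base case. Lifting a mean-zero pair $(f,g)$ on $X[i]\cup X[j]$ to a function $h$ on $X(0)$ and invoking $\lambda(X^{\leq 1})\leq\lambda$ cannot give $\lambda(S_{\{i\},\{j\}})\leq\lambda$: in the Rayleigh quotient of $h$, the numerator $\langle h,Ah\rangle$ only sees edges whose color pair is $\{i,j\}$, which carry a $1/\binom{d+1}{2}$ fraction of the edge mass, while $\|h\|^2$ carries a $2/(d+1)$ fraction of the vertex mass; optimizing the relative scaling of $f$ and $g$, the one-sided bound $\langle h,Ah\rangle\leq\lambda\|h\|^2$ (applied to $h=f\pm g$) yields only $\lambda(S_{\{i\},\{j\}})\leq \tfrac{d}{2}\lambda$, a loss linear in the number of colors. (Also, with one-sided expansion the only thing you may use is orthogonality to constants; ``mean-zero on each part'' does not place $h$ in a better-controlled eigenspace.) The single-color-pair bound $\lambda(S_{\{i\},\{j\}})\leq\lambda$ is a genuinely partite-specific fact whose proof in the cited work goes through a localization/induction-on-dimension argument (expressing the bipartite graph between $X[i]$ and $X[j]$ as an average of the corresponding bipartite graphs in links of vertices of the remaining colors, and exploiting that bipartite spectra are symmetric so one-sided link expansion suffices); it is not a consequence of the global second-eigenvalue bound. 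A secondary, fixable issue in your inductive step: after writing $f=f_{\mathrm{top}}+f_{\mathrm{link}}$ you apply the inductive hypothesis in each link to the pair $(f_{\mathrm{link}},g)$, but $g$ is only globally mean-zero, not conditionally mean-zero in $X_v$, so you must also decompose $g$ (or phrase the step as an operator identity, as the original proof does) before the link bound is legitimate. With a correct base case and that bookkeeping, the rest of your plan goes through.
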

We note that this theorem also make sense even when \(J_1 = \set{i}, J_2 = \set{i'}\), and the walk is between \(X[i]\) and \(X[i']\) that are subsets of the vertices.

We mention that subsequent work proved \cite{GurLL22} a tighter bound on the spectral expansion of this walk.

\subsubsection*{Coboundary and Cocycle Expansion}
For a more thorough introduction, we refer the reader to \cite{DiksteinD2023cbdry}.

Let \(X\) be a \(d\)-dimensional simplicial complex for \(d \geq 2\) and let \(\Gamma\) be any group. For \(i=-1,0\) let 
\(C^i(X,\Gamma) = \set{f:X(i) \to \Gamma}\). We sometimes identify \(C^{-1}(X,\Gamma) \cong \Gamma\). For \(i=1,2\) let
\[C^1(X,\Gamma) = \sett{f:\dir{X}(1) \to \Gamma}{f(u,v)=f(v,u)^{-1}}\]
and
\[C^2(X,\Gamma) = \sett{f:\dir{X}(i) \to \Gamma}{\forall \pi \in Sym(3), (v_0,v_1,v_2) \in \dir{X}(2) \; f(v_{\pi(0)},v_{\pi(1)},v_{\pi(2)}) = f(v_0,v_1,v_2)^{\sign(\pi)}}.\]
be the spaces of so-called {\em anti symmetric} functions on edges and triangles. For \(i=-1,0,1\) we define the coboundary operators \(\coboundary_i : C^i(X,\Gamma) \to C^{i+1}(X,\Gamma)\) by
\begin{enumerate}
    \item \(\coboundary_{-1}:C^{-1}(X,\Gamma)\to C^{0}(X,\Gamma)\) is \(\coboundary_{-1} h (v) = h(\emptyset)\).
    \item \(\coboundary_{0}:C^{0}(X,\Gamma)\to C^{1}(X,\Gamma)\) is \(\coboundary_{0} h (v,u) = h(v)h(u)^{-1}\).
    \item \(\coboundary_{1}:C^{1}(X,\Gamma)\to C^{2}(X,\Gamma)\) is \(\coboundary_{1} h (v,u,w) = h(v,u)h(u,w)h(w,v)\).
\end{enumerate}
Let \(1 = 1_i \in C^i(X,\Gamma)\) be the constant function that always outputs the identity element. It is easy to check that \(\coboundary_{i+1} \circ \coboundary_i h \equiv 1_{i+2}\) for \(i=-1,0\) and \(h \in C^{i}(X,\Gamma)\). Thus we denote by
\[Z^i(X,\Gamma) = \ker \coboundary_{i} \subseteq C^i(X,\Gamma),\]
\[B^i(X,\Gamma) = \Img \coboundary_{i-1} \subseteq C^i(X,\Gamma),\]
and have that \(B^i(X,\Gamma) \subseteq Z^i(X,\Gamma)\). 

Henceforth, when the dimension $i$ of the cochain $f$ is clear from the context we denote $\coboundary_i f$ by $\coboundary f$.

Let \(f,g \in C^i(X,\Gamma)\). Then
\begin{equation} \label{eq:def-of-dist}
    \dist(f,g) = \Prob[s \in \dir{X}(i)]{f(s) \ne g(s)}.
\end{equation}
We also denote the weight of the function \(\wt(f) = \dist(f,1)\).
\begin{definition}[Cocycle expansion] \label{def:def-of-cosyst-exp}
    Let \(X\) be a \(d\)-dimensional simplicial complex for \(d \geq 2\). Let \(\beta >0\). We say that \(X\) is a \(\beta\)-cocycle expander if for every group \(\Gamma\), and every \(f \in C^1(X,\Gamma)\) there exists some \(g \in Z^1(X,\Gamma)\) such that
    \begin{equation} \label{eq:def-of-cosyst-exp}
        \beta \dist(f,g) \leq \wt(\coboundary f).
    \end{equation}
    In this case we denote \(h^1(X) \geq \beta\).
\end{definition}
We note that in previous works this definition was referred to as \emph{cosystolic} expansion. 
\begin{definition}[Coboundary expansion] \label{def:def-of-cob-exp}
    Let \(X\) be a \(d\)-dimensional simplicial complex for \(d \geq 2\). Let \(\beta >0\). We say that \(X\) is a \(\beta\)-coboundary expander if it is a \(\beta\)-cocycle expander and in addition \(Z^1(X,\Gamma) = B^1(X,\Gamma)\) for every group \(\Gamma\).
\end{definition}

\subsubsection*{Cones}
The cone method, appearing in \cite{Gromov2010}, was used in many previous works for bounding coboundary expansion (see e.g. \cite{LubotzkyMM2016}, \cite{KozlovM2019}, \cite{KaufmanO2020}, \cite{DiksteinD2023cbdry}). The following adaptation to non-abelian cones is due to \cite{DiksteinD2023swap}. We follow similar notation and definitions.

Fix \(X\), a simplicial complex and some \(v_0 \in X(0)\). We define two symmetric relations on loops around \(v_0\):
\begin{enumerate}
    \item[(BT)]~ We say that \(P_0 \overset{(BT)}{\sim} P_1\) if \(P_i = Q_0 \circ (u,v,u) \circ Q_1\) and \(P_{1-i} = Q_0 \circ (u) \circ Q_1\) for \(i=0,1\) (i.e. going from \(u\) to \(v\) and then backtracking is trivial).
    \item[(TR)]~ We say that \(P_0 \overset{(TR)}{\sim} P_1\) if \(P_{i} = Q_0 \circ (u,v) \circ Q_1\) and \(P_{1-i} = Q_0 \circ (u,w,v) \circ Q_1\) for some triangle \(uvw \in X(2)\) and \(i=0,1\).
\end{enumerate}

Let \(\sim\) be the smallest equivalence relation that contains the above relations (i.e. the transitive closure of two relations).

We denote by \(P \sim_1 P'\) if there is a sequence of loops \((P_0=P,P_1,...,P_m=P')\) and \(j \in [m-1]\) such that:
\begin{enumerate}
    \item \(P_j \overset{(TR)}{\sim} P_{j+1}\) and
    \item For every \(j' \ne j\), \(P_{j'} \overset{(BT)}{\sim} P_{j'+1}\).
\end{enumerate}
I.e. we can get from \(P\) to \(P'\) by a sequence of equivalences, where exactly one equivalence is by \((TR)\).

Let \(P = (u_0,u_1,...,u_m)\) be a walk in \(X\). We denote by $P^{-1}$ the walk $(u_m,\ldots,u_1,u_0)$.

\begin{definition}[Decoding cone]
    A decoding cone is a triple \(C=(v_0,\set{P_u}_{u \in X(0)}, \set{T_{uw}}_{uw \in X(1)})\) such that
\begin{enumerate}
    \item \(v_0 \in X(0)\).
    \item For every \(v_0 \ne u \in X(0)\) \(P_{u}\) is a walk from \(v_0\) to \(u\). For \(u = v_0\), we take \(P_{v_0}\) to be the loop with no edges from \(v_0\).
    \item For every \(uw \in X(1)\), \(T_{uw}\) is a sequence of loops \((P_0,P_1,...,P_m)\) such that:
    \begin{enumerate}
        \item \(P_0 = P_u \circ (u,w) \circ P_w^{-1}\), 
        \item For every \(i=0,1,...,m-1\), \(P_i \sim_1 P_{i+1}\) and
        \item \(P_m\) is equivalent to the trivial loop by a sequence of \((BT)\) relations.
    \end{enumerate}
    We call \(T_{uw}\) a \emph{contraction}, and we denote $|T_{uw}| = m$.
\end{enumerate}
\end{definition}
The definition of \(T_{uw}\) depends on the direction of the edge \(uw\). We take as a convention that \(T_{wu}\) has the sequence of loops $(P_0^{-1},P_1^{-1},\ldots,P_m^{-1})$, and notice that $P_0^{-1} = (P_u \circ (u,w) \circ P_w^{-1})^{-1} = P_w \circ (w,u) \circ P_u^{-1}$. Thus for each edge it is enough to define one of \(T_{uw},T_{wu}\). The diameter of the cone is
\[diam(C) = \max_{uw \in \dir{X}(1)} \abs{T_{uw}}.\]
Equivalently, this is the maximal number of triangle relations required in the contraction of any \(T_{uw}\).

There is a direct connection between the diameter of the cone to coboundary expansion.
\begin{lemma} \label{lem:group-and-cones}
    Let \(X\) be a simplicial complex such that \(Aut(X)\) is transitive on \(k\)-faces. Suppose that there exists a cone \(C\) with diameter \(R\). Then \(X\) is a \(\frac{1}{\binom{k+1}{3}\cdot R}\)-coboundary expander.
\end{lemma}

\subsubsection*{Coboundary expansion of joins}
We record the following claim on the coboundary expansion of joins of complexes.
\begin{claim} \label{claim:cob-exp-of-join}
    Let \(Z = A_1 \vee A_2\) be a join of two complexes \(A_1,A_2\) of dimensions \(d_1,d_2\) respectively. Assume that there is a group that acts on \(Z\) so that the action on \(Z(d_1+d_2+1)\) is transitive. Then there exists a constant \(\beta = \beta(d_1,d_2)\) such that \(h^1(Z) \geq \frac{\beta}{diam(A_1)}\).
\end{claim}

\begin{proof}[Proof of \pref{claim:cob-exp-of-join}]
    By \pref{lem:group-and-cones} it is enough to show that there is a cone whose diameter in \(O(diam(A_1))\). We construct the cone as follows. Fix \(v_1^{*} \in A_1\) and \(v_2^{*} \in A_2\). Our base of the cone is \(v_1^{*}\). For every \(u \in A_2\) we take the path \(P_u=(v_1,u)\), for every \(u \in A_1\) we take the path \(P_u=(v_1,v_2,u)\).

    Now we consider an edge \(u_1 u_2 \in Z\). If \(u_1 u_2 \in A_2\) then the cycle \(C_0=C_{u_1 u_2} = P_{u_1} \circ (u_1,u_2) \circ P_{u_2}^{-1}=(v_1,u_1,u_2,v_1)\) is a triangle in \(Z\) so we can contract it in one step to \(C_1=(v_1,u_1,v_1)\) which contracts to the trivial loop using only backtrack relations. Similarly, if \(u_1 u_2 \in A_1\) then the loop we need to contract is \(C_0=C_{u_1 u_2} = (v_1,v_2,u_1,u_2,v_2,v_1)\) which can also be contracted using a single triangle \(v_2 u_1 u_2 \in Z\) only.

    \begin{figure}
        \centering
        \includegraphics[scale=0.3]{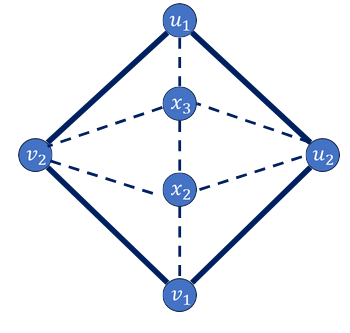}
        \caption{Contraction of the interesting case}
        \label{fig:contraction-in-join}
    \end{figure}

    The interesting case is if (say) \(u_1 \in A_1\) and \(u_2 \in A_2\) so \(C_0 = (v_1,u_2,u_1,v_2,v_1)\). In this case we take some shortest path in \(A_1\) from \(v_1\) to \(u_1\), which we denote \(Q = (v_1=x_1,x_2,x_3,\dots,x_m=u_1)\), where \(m \leq diam(A_1)\). From here we recommend to view \pref{fig:contraction-in-join} that illustrates the contraction. We define for \(i=1,2,\dots m-1\) \(C_{i}=(v_1,x_2,x_3,\dots,x_i,u_2,u_1,v_2,v_1)\), where we go from \(C_i\) to \(C_{i+1}\) via the triangle \(x_i x_{i+1} u_2 \in Z(2)\). And similarly we define \(C_m = (v_1,x_2,\dots,x_m,v_2,v_1)\) (where again we use \(x_{m-1} u_2 x_m \in Z(2)\)). Similarly, we define \(C_{m+1},C_{m+2},\dots,C_{2m}\) where 
    \(C_{m+i} = (v_1,x_2,x_3,\dots,x_{m-i},v_2,v_1)\) where we use the triangle \(x_{m-i+1} v_2 x_{m_i}\) to go from \(C_{m+i}\) to \(C_{m+i+1}\). We end up with \(C_{2m}=(v_1,v_2,v_1)\) which contracts to the trivial loop with a backtracking relation. Finally, note that we use \(2m \leq 2diam(A_1)\) so the claim follows.
\end{proof}

\subsection{Additional coboundary expansion machinery}
In this subsection we survey some of the machinery we need on coboundary expansion from previous works that we need in order to prove \pref{thm:coboundary-expansion}.
\subsubsection{Local to global coboundary expansion}
The following lemma is a local-to-global theorem that deduces cocycle expansion of a complex from coboundary expansion of the links. The first to show such a theorem were \cite{KaufmanKL2014} (for \(1\)-cochains) and \cite{EvraK2016} (for arbitrary \(i\)-cochains). We use a quantitatively stronger version, due to \cite{DiksteinD2023swap}.

\begin{lemma}[{\cite{DiksteinD2023swap}}]\label{lem:trick}
    Let \(X\) be a \(\exp(-O(i+1))\)-high dimensional expander so that every non-empty link in \(X\) is simply connected. Then
        \[h^1(X^J) \geq exp(-O(i))\cdot \min_{s\in X(i)} h^1(X_s^J). \]
\end{lemma}
\subsubsection{Colored complexes}
We also use the following color restriction lemma, that deduces coboundary expansion of a partite complex, from coboundary expansion of its color restrictions.
\begin{theorem}[{\cite{DiksteinD2023cbdry}}] \label{thm:coboundary-expansion-from-colors}
    Let \(\ell, d\) be integers so that \(3\leq \ell \leq d\) and let \(\beta,p, \lambda \in (0,1]\). Let \(\Gamma\) be some group. Let \(X\) be a \(d\)-partite simplicial complex so that
\[\Prob[F \in \binom{[d]}{\ell}]{X^F \text{ is a \(\beta\)-coboundary expander} \ve \forall s \in X(0) \; X^F_s \text{ is a \(\lambda\)-spectral expander}} \geq p.\]
Then \(X\) is a coboundary expander with \(h^{1}(X) \geq \frac{p (1-\lambda) \beta}{6e}\). Here \(e \approx 2.71\) is Euler's number.
\end{theorem}
We note that \pref{thm:coboundary-expansion-from-colors} is proven in \cite{DiksteinD2023cbdry} assuming that the spectral expansion of the graph is \(1-\beta\). This assumption is not needed in the proof; following the same steps with a separate parameter \(\lambda\) gives us a bound of \(h^{1}(X) \geq \frac{p (1-\lambda) \beta}{6e}\).
\begin{claim}[Color Swap] \label{claim:color-swap}
    For every \(\ell \geq 4\) there is a universal constant \(c_\ell > 0\) so that the following holds. Let \(I,I' \in \binom{[n]}{\ell}\) be two sets of size \(\ell\) such that their symmetric difference \(I \Delta I' = \set{i,i'}\) where \(i \in I, i' \in I'\). Let \(X\) be a \(n\)-partite \(\lambda\)-local spectral expander for \(\lambda < \frac{1}{100}\).
    \(h^1(X^I) \geq c_{\ell} h^1(X^{I'}) \min_{v \in {X[i']}}h^1(X_v^I).\)
\end{claim}

We denote by \(K_{n_1,n_2,...,n_m}\) the complete partite complex with \(n_i\) vertices on every side.

\begin{claim} \label{claim:triangle-complex}
    Let \(m \geq 5\). Let \(X\) be a \(m\)-partite simplicial complex, such that \(h^1(X) \geq \beta\). Assume that the colored swap walk between vertices to triangles is an \(\eta\)-spectral expander. Then \(Y = X \otimes K_{n_1,n_2,...,n_m}\) is a coboundary expander and \(h^1(Y) \geq (1-O(\eta)) \exp(-O(\ell)) \beta\) where \(\ell = \Abs{\sett{i \in [m]}{n_i > 1}}\).
\end{claim}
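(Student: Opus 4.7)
The plan is to reduce swap cocycle expansion of \(\S\) to cosystolic expansion of the faces complex \(\FS[d]\), and then apply the local-to-global machinery combined with coboundary expansion of individual spherical buildings. By \pref{def:swap}, the task is to show that \(\FS[d]\) is an \(\exp(-O(\sqrt d))\)-cosystolic expander for \(1\)-cochains. The ambient complex \(\S\) is a local spectral expander by \pref{thm:symplectic-local-spectral-expansion} (links of the affine building are joins of spherical buildings, each of which is \(O(1/\sqrt p)\)-expanding for \(p\) large). By \pref{claim:link-of-a-faces-complex}, links in \(\FS[d]\) are faces complexes \(F^d(\S_{\cup s})\) of links of \(\S\); by \pref{cor:links-of-affine-symplectic-building}, \(\S_{\cup s}\) is a join of at most \(O(d)\) SL-spherical and symplectic spherical buildings. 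Applying \pref{thm:cosystolic-expansion-from-link-coboundary-expansion} to \(\FS[d]\) then reduces the task to showing that for every such \(s\), the complex \(F^d(\S_{\cup s})\) is a coboundary expander with constant \(\exp(-O(\sqrt d))\).

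The core of the argument is coboundary expansion of the faces complex of a single symplectic spherical building (the SL-case being available from \cite{DiksteinD2023swap}). First I would establish that \(C = C_g(\mathbb{F}_p)\) itself is a coboundary expander. The group \(Sp(2g, \mathbb{F}_p)\) acts simplicially and transitively on maximal isotropic flags (by Witt's theorem), so \pref{lem:group-and-cones} reduces this to constructing a cone of diameter \(O(g)\) at a reference flag. Second, I would pass to color restrictions \(C^F\) of size \(\ell = \Theta(\sqrt d)\) by iteratively applying \pref{claim:color-swap} using \pref{claim:complex-with-one-free-side} as the base case, obtaining \(h^1(C^F) \geq \exp(-O(\sqrt d))\). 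Third, I would aggregate these via \pref{thm:coboundary-expansion-from-colors} to obtain coboundary expansion of \(C\) and, more importantly, of all color restrictions relevant to the faces complex.

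To lift coboundary expansion of color restrictions to coboundary expansion of the faces complex \(F^d(\S_{\cup s})\), one observes that \(F^d\) of a join decomposes combinatorially into a tensor-type product for which \pref{claim:triangle-complex} applies: it converts coboundary expansion of the base into coboundary expansion of a tensor with a complete multipartite complex, losing only an \(\exp(-O(\sqrt d))\) factor since the vertex-to-triangle colored swap walk is \(O(1/\sqrt p)\)-expanding. Combined with \pref{lem:trick}, the coboundary expansion then propagates through the recursive link structure. The hypothesis \(g \geq d^5\) guarantees enough ``room'' for all the induced color restrictions and swaps to remain valid inside the building, and the threshold \(p > p_0(d)\) keeps the one-sided spectral constants below the tolerances of \pref{thm:coboundary-expansion-from-colors}, \pref{claim:color-swap}, and \pref{claim:triangle-complex}.

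The main obstacle is constructing an explicit cone of diameter \(O(g)\) in the symplectic spherical building. In the SL-case one uses a ``span-and-intersect'' construction to travel between any two flags in \(O(g)\) steps, but the isotropic constraint obstructs this because spans of isotropic subspaces need not be isotropic. I expect the right construction is to fix a reference symplectic basis \(e_1, f_1, \ldots, e_g, f_g\) and, for an arbitrary isotropic subspace \(v\), to produce a cone path inductively by extending the canonical partial subspace \(v \cap \mathrm{span}(e_1, f_1, \ldots, e_k, f_k)\) to an isotropic subspace contained in \(v + \mathrm{span}(e_1, f_1, \ldots, e_{k+1}, f_{k+1})\); Witt's extension theorem furnishes a canonical choice of such extension at each step, and iterating gives a cone of diameter \(O(g)\). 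Once this cone is in place, the rest of the argument follows the SL-case template from \cite{DiksteinD2023swap} essentially verbatim, with the join structure of links handled by \pref{claim:triangle-complex} and \pref{lem:trick}.
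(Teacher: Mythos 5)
There is a genuine gap: your proposal does not prove the statement it was asked to prove. The statement is \pref{claim:triangle-complex}, a self-contained combinatorial assertion about the tensor product \(Y = X \otimes K_{n_1,\dots,n_m}\): given \(h^1(X)\geq\beta\) and \(\eta\)-expansion of the colored vertex-to-triangle swap walk, one must show \(Y\) is a coboundary expander with \(h^1(Y)\geq (1-O(\eta))\exp(-O(\ell))\beta\). What you wrote instead is an outline of the proof of \pref{thm:swap-coboundary-expansion} (via \pref{thm:coboundary-expansion}): reducing swap cocycle expansion of the symplectic building to cosystolic expansion of the faces complex, building a cone in \(C_g(\mathbb{F}_p)\), and propagating through links. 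In that outline you invoke \pref{claim:triangle-complex} as a black box (``the tensor-type product for which \pref{claim:triangle-complex} applies''), so relative to the assigned statement your argument is circular: nothing in the proposal engages with the tensor structure of \(Y\), with how a \(1\)-cochain on \(Y\) is compared to cochains on copies of \(X\), with where the \(\eta\)-expansion of the swap walk enters, or with why the loss is \(\exp(-O(\ell))\) in the number \(\ell\) of parts having \(n_i>1\).

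For calibration: in this paper \pref{claim:triangle-complex} is not proved either; it is quoted from \cite{DiksteinD2023swap} as one of three general coboundary-expansion claims. A genuine proof would have to work inside \(Y\) itself, for instance by choosing a representative vertex in each complete part \(K_{n_i}\) with \(n_i>1\) to obtain an embedded copy of (a relabeled) \(X\), correcting an arbitrary cochain on \(Y\) toward its values on that copy one coordinate at a time (each of the \(\ell\) nontrivial parts contributing a constant-factor loss, hence \(\exp(-O(\ell))\)), and using the \(\eta\)-expansion of the colored vertex-to-triangle swap walk to show that the triangle-weight of the coboundary seen by the embedded copy is comparable to that in \(Y\), which is where the \((1-O(\eta))\) factor arises; coboundary expansion of \(X\) then finishes. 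None of these steps appears in your proposal, so the statement remains unproved.
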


\subsubsection*{The faces complex}
\begin{definition} \label{def:face-complex}
    Let \(X\) be a \(d\)-dimensional simplicial complex. Let \(r \leq d\). We denote by \( \FX \) the simplicial complex whose vertices are \(\FX(0)=X(r)\) and whose faces are all \(\sett{\set{s_0,s_1,...,s_j}}{s_0\dunion s_1 \dunion \dots \dunion s_j \in X((j+1)(r+1)-1)}\). 
\end{definition}
It is easy to verify that this complex is \(\left ( \lfloor \frac{d+1}{r+1} \rfloor - 1 \right )\)-dimensional and that if \(X\) is a clique complex then so is \(\FX\). 
The distribution on the top-level faces of \(\FX\) is given by the following. Let \(m = \left ( \lfloor \frac{d+1}{r+1} \rfloor - 1 \right )\)
\begin{enumerate}
    \item Sample a \(d\)-face \(t=\set{v_0,v_1,\dots,v_d} \in X(d)\).
    \item Sample \(s_0,s_1,\dots,s_m \subseteq t\) such that $|s_i|=r+1$, \(s_i \cap s_j = \emptyset\) and output \(\set{s_0,s_1,\dots,s_m}\).
\end{enumerate}
\begin{definition}[Swap coboundary expansion]\label{def:swap}
    A simplicial complex $X$ is said to have $(\beta,r)$-{\em swap coboundary (cocycle) expansion} if $\FX[r]$ is a $\beta$ coboundary (cocycle) expander for $1$-cochains.
\end{definition}

It is convenient to view the faces complex as a subcomplex of the following complex.
\begin{definition}[Generalized faces complex]
    Let $X$ be a simplicial complex. The generalized faces complex, denoted $FX$, has a vertex for every $w\in X$, and a face $s=\set{w_0,\ldots,w_i}\in FX$ iff $w_i\cap w_j=\phi$ and $\dunion s := w_0\dunion w_1\dunion\cdots\dunion w_i\in X$.
\end{definition}
This complex is not pure so we do not define a measure over it.
One can readily verify that links of the faces complex correspond to faces complexes of links in the original complex. That is,
\begin{claim} \label{claim:link-of-a-faces-complex}
    Let $s\in FX$. Then $FX_s = F(X_{\cup s})$ where \(\cup s = \bigcup_{t \in s}t\). The same holds for \(\FX[r]_s=F^r(X_{\cup s})\). \(\qed\)
\end{claim}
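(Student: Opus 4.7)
The plan is to prove the claim by directly unpacking the definitions on both sides and checking that the defining conditions for faces coincide term by term. First, I would describe $FX_s$ explicitly: a collection $t'$ of subsets of $X(0)$ belongs to $FX_s$ iff $t'$ is disjoint from $s$ (as vertices of $FX$) and $t' \cup s$ is a face of $FX$. By the definition of the generalized faces complex, the latter condition unpacks to saying that the members of $t' \cup s$ are pairwise disjoint as subsets of $X(0)$ and that $(\cup t') \dunion (\cup s) \in X$. In particular, every $w \in t'$ is automatically disjoint from $\cup s$.

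Next I would unpack the right-hand side. The vertices of $F(X_{\cup s})$ are the faces of the link $X_{\cup s}$, namely those $w \subseteq X(0) \setminus (\cup s)$ with $w \dunion (\cup s) \in X$. A face of $F(X_{\cup s})$ is then a collection $t'$ of such vertices that are pairwise disjoint and whose disjoint union $\cup t'$ itself lies in $X_{\cup s}$; this last condition simplifies to $(\cup t') \cap (\cup s) = \emptyset$ together with $(\cup t') \dunion (\cup s) \in X$. Comparing with the conditions obtained for $FX_s$, these match exactly, giving the equality $FX_s = F(X_{\cup s})$ as simplicial complexes.

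For the pure version $\FX[r]_s = F^r(X_{\cup s})$, the only additional constraint on both sides is that every vertex $w$ must satisfy $|w| = r+1$. This size restriction is imposed identically in the two descriptions, so the correspondence above restricts to give the claimed equality. The entire argument is a definition chase and I do not foresee any genuine obstacle; the only subtlety worth flagging is the identification of ``disjointness from $s$ inside $FX$'' with ``emptiness of intersection with $\cup s$'' inside $X_{\cup s}$, but this is immediate from the disjoint-union convention built into the definition of $FX$.
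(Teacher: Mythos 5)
The paper gives no proof for this claim — it is marked \(\qed\) immediately after the statement, indicating the authors regard it as an immediate consequence of the definitions. Your definition chase is precisely the argument that justifies it, and it is correct: unpacking both $FX_s$ and $F(X_{\cup s})$ reduces each to the conditions ``the $w_i$'s are pairwise disjoint subsets of $X(0)$, each disjoint from $\cup s$, and $(\dunion_i w_i) \dunion (\cup s) \in X$,'' and these conditions match. Your flagged subtlety — that ``$t'$ disjoint from $s$ inside $FX$'' follows from the vertex-level disjointness built into the $\dunion$ convention — is the one genuinely worth a sentence, and you handle it correctly; the restriction to $\FX[r]$ is then just the added constraint $|w|=r+1$ on both sides, as you say.
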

We are therefore justified to look at generalized links of the form $FX_{\cup s}$,
\begin{definition}[Generalized Links]
    Let $w\in X$. We denote by $FX_w  = F(X_{ w})$. We also denote by $\FX[r]_w = \FX[r] \cap FX_w$. Note that this is not necessarily a proper link of $\FX$.
\end{definition}

\subsubsection*{Colors of a faces complex}    

\begin{definition}[Simplicial homomorphism]
Let $X,Y$ be two simplicial complexes. 
A map $\varphi:X\to Y$ is called a simplicial homomorphism if $\varphi:X(0)\to Y(0)$ is onto and for every $s=\set{v_0,\ldots,v_i}\in X(i)$, $\varphi(s) = \set{\varphi(v_0),\ldots,\varphi(v_i)}\in Y(i)$. 
\end{definition}

\begin{claim}
Let $\varphi:X\to Y$ be a simplicial homomorphism. Then there is a natural homomorphism $\varphi:FX\to FY$ given by 
$\varphi(\set{s_0,\ldots,s_i})=\set{\varphi(s_0),\ldots,\varphi(s_i)}$.  
\end{claim}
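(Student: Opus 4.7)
The claim is essentially a routine verification, so the plan is to unpack the definition of $FX$ and check the two things a simplicial homomorphism must satisfy: that $\varphi$ on vertices of $FX$ lands in vertices of $FY$, and that it sends $i$-faces of $FX$ to $i$-faces of $FY$. The first is immediate from the hypothesis: a vertex of $FX$ is by definition an element $s\in X$ (a face of the original complex), and since $\varphi$ is a simplicial homomorphism of $X$ into $Y$, we have $\varphi(s)\in Y$, which is a vertex of $FY$.

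The substantive step is to verify face preservation. Suppose $\set{s_0,\ldots,s_i}\in FX(i)$; by definition of $FX$ this means that the $s_j$'s are pairwise disjoint faces of $X$ and their union $\dunion s := s_0\dunion\cdots\dunion s_i$ is again a face of $X$. The plan is to run a simple counting argument. Since $\varphi$ sends $k$-faces to $k$-faces for every $k$, applying it to $\dunion s$ (which has dimension $\sum_j\abs{s_j}-1$) produces a face $\varphi(\dunion s)\in Y$ of the same dimension, i.e.\ $\abs{\varphi(\dunion s)} = \sum_j\abs{s_j}$. On the other hand, $\varphi(\dunion s) = \bigcup_j \varphi(s_j)$ set-theoretically, and applying the dimension-preserving property to each individual $s_j$ gives $\abs{\varphi(s_j)}=\abs{s_j}$. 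Combining, $\abs{\bigcup_j \varphi(s_j)} = \sum_j\abs{\varphi(s_j)}$, which forces the sets $\varphi(s_0),\ldots,\varphi(s_i)$ to be pairwise disjoint. Hence their disjoint union equals $\varphi(\dunion s)\in Y$, which is exactly the condition for $\set{\varphi(s_0),\ldots,\varphi(s_i)}$ to be an $i$-face of $FY$.

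The main (and really only) obstacle is that one must use the full strength of the simplicial homomorphism hypothesis, namely that $\varphi$ is \emph{dimension-preserving} on faces; otherwise disjoint faces of $X$ could collapse onto overlapping images in $Y$ and the map on face complexes would not be well-defined. Once this observation is in place, the proof is a one-line counting argument. If one also wishes to check surjectivity of the induced map on vertex sets of the face complexes (to match the earlier definition of simplicial homomorphism), one notes that for each $t\in Y$ the surjectivity of $\varphi$ on $X(0)$ together with dimension preservation lifts $t$ to a face of $X$ of the same size in the usual covering/quotient settings in which this claim is applied; in the generality stated, however, the content of the claim is the face-preservation just verified.
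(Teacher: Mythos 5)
Your proof is correct and follows essentially the same route as the paper's: observe that $\dunion s \in X$, map it via $\varphi$, and identify $\varphi(\dunion s)$ with the union of the $\varphi(s_j)$'s to conclude $\set{\varphi(s_0),\ldots,\varphi(s_i)}\in FY$. The one place where the paper is terse is the claim that this union is \emph{disjoint}; the paper states it as a parenthetical (``whenever $a\dunion b\in X$, $\varphi(a\dunion b)=\varphi(a)\dunion\varphi(b)$'') without justification, while you supply the clean cardinality argument ($\abs{\varphi(\dunion s)} = \abs{\dunion s} = \sum_j\abs{s_j} = \sum_j\abs{\varphi(s_j)}$, forcing pairwise disjointness). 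So your write-up is the same proof with the key step made explicit, which is a genuine improvement in rigor over the paper's version. Your closing remark about surjectivity on vertex sets is a reasonable caveat but, as you note, isn't the content being checked here.
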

\begin{proof}
    Suppose $s=\set{s_0,\ldots,s_i}\in FX(i)$. By definition this means that $\dunion s \in X$ so $\varphi(\dunion s) \in Y$. But $\varphi(\dunion s) = \varphi(s_0\dunion \cdots\dunion s_i)=\varphi(s_0)\dunion \cdots\dunion \varphi(s_i)$ (because for a simplicial homomorphism $\varphi:X\to Y$ whenever $a\dunion b\in X$, $\varphi(a\dunion b) = \varphi(a)
    \dunion\varphi(b)\in Y$). Thus $\set{\varphi(s_0),\ldots,\varphi(s_i)}\in Y$.
\end{proof}

Let $Y = \Delta_n$ be the complete complex on $n$ vertices. Observe that $X$ is $n$-partite if and only if there is a homomorphism $col:X\to \Delta_n$. 
We say that a complex is $n$ colorable if its underlying graph is $n$ colorable, namely one can partition the vertices into $n$ color sets such that every edge crosses between colors.
\begin{claim}
    Let $X$ be an $n$-colorable complex. Then $\FX$ is $\binom{n}{r+1}$-colorable.
\end{claim}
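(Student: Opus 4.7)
The plan is to lift the given $n$-coloring of $X(0)$ to a coloring of $\FX(0)=X(r)$ by mapping each $r$-face to its color \emph{set}, and then check that adjacent vertices in $\FX$ receive different color sets. Let $col\colon X(0)\to[n]$ be the given $n$-coloring, i.e.\ for every edge $\{u,v\}\in X(1)$ we have $col(u)\ne col(v)$. Define
\[
  \chi\colon \FX(0)\longrightarrow \binom{[n]}{r+1},\qquad \chi(s)=\set{col(v)\mid v\in s}.
\]

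First I would verify that $\chi$ indeed lands in $\binom{[n]}{r+1}$, i.e.\ that $|\chi(s)|=r+1$ for every $s\in X(r)$. Since $X$ is a simplicial complex, every pair $\{u,v\}\subseteq s$ is an edge of $X$, and the $n$-coloring hypothesis forces $col(u)\ne col(v)$. Hence the map $col$ is injective on $s$, giving $|\chi(s)|=|s|=r+1$.

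Next I would check that $\chi$ is a proper coloring of the underlying graph of $\FX$. Let $\{s,s'\}\in \FX(1)$; by definition of $\FX$ this means $s\dunion s'\in X((r+1)\cdot 2-1)=X(2r+1)$. In particular $s\cap s'=\emptyset$ and every pair of vertices in $s\cup s'$ forms an edge of $X$, so again by the coloring hypothesis $col$ is injective on $s\cup s'$. Therefore $\chi(s)\cap\chi(s')=\emptyset$, which certainly implies $\chi(s)\ne\chi(s')$, so adjacent vertices of $\FX$ receive distinct color sets.

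There is essentially no obstacle here; the only thing one must be careful about is that $n$-colorability is defined via the $1$-skeleton, but because $X$ is a simplicial complex, any face of $X$ has all its pairs as edges, so the injectivity of $col$ automatically propagates to all faces, which is all we needed above.
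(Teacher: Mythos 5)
Your proof is correct, and it is the natural unpacking of the paper's (implicit) argument: the paper states the claim without proof, but the surrounding text sets up the homomorphism $col\colon X\to\Delta_n$ and the induced map $F\varphi\colon FX\to FY$, so the intended proof is exactly that a proper $n$-coloring of $X^{\leq 1}$ gives a simplicial map $X\to\Delta_n$, hence a map $\FX\to\FD$ whose vertex set is $\binom{[n]}{r+1}$. Your direct verification — that $col$ is injective on every face because all pairs inside a face are edges, and that two $\FX$-adjacent $r$-faces have disjoint color sets because their disjoint union is again a face — is the same idea spelled out, and it is complete and correct.
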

We denote the set of colors of $\FX$ by $\C= \FD(0)$ (supressing $n$ from the notation). This is the set of all subsets of $[n]$ of size $r+1$.

Fix a set $J\in F\Delta$, namely $J=\set{c_1,\ldots,c_m}$ and $c_j\subset [n]$ are pairwise disjoint. Let $\FX[J] = \sett{s\in FX}{col(s)\subseteq J}$ be the sub-complex of $FX$ whose vertex colors are in $J$, so $\FX[J](0) = \bigcup_{j=1}^m X[c_j]$. 
We abuse notation in this section allowing multiple \(c_j\)'s to be empty sets. In this case \(X[c_j]\) are copies of \(\set{\emptyset}\), and every such copy of $\emptyset$ participates in every top level face of \(\FX[J]\).
The measure induced on the top level faces of \(\FX[J]\) is the one obtained by sampling \(t \in X[\cup J]\) and partitioning it to \(t= s_1 \dunion s_2 \dunion \dots \dunion s_m\) such that \(s_i \in X[c_i]\).

Finally, throughout the paper we use the following notation. Let \(J',J \subseteq \FD[]\) We write $J'\leq J$, if \(J = \set{c_1,c_2,\dots,c_m}\) and $J' = \set{c'_1,\ldots,c'_m}$ where $c'_j\subseteq c_j$.

\subsubsection*{Coboundary expansion of the faces complex}
The following two statements will be used in \pref{sec:proof-of-faces-complex-lower-bound} to prove lower bounds on swap coboundary expansion of the quotients of \(\tilde{C}_g\) defined below.
\begin{lemma}[Color restriction for faces complex] \label{lem:colorest}
    Let \(X\) be an \(n\)-partite complex for \(n \geq \d_1^5\). Let \(m \in [3,n^{0.5}/\d_1]\) and let \(\mathcal{J} \subseteq \FD{\d_1}(m)\) be set of relative size \(p\). Assume that for every \(J \in \mathcal{J}\), \(h^1(F^{\d_1}X^J) \geq \beta\) then \(h^1(\FX[\d_1]) \geq \Omega(\beta p^2)\).
\end{lemma}
The following proposition bounds the (color-restricted) faces complex using bounds on (color-restrictions of) its links. 
\begin{proposition} \label{prop:colored-exponential-decay-bound}
    Let \(X\) be a \(n\)-partite complex that is a \(\lambda\)-local spectral expander for \(\lambda \leq \frac{1}{2r^2}\). Let \(\ell \geq 5\) and let  \(J=\set{c_1,c_2, \dots ,c_\ell}\) be a set of mutually disjoint colors \(c_j \subseteq [n]\), $\card {c_j}\leq r$. Denote by \(R=\sum_{j=1}^\ell |c_j|\). Let \(\beta > 0\) and assume that for every \(I=\set{i_1,i_2,\dots,i_\ell}\) such that \(i_j \in c_j\) and every \(w \in X^{\cup J \setminus I}\), \(h^1(X_w^I) \geq \beta\). Then \(h^1(\FX[J]) \geq \beta_1^{R}\) for \(\beta_1 = \Omega_{\ell}(\beta)\). 
\end{proposition}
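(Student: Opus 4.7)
The plan is to prove this by induction on $R = \sum_{j=1}^\ell |c_j|$, with $\ell$ kept fixed. In the base case $R = \ell$, each $c_j = \set{i_j}$ is a singleton and $\FX[J]$ is canonically identified with the colour restriction $X^I$ for $I = \set{i_1, \ldots, i_\ell}$; the hypothesis applied with $w = \emptyset$ gives $h^1(\FX[J]) = h^1(X^I) \geq \beta \geq \beta_1^\ell$ for a suitable $\beta_1 = \Omega_\ell(\beta)$.

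For the inductive step $R > \ell$, assume without loss of generality $|c_1| \geq 2$, pick $i_* \in c_1$, and set $c_1' = c_1 \setminus \set{i_*}$ and $J^{(1)} = \set{c_1', c_2, \ldots, c_\ell}$, which has total size $R-1$. The hypothesis for $J$ directly implies the analogous hypothesis for $J^{(1)}$ (the transversals of $J^{(1)}$ form a subset of those of $J$, and the corresponding faces $w$ avoid colour $i_*$), so by induction $h^1(\FX[J^{(1)}]) \geq \beta_1^{R-1}$. It then suffices to establish the one-step reduction
\[
h^1(\FX[J]) \;\geq\; \beta_1 \cdot h^1(\FX[J^{(1)}]),
\]
whose iteration produces the desired $\beta_1^R$ bound. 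To obtain this reduction I would pass through the $(\ell+1)$-partite refinement $\tilde J = \set{c_1', \set{i_*}, c_2, \ldots, c_\ell}$ of total size $R$. The argument splits into two steps. First, transfer coboundary expansion between $\FX[J]$ and $\FX[\tilde J]$: their top-level faces are in bijection (each corresponds to a top-level face of $X[\cup J]$), and $1$-cocycles can be passed between the two while controlling the error via the coloured swap walk between the $c_1'$- and $\set{i_*}$-parts of $\FX[\tilde J]$. The spectral gap of this walk is at most $|c_1'| \cdot \lambda \leq 1/(2r)$ thanks to the hypothesis $\lambda \leq 1/(2r^2)$, so this step loses only a constant factor $\Omega_\ell(1)$. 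Second, eliminate the singleton $\set{i_*}$-colour from $\FX[\tilde J]$ via a faces-complex analogue of the colour-swap argument in the spirit of \pref{claim:color-swap}, reducing to $\FX[J^{(1)}]$; the vertex-link coboundary expansion that enters the colour-swap bound is $\min_{u \in X[i_*]} h^1\bigl(X_{w(u)}^{I}\bigr) \geq \beta$, which is precisely what the hypothesis of the proposition supplies for an appropriate transversal $I$ and face $w(u)$.

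The main obstacle is the first step, namely the $\FX[J] \leftrightarrow \FX[\tilde J]$ transfer. The two complexes have genuinely different vertex sets -- a $c_1$-coloured vertex in $\FX[J]$ is an entire face of $X[c_1]$, whereas in $\FX[\tilde J]$ it splits into a pair of vertices of colours $c_1'$ and $\set{i_*}$ -- and different dimensions, so there is no direct simplicial homomorphism between them. The spectral hypothesis $\lambda \leq 1/(2r^2)$ is essential for keeping this transfer stable: it guarantees that the colored swap walk between the $c_1'$- and $\set{i_*}$-parts of $\FX[\tilde J]$ has gap bounded away from $1$, so the transfer loses only a constant factor $\Omega_\ell(1)$ rather than one depending on $r$ or $R$. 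Without such a spectral assumption the inductive argument would not close.
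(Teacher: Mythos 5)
The paper does not actually prove \pref{prop:colored-exponential-decay-bound}; it is stated (together with \pref{prop:improved-generic-lower-bound-for-colors}) as an imported tool from \cite{DiksteinD2023swap}, so there is no in-paper proof to compare against. I will therefore assess your argument on its own terms.

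Your overall induction on $R$ is a natural framing, and the bookkeeping around it is fine: the base case $R=\ell$ correctly identifies $\FX[J]$ with $X^I$ and invokes the hypothesis with $w=\emptyset$; and the observation that the hypothesis for $J$ restricts to a valid hypothesis for $J^{(1)}$ is correct, since transversals of $J^{(1)}$ are transversals of $J$ and $X^{\cup J^{(1)}\setminus I'}\subseteq X^{\cup J\setminus I'}$. So the shape of the induction is sound, and if the one-step reduction $h^1(\FX[J])\geq\beta_1\cdot h^1(\FX[J^{(1)}])$ were available, the exponent and constants would fall out correctly.

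The genuine gap is the one-step reduction itself, which is where the entire mathematical content of the proposition lives, and you have not established it. You split it into two sub-steps, both of which remain unsubstantiated. The first, transferring coboundary expansion between $\FX[J]$ and the refined complex $\FX[\tilde J]$, is not a small technicality: these two complexes have different dimensions ($\ell-1$ versus $\ell$) and incomparable vertex sets, and there is no simplicial homomorphism between them. Sharing top-level faces (both are supported on $X[\cup J]$) does not give any mechanism for moving $1$-cochains back and forth, since the edge sets are genuinely different — an edge between a $c_1$-vertex and a $c_j$-vertex of $\FX[J]$ does not correspond to a single edge of $\FX[\tilde J]$ but to a pair, and conversely an edge $(s_1',s_j)$ of $\FX[\tilde J]$ with $s_1'$ of color $c_1'$ has no canonical lift into $\FX[J]$ (you would need to pick a completion $s_1'\dunion\set{v_*}$, and that choice obstructs the cocycle condition). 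Appealing to the spectral gap of the colored swap walk does not by itself bridge this; you would need a concrete lemma in the spirit of \pref{claim:triangle-complex} tailored to this refinement, and no such lemma is stated or proved. The second sub-step, a ``faces-complex analogue of \pref{claim:color-swap},'' is likewise not available: \pref{claim:color-swap} compares color restrictions $X^I$ and $X^{I'}$ of the same complex with $|I\triangle I'|=2$, whereas here $\FX[J^{(1)}]$ is a color restriction of $\FX[\tilde J]$ obtained by \emph{deleting} the color $\set{i_*}$, which is a different operation (adding/removing a part rather than swapping one index for another) and needs its own argument. As written, the proposal names the obstacles clearly but does not overcome them, so the proof is incomplete at its crux.
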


\subsection{Covers and quotients of simplicial complexes} \label{sec:building-background}

 \begin{definition}[Covering map]
     Let \(Y,X\) be simplicial complexes. We say that a map \(\rho:Y(0) \to X(0)\) is a covering map if the following holds.
     \begin{enumerate}
         \item \(\rho\) is a surjective homomorphism.
         \item For every \(v \in X(0)\), and \((v,i) \in \rho^{-1}(\set{v})\) it holds that \(\rho|_{Y_{(v,i)}}:Y_{(v,i)}(0) \to X_v(0)\) is an isomorphism.
     \end{enumerate}
     We often denote \(\rho:Y\to X\). We say that \(\rho\) is an \(\ell\)-cover if for every \(v \in X(0)\) it holds that \(\Abs{\rho^{-1}(\set{v})}=\ell\). If there exists such a covering map \(\rho:Y\to X\) we say that \(Y\) covers \(X\).
\end{definition}

Covers are intimately connected to the fundamental group. For a thorough definition and discussion see \cite{Surowski1984}.

In this subsection we give a description (and proof sketch) of a general technique for constructing topological spaces with a given fundamental group and other local properties. For this we use deck transformations and quotient maps. For a general and more formal setup the reader is referred to \cite[Section 1.3]{Hatcher2002}.

Recall that for an action of a group \(\Gamma\) on a set \(B\) we denote by \(\Gamma \setminus B\) the set of orbits of the action. We denote by \([v]\) the orbit of a vertex \(v \in B(0)\). 

Let \(B\) be a locally finite, connected and simply connected simplicial complex. Let \(\Gamma\) be a group that acts simplicially on \(B\). We say that the action is \emph{proper} if for every \(v \in B(0)\) and \(\gamma \in \Gamma \setminus \set{1}\), \(\dist(v,\gamma . v) \geq 4\).\footnote{This requirement implies that $v$ and $\gamma.v$ have disjoint neighborhoods. One can weaken this requirement, but we use this stricter definition to make some of the arguments simpler. For more  details, see \cite{Hatcher2002}.}

The quotient of \(B\) by \(\Gamma\) is the following simplicial complex \(X = \Gamma \setminus B\).
\[X = \sett{\set{[v_0],[v_1],\dots,[v_i]}}{\set{v_0,v_1,\dots,v_i} \in B}.\]
We denote the quotient map \(\rho:B \to X\) by \(\rho(\set{v_0,v_1,\dots,v_i}) = \set{[v_0],[v_1],\dots,[v_i]}\). By definition every face \(\tilde{s} \in B\) maps to a face \(s \in X\). 

The properties of the action promise that dimension is maintained, that is, that every \(\tilde{s} \in B(i)\) maps to a face \(s \in X(i)\). Indeed, this follows because every \(v,u \in s\) have \(\dist(u,v) = 1\) so they must be in different orbits. 

\begin{claim} \label{claim:quotient-is-cover}
    Let \(B\) be as above. Then \(\rho:B(0) \to X(0)\) is a covering map.
\end{claim}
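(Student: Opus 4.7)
The plan is to unpack the covering-map definition into its two ingredients and verify each using the properness hypothesis $\dist(v,\gamma.v)\geq 3$ for all $\gamma\neq Id$.

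First I would check that $\rho$ is a well-defined simplicial surjection. Surjectivity on vertices holds by construction. To see that $\rho$ sends $i$-faces to $i$-faces (rather than collapsing them), note that if $\tilde s=\{v_0,\dots,v_i\}\in B(i)$ then every pair $v_j,v_k$ is adjacent in $B$, so $\dist(v_j,v_k)=1<3$. Properness therefore forces $[v_j]\neq[v_k]$, and $\rho(\tilde s)=\{[v_0],\dots,[v_i]\}$ is a genuine $i$-face. That the image is in $X$ holds by the very definition of $X=\Gamma\backslash B$.

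Next, for the local isomorphism on links, fix $\tilde v\in B(0)$ and set $v=[\tilde v]$. I would first prove the vertex map $\rho\colon B_{\tilde v}(0)\to X_{v}(0)$ is a bijection. For injectivity, suppose $\tilde u,\tilde u'\in B_{\tilde v}(0)$ with $[\tilde u]=[\tilde u']$, so $\tilde u'=\gamma.\tilde u$ for some $\gamma\in\Gamma$. Both vertices are neighbors of $\tilde v$, so $\dist(\tilde u,\gamma.\tilde u)\leq 2$; properness then forces $\gamma=Id$, hence $\tilde u=\tilde u'$. For surjectivity, take $[u]\in X_v(0)$, so $\{v,[u]\}\in X(1)$. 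By the definition of the quotient complex there is a lift $\{\tilde w_0,\tilde w_1\}\in B(1)$ with $[\tilde w_0]=v$ and $[\tilde w_1]=[u]$. Write $\tilde w_0=\gamma.\tilde v$; then $\gamma^{-1}.\tilde w_1$ is adjacent to $\tilde v$ in $B$ and projects to $[u]$.

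Finally I would upgrade this to a simplicial isomorphism of links. One direction is trivial: any face $\{\tilde u_0,\dots,\tilde u_i\}\in B_{\tilde v}(i)$ projects to a face $\{[\tilde u_0],\dots,[\tilde u_i]\}\in X_v(i)$ by the quotient construction. Conversely, given $\{[u_0],\dots,[u_i]\}\in X_v(i)$, the face $\{v,[u_0],\dots,[u_i]\}\in X(i+1)$ lifts to some $\{\tilde w,\tilde w_0,\dots,\tilde w_i\}\in B(i+1)$; translating by a suitable $\gamma\in\Gamma$ so that the first vertex becomes $\tilde v$, the remaining vertices land in $B_{\tilde v}(0)$, and by the injectivity above they are forced to equal the unique preimages $\tilde u_j$ of $[u_j]$ in $B_{\tilde v}(0)$. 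Hence the $i$-face $\{\tilde u_0,\dots,\tilde u_i\}$ lives in $B_{\tilde v}(i)$, completing the bijection of faces.

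The only real content beyond bookkeeping is showing the link map is injective, and the main (and only) obstacle there is the numerology of distances: two neighbors of a common vertex are at distance at most $2$, and properness rules out any nontrivial group element with that displacement. Everything else is formal manipulation of the definition of $\Gamma\backslash B$, so I would not expect any further surprises.
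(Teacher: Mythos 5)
Your proof is correct and follows essentially the same route as the paper's: surjectivity of the link map by lifting an edge/face of $X$ and translating by an appropriate $\gamma\in\Gamma$, injectivity via the observation that two neighbors of a common vertex have distance $\leq 2$ which properness forbids, and then the upgrade to a simplicial isomorphism by lifting a full face, translating, and invoking the already-established vertex bijection. The only cosmetic difference is that your final step appeals explicitly to injectivity to identify the translated lift with the canonical preimages, which is a slightly cleaner way to phrase what the paper does.
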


\begin{proof}
    Fix \([v_0] \in B(0)\). We need to show that for every \(v_0 \in [v_0]\), the restriction of \(\rho\) to the link of \(v_0\) is a simplicial isomorphism between \(B_{v_0}\) and \(X_{[v_0]}\). Fix \(v_0 \in [v_0]\) as well. First, we note that indeed \(\rho(B_{v_0}) \subseteq X_{[v_0]}\): for every \(v_1 \in B_{v_0}(0)\), \(\set{v_0,v_1} \in B(1)\) so \(\set{[v_0],[v_1]} \in X(1)\) or equivalently \([v_1] \in X_{[v_0]}(0)\).
    
    Next we show that this is a bijection. Surjectivity is because if \([v_1] \in X_{[v_0]}(0)\) then \(v_1\) is a neighbor of some \(\gamma. v_0\), and in particular, \(\gamma^{-1} .v_1 \in [v_1] \cap B_{v_0}(0)\). Injectivity is due to the distance assumption: two neighbors \(v_1,v_2\) of \(v_0\) have \(\dist(v_0,v_1) \leq 2\) and therefore they must belong to different orbits.

    This is also an isomorphism between the links as complexes: It is clear that every \(\set{v_1,\dots,v_i} \in B_{v_0}\) maps to \(\set{[v_1],\dots,[v_i]} \in X_{[v_0]}\). Let us show for every \(\set{[v_1],\dots,[v_i]} \in X_{[v_0]}\) there exists a set \(\set{v_1,\dots,v_i} \in B_{v_0}\) such that \(\rho(\set{v_1,\dots,v_i}) = \set{[v_1],\dots,[v_i]}\). Indeed, if \(\set{[v_1],\dots,[v_i]} \in X_{[v_0]}\) then there is some \(v_0' \in [v_0]\) and face \(\set{v_0',v_1',\dots,v_i'} \in B(i)\). There is also an element \(\gamma\) sending \(v_0'\) to \(v_0\). Thus by setting \(v_j = \gamma v_j'\) we have that \(\set{v_1,\dots,v_i} \in B_{v_0}\) has that \(\rho(\set{v_1,\dots,v_i}) = \set{[v_1],\dots,[v_i]}\).
\end{proof}

Our agreement theorem requires as a technical property, that the have clique complexes. Let us show that if the action is \emph{proper} then the resulting complexes are clique complexes.
\begin{claim} \label{claim:quotients-are-clique-complexes}
    Let \(B\) be an infinite clique complex and let \(\Gamma\) be a group acting on \(B\). Then \(X = \Gamma \setminus B\) is a clique complex.
\end{claim}

\begin{proof}[Proof of \pref{claim:quotients-are-clique-complexes}]
    Let \(\set{[v_0],[v_1],\dots,[v_i]} \subseteq X(0)\) be a clique. Let \(v_0 \in [v_0]\) be a representative and let \(v_1,v_2,\dots,v_i \in X_{v_0}(0)\) be the representatives of \([v_1],[v_2],\dots,[v_i]\) respectively. We claim that properness implies that \(\set{v_0,v_1,\dots,v_i}\) are a clique in \(B\). If this holds then \(\set{v_0,v_1,\dots,v_i} \in B(i)\) and therefore the set \(\set{[v_0],[v_1],\dots,[v_i]} \in X(i)\). Assume towards contradiction that, without loss of generality, \(\set{v_1,v_2} \notin B(1)\). We note that \(\set{[v_1],[v_2]} \in X(1)\) so there exists \(v_1' \ne v_1\) such that \(\set{v_1',v_2} \in B(1)\) and \(v_1' \in [v_1]\). In this case we observe that \((v_1',v_2,v_0,v_1)\) is a path in \(B\) and therefore there are two distinct vertices of distance \(\dist(v_1,v_1') < 4\) in the same orbit, contradicting properness.
\end{proof}

Recall that covers of a complex are connected to its fundamental group.
\begin{fact} \label{fact:connected-covers-and-subgroups} \label{fact:no-small-index-implies-no-small-covers}
    Let \(X\) be a connected simplicial complex and locally finite simplicial complex. Let \(\pi_1(X,v_0)\) be the fundamental group of \(X\) (with \(v_0 \in X(0)\) an arbitrary vertex). Then \(X\) has a connected \(\ell\)-cover if and only if \(\pi_1(X,v_0)\) has a subgroup of index \(\ell\). 
\end{fact}
The main fact we use is the following.

\begin{theorem}[{\cite[Section 1.3]{Hatcher2002}}] \label{thm:fundamental-group-of-proper-action-quotient}
    Let \(B\) be a simply connected simplicial complex. Let \(\Gamma\) be a group acting simplicially and properly on \(B\). Then the fundamental group of \(\Gamma \setminus B\) is \(\Gamma\).
\end{theorem}

To conclude this subsection, let us the following claim that reduces finding groups that act properly on \(B\), to finding groups that act freely on \(B\).
\begin{claim} \label{claim:eventually-no-close-vertices-in-orbit}
    Let \(B\) be a locally finite simplicial complex. Let \(\Gamma\) be a group acting simplicially and freely on \(B\) such that \(\Gamma \setminus B\) is finite. Let \(\Gamma = \Gamma_1 \geq \Gamma_2 \geq \Gamma_3 \dots\) be a chain of subgroups such that \(\bigcap_{i=1}^\infty \Gamma_i = \set{1}\). Then for every \(r>0\) there exists \(i_0\) such that for every \(i>i_0\), every \(\gamma \in \Gamma_i \setminus \set{1}\) and every \(v \in B(0)\) it holds that \(\dist(v, \gamma . v) >r\).
\end{claim}
For \(r=4\) this shows that we can find a subsequence of groups that act properly on \(B\).

\begin{proof}[Proof of \pref{claim:eventually-no-close-vertices-in-orbit}]
    Fix \(r > 1\). An element \(\gamma \in \Gamma \setminus \set{1}\) is `bad' if there exists some \(v \in B(0)\) such that \(\dist(v,\gamma.v) \leq r\). We will show that there is only a finite number of `bad' elements. If this holds then by the fact that the intersection is trivial, for every `bad' every element \(\gamma\) there is a maximal \(i_\gamma\) such that \(\gamma \in \Gamma_{i_\gamma}\), since it is not in  \(\bigcap_{i=1}^\infty \Gamma_i\). Taking \(i_0 = \max \sett{i_\gamma}{\gamma \text{ is bad}}\) will give the result.
    
    Let \(F \subseteq B(0)\) be a fundamental domain of \(\Gamma\) (namely, a set of representatives of the orbits of \(B(0)\) under $\Gamma$).  If for every \(v \in F\), \(\dist(v,\gamma.v) > r\), then the same holds for every \(v \in B(0)\). Indeed, assume towards contradiction that there exists some \(v \in B(0)\) and \(\gamma \in \Gamma\) such that \(\dist(v, \gamma . v) \leq r\). Let \(\gamma' \in \Gamma\) such that \(\gamma' .v \in F\). As \(\gamma'\) preserves distances, \(\dist(\gamma' . v, \gamma'. \gamma . v) \leq r\). We observe that \(\gamma'. \gamma = \gamma'. \gamma \gamma'^{-1} \gamma'\) so with \(u=\gamma'.v\) we have that \(u \in F\) and for \(\gamma'' = \gamma'. \gamma \gamma'^{-1} \ne 1\), \(\dist(u,\gamma''.u) \leq r\).

    Thus we need to show that there is a finite number of elements \(\gamma \in \Gamma \setminus \set{1}\) such that \(\dist(v,\gamma .v)\leq r\) for some \(v \in F\). By assumption that \(|F|=|\Gamma \setminus B(0)| < \infty\). Denote by \(\tilde{F}\) the union of \(r\)-balls around the vertices in \(F\). It suffices to show that the set of elements of $\Gamma$ that send some \(v \in \tilde{F}\) to \(v' \in \tilde{F}\) is finite. By local finiteness of \(B\), \(\tilde{F}\) is also finite. By freeness of the action, there are at most \(|\tilde{F}|^2\) elements in \(\Gamma\) sending some \(v \in \tilde{F}\) to \(v' \in \tilde{F}\). Otherwise, by the pigeonhole principle there is a pair \((v,v')\) and two distinct \(\gamma_1,\gamma_2\) that send \(\gamma_i . v = v'\) contradicting freeness. 
\end{proof}

\section{Spherical and affine buildings} \label{sec:buildings}
In this section we give a brief survey of the buildings we deal with in this paper. Most of the material in this section is well known, and we bring it here just to introduce the notation and to stay self contained.

\subsection{The SL case (type A)}

\begin{definition}[Spherical building - type A]\label{def:typeA}
We denote by \(A_n=A_n(\mathbb{F}_p)\) the $n$ partite spherical building associated with \(SL_{n+1}(\mathbb{F}_p)\). This is an $(n-1)$-dimensional complex whose vertices are all non-trivial linear subspaces of \(\mathbb{F}_p^{n+1}\). A \(t\)-face in this complex is a set of subspaces \(\set{v_1,\ldots,v_{t}}\) so that \(v_1 \subset v_2 \subset \dots \subset v_{t}\).
\end{definition}

Previous work has shown that this complex is both a local spectral expander, and a (swap) coboundary expander.

\begin{claim}[\cite{EvraK2016}, \cite{DiksteinD2019} for the color restriction] \label{claim:spherical-building-hdxness}
    The spherical building \(A_n\) is a \(O(\frac{1}{\sqrt{p}})\)-one sided local spectral expander. Moreover, \(A_n^{\leq k}\) is a \(\max \set{O(\frac{1}{\sqrt{p}}), \frac{1}{d-k}}\)-two sided local spectral expander. The same holds for \(A_n^J\) for all subsets \(J \subseteq [d]\).
\end{claim}

The following claim is elementary, it is proved in full in \cite{DiksteinD2023swap}.
\begin{claim}[\cite{DiksteinD2023swap}] \label{claim:diam-of-spherical-building}
    Let \(A=A_n\) be an \(SL\)-spherical building. Let \(I \subseteq [n], |I| \geq 2\). Then for every flag \(w\) including the empty flag, \(diam(A_w^I) = O(\frac{\max I}{\max I - \min I})\). In particular, for every \(i_1 > i_0\) in \(I\), 
\(diam(A_w^I) = O(\frac{i_1}{i_1-i_0})\).
\end{claim}
The `in particular' part follows from standard calculus.

We will also need the following bound on color restrictions of the spherical building of type \(A\).
\begin{lemma}[\cite{DiksteinD2023swap}] \label{lem:general-case-subspace-complex-sl}
    Let \(I=\set{i_0<i_1<i_2<i_3}\) such that \(i_3 > 21\) and such that \(i_j - i_{j-1} \geq 3\). Let \(A\) be an \(SL\)-spherical building. Then for every flag \(w\) including the empty flag,
    \[h^1(A_w^I) \geq \exp \left (-O\left ( \log \left (\frac{i_3}{i_1-i_0} \right ) \cdot \log\left (\frac{i_3}{i_1}\right ) \right ) \right ).\]
\end{lemma}

\subsection{The symplectic case (type C)}
In this subsection we use definitions from symplectic geometry, see e.g. \cite[Chapter III]{Artin1957} for an introduction to the subject.
Let \(g \in \N\), and let \(V = \mathbb{F}^{2g}\). For \(x,y \in (\mathbb{F}_p)^g\) we denote \((x,y) \in V\) the vector whose first \(g\)-coordinates are \(x\) and the last \(g\) coordinates are \(y\). Let \(\iprod{\cdot,\cdot}: V \to \mathbb{F}_p\) be the following skew-symmetric bilnear form.
\begin{equation} \label{eq:symplectic-form}
    \iprod{(x,y),(z,w)} = x \cdot w - y \cdot z
\end{equation}
where \(a \cdot b = \sum_{i=1}^g a_i b_i\) is the usual inner product over \(\mathbb{F}_p^g\).

A subspace \(v \subseteq V\) is called \emph{isotropic} if for every \(x_1,x_2 \in v\), \(\iprod{x_1,x_2} = 0\). By Witt's theorem (cf. \cite[Theorem 3.10]{Artin1957})  all maximal isotropic spaces have the same dimension. For this form, the maximal dimension is \(g\), and a maximal isotropic subspace is \(Span (\set{(e_i,0): i=1,2,\dots g})\) where \(e_i \in \mathbb{F}_p^g\) are the standard basis vectors.

\begin{definition}[Spherical building - type C]\label{def:typeC}
    For $g\in\N$, the symplectic spherical building over \(\mathbb{F}_p\) denoted \(C_g\) is the $g$-partite $(g-1)$-dimensional simplicial complex whose  vertices are all non-trivial isotropic subspaces. Its faces are all \emph{flags} of isotropic subspaces. That is \(\set{v_0,v_1,\dots,v_k} \in C(k)\) if \(v_0 \subseteq v_1 \subseteq \dots \subseteq v_k\) and every \(v_i\) is isotropic. By the fact that all maximal isotropic subspaces have the same dimension, it follows that \(C\) is $g$-partite and we denote \(C[i] = \sett{v \in C}{dim(v)=i}\).
\end{definition}
We comment that \(C_1 = A_1\) but the buildings differ for larger \(g >1\).
Let us study the structure of links in the spherical building of type $C$. 
We note that the function \(x \mapsto \iprod{x,\cdot}\) gives an isomorphism between \(V\) and \(V^*\), where $V^*$ is the space of linear forms on \(V\). A bilinear form with this property is called non-degenerate. 
\begin{claim}[{\cite[Theorem 3.5]{Artin1957}}] \label{claim:orthogonal-spaces-in-symplectic-geometry}
    Let \(\iprod{\cdot,\cdot}\) be a non-degenerate bilinear form. Let $v\subset V$ be a linear subspace. \(v^{\bot} = \sett{x \in V}{\forall y \in v, \iprod{x,y}=0}\). Then \(dim(v^{\bot})=dim(V) - dim(v)\) and \((v^{\bot})^{\bot} = v\). \(\qed\)
\end{claim}

Let \(v\) be an isotropic subspace. The fact that \(v\) is isotropic is the same as saying that \(v \subseteq v^{\bot}\). Let \(v' = v^{\bot}/v\) be the quotient space. Let us define the following skew-symmetric bilinear form \(\iprod{\cdot,\cdot}_{v'}:v' \to \mathbb{F}_p\) by
\[\iprod{[x],[x']}_{v'}=\iprod{x,x'}\]
for any two \([x],[x'] \in v'\).

\begin{claim} \label{claim:local-form}
    The form \(\iprod{\cdot,\cdot}_{v'}\) is a well-defined skew-symmetric bilinear form. Moreover, it is non-degenerate.
\end{claim}

\begin{proof}[Proof of \pref{claim:local-form}]
    We need to show that the definition does not depend on choice of representatives. Namely, for every \(x_1,x_2 \in [u_1]\) and \(x'\in [u']\) it holds that \(\iprod{x_1,x'} = \iprod{x_2,x'}\) (we need to show it also for representatives on the right but this just holds from skew-symmetry). Indeed, note that \(x_1-x_2 \in v\) hence \(\iprod{x_1-x_2, x'} = 0\) as \(x' \in v^{\bot}\).

    Bilinearity just follows from the fact that the quotient map is linear. Let us show that this form is non-degenerate. Let \([x] \in v'\) be so that \(\forall [x'] \in v'\) it holds that \(\iprod{[x],[x']}_{v'}=0\). By definition this implies that \(\forall x' \in v^{\bot}\), \(\iprod{x,x'}=0\). Thus \(x \in (v^{\bot})^{\bot}\). By \pref{claim:orthogonal-spaces-in-symplectic-geometry} \((v^{\bot})^{\bot} = v\) so \(x \in v\) or equivalently \([x] = 0\).
\end{proof}

With \pref{claim:local-form} we can understand the structure of the subset of isotropic subspaces that contains a fixed subspace \(v \in C(0)\). 
\begin{proposition} \label{prop:isomorphism-of-symplectic-link}
    Let \(t \leq g-1\). Let \(v \in C_g[t]\). Let \(\rho: v^{\bot} \to v'\) be the quotient map taking \(x \in V\) to \([x]=x+v \in v'\). 
    Then \(\rho|_{v^\bot}\) induces an isomorphism between isotropic subspaces that contain \(v\) with respect to \(\iprod{\cdot,\cdot}\), to isotropic subspaces in \(v'\) with respect to \(\iprod{\cdot,\cdot}_{v'}\). This isomorphism takes subspaces of dimension \(t+i\) to subspaces of dimension \(i\).
\end{proposition}

As a corollary we get a concrete description of the link of \(v\).
\begin{corollary} \label{cor:link-of-symplectic-link}
\begin{enumerate}
    \item Let \(v \in C_g[t]\). Then the link of \(v\) is isomorphic to \(A_{t-1} \vee C_{g-t}\).
    \item In particular, for every set \(I \subseteq \set{t+1,t+2,\dots,g}\), the complex \((C_g)_v^I\) is isomorphic to \(C_{g-t}^I\).
    \item Let \(w =\set{v_0 \leq v_1 \leq \dots \leq v_i} \in C(i)\). Then the link of \(w\) is isomorphic to \(A_{j_0} \vee A_{j_1} \vee \dots \vee A_{j_t} \vee C_{j_{t+1}}\). Here \(j_0 = dim(v_0) - 1\), \(j_{t+1} = g-dim(v_i)\) and for all \(i=1,2,\dots,t\), \(j_i = dim(v_i) - dim(v_{i-1}) - 1\). 
\end{enumerate}
\end{corollary}

\begin{proof}[Proof of \pref{prop:isomorphism-of-symplectic-link}]
    Any isotropic subspace \(u \supseteq v\) is in \(v^\bot\). 
    It is well known by the correspondence theorem that the poset of subspaces of \(v^\bot\) containing \(v\) and the subspaces in \(v'=v^\bot/v\) are isomorphic and that this isomorphism sends subspaces of dimension \(t+i\) to subspaces of dimension \(i\). Thus we need to show that a subspace \(v \subseteq u \subseteq v^\bot\) is isotropic if and only if \(\rho(u)\) is isotropic (with respect to the respective inner products).
    
    Indeed \(u \supseteq v\) is isotropic if and only if for every \(x,y \in u\), \(\iprod{x,y} = 0\). By definition \(\rho(u) = \sett{[x] \in v'}{x \in u}\) and \(\iprod{[x],[y]}_{v'} = \iprod{x,y}\) so \(u\) is isotropic if and only if \(\rho(u)\) is.
\end{proof}

\begin{proof}[Proof of \pref{cor:link-of-symplectic-link}]   
    Let \(I_1 = \set{0,1,\dots,t-1}, I_2 = \set{t+1,t+2,\dots,g}\). We first note that \((C_g)_v = (C_g)_v^{I_1} \vee (C_g)_v^{I_2}\) since choosing a top level face in the link corresponds to choosing a flag contained in \(v\) and (independently) a flag that contains \(v\) and taking the union of the two flags. Clearly any subspace contained in \(v\) is itself isotropic so clearly \((C_g)_v^{I_1} \cong A_{t-1}\). Moreover, by \pref{prop:isomorphism-of-symplectic-link}, \((C_g)_v^{I_2} \cong C_{g-t-1}\). This proves the first and second items.

    The third item follows from the inducting over the first item, using the fact that \((C_g)_w = ((\dots (C_g)_{v_0})_{v_1})\dots)_{v_i}\)
\end{proof}

In \pref{sec:local-spectral-expansion-C} we prove that the building is a local-spectral expander.
\begin{lemma}\label{lem:symplectic-local-spectral-expansion}
    There exists \(c > 1\) such that the following holds for every  \(g\in\N\) and prime power \(p\) such that for every \(w \in C_g\) and \(i,j \notin col(w)\), the bipartite graph \(((C_g)_w[i],(C_g)_w[j])\) is an \(\frac{c}{\sqrt{p}}\)-one sided spectral expander. Moreover, \(C_g\) is a \(\frac{c}{\sqrt{p}}\)-one sided local spectral expander.
\end{lemma}

\subsection{The affine symplectic building}
In this section we define and describe the infinite simplicial complex known as the affine symplectic building \(\tilde{C}_g=\tilde{C}_g(\mathbb{Q}_p)\). This is complex is a close relative of the complex of type \(\tilde{A}_g\) which was used in \cite{LubotzkySV2005b} to construct the well known Ramanujan complexes. It is a well-studied complex, so we only give a brief introduction here. For proofs and a more in depth discussion see \cite{AbramenkoB2008} or \cite{Weiss2008}.

In particular, we will see that this is a \((g+1)\)-partite complex and that \(Sp(2g,\mathbb{Q}_p)\) acts transitively on the top level faces in a color preserving manner. We will also give the following description of its links.
\begin{proposition} \label{prop:links-of-affine-symplectic-building}
    Let \(w \in \tilde{C}_g(i)\). Then the link of \(w\) is a join of at most \(i+2\) spherical buildings of type \(A\) or \(C\).
\end{proposition}
We call complexes with such a property \emph{symplectic-like complexes}.

Let \(p\) be any prime and let \(\mathbb{Z}_p = \sett{\sum_{j=0}^\infty a_j p^j}{a_j \in \set{0,1,\dots,p-1}}\) and \(\mathbb{Q}_p = \mathbb{Z}_p[\frac{1}{p}]\) be the \(p\)-adic integers and \(p\)-adic numbers, respectively.

Let \(V = \mathbb{Q}_p^{2g} = \sp(e_1,e_2,\dots,e_g,f_1,f_2,\dots,f_g)\), and let \(\iprod{\cdot,\cdot}\) be the skew symmetric non-degenerate bilinear form defined by the following relations:
\begin{equation}\label{eq:asymmetric-bilinear-form}
    \begin{cases}
        \iprod{e_i,f_j}=-\iprod{f_j,e_i} = \begin{cases}
    1 & i=j \\ 0 & i \ne j
\end{cases}, \\
        \iprod{e_i,e_j}=\iprod{f_i,f_j}=0
    \end{cases}.
\end{equation} 
The group \(Sp(2g,\mathbb{Q}_p)\) is defined with respect to this bilinear form to be all linear operators preserving the bilinear form.

A \(\mathbb{Z}_p\)-lattice is a set of the form \(L = \sett{\sum_{j=1}^{2g} \alpha_j b_j}{\alpha_j \in \mathbb{Z}_p} = \sp_{\mathbb{Z}_p}(b_1,b_2,\dots,b_{2g})\) where \(\set{b_1,b_2,\dots,b_{2g}}\) is a basis for \(V\) (i.e.\ \(\sp_{\mathbb{Q}_p}(b_1,b_2,\dots,b_{2g}) = V\)). The standard lattice is \(L_{std} = \sp_{\mathbb{Z}_p}(e_1,e_2,\dots,e_{g},f_1,f_2,\dots,f_g)\). A lattice is \emph{primitive} if there exists some \(A \in Sp(2g,\mathbb{Q}_p)\) such that \(AL = L_{std}\). An alternative and equivalent definition for a primitive lattice, is that \(L= \sp_{\mathbb{Z}_p}(e_1',e_2',\dots,e_n',f_1',f_2',\dots,f_n')\) such that \eqref{eq:asymmetric-bilinear-form} holds (for the \(\set{e_i'}\cup \set{f_j'}\) instead of \(\set{e_i}\cup \set{f_j}\) respectively).

Recall that \(\mathbb{Z}_p/p\mathbb{Z}_p \cong \mathbb{F}_p\) by \(\sum_{j=0}^\infty a_j p^j \mapsto a_0\), which we also write as \(z \mapsto z \text{ (mod p)}\). This induces an isomorphism of \(L/pL \cong \mathbb{F}_p^{2g}\) in the natural way. We can endow \(L/pL\) with a bilinear form given by
\begin{equation} \label{eq:primitive-lattice-bilinear-form}
    \iprod{u_1 + pL,u_2 + pL} := \iprod{u_1,u_2} \text{ (mod p)}.    
\end{equation}

One may verify that this does not depend on the choice of representatives and that when \(L\) is primitive then this is a a skew symmetric non-degenerate bilinear form.

We also define an equivalence class over the set of lattices, where \(L \sim L'\) if \(L=p^j L'\) for some \(j \in \mathbb{Z}\). We denote an equivalence class of lattices by \([L]\). We say a lattice class \([L]\) is primitive if there exists a primitive representative \(L \in [L]\).

Obviously the group \(Sp(2g,\mathbb{Q}_p)\) acts on lattices in the natural way. This action respects the lattice class, i.e. \(L \sim L'\) if and only if for every \(A \in Sp(2g,\mathbb{Q}_p)\), \(AL \sim AL'\). Therefore the action of the group is well defined on lattice classes.

We are ready to define \(\tilde{C}_g\). 
\begin{definition}
    The \(g\)-dimensional affine building over \(\mathbb{Q}_p\) is the following simplicial complex. The vertices of \(\tilde{C}_g(0)\) are all equivalence classes of lattices \([L]\) so that:
    \begin{enumerate}
        \item There exists \(L \in [L]\) and some primitive lattice \(L_0\) such that \(pL_0 \subseteq L \subseteq L_0\) and
        \item The subspace \(L/pL_0\) is isotropic inside \(L_0/pL_0\), with the bilinear form defined in \eqref{eq:primitive-lattice-bilinear-form}.
    \end{enumerate}

    The top-level faces are all \(\set{[L_0],[L_1],\dots,[L_g]} \in \tilde{C}_g(g)\) such that there exist representatives \(L_0 \in [L_0],L_1 \in [L_1], \dots, L_g \in [L_g]\) such that \(L_0\) is primitive and 
        \begin{equation} \label{eq:isotropic-flag-Qp}
            pL_0 \subsetneq L_1 \subsetneq L_2 \subsetneq \dots \subsetneq L_g \subsetneq L_0
        \end{equation}
    and all the \(L_i/pL_0\) are isotropic.
\end{definition}

We shall not prove all facts we need on the symplectic building, but in \pref{app:building-proofs} we show a few properties of \(\tilde{C}_g\) to gain some intuition on it. In particular, in \pref{app:building-proofs} we prove the following claim.

\begin{claim} \torestate{\label{claim:basic-properties-of-affine-symplectic-building} ~
    \begin{enumerate}
        \item The group \(Sp(2g,\mathbb{Q}_p)\) acts simplicially on \(\tilde{C}_g\).
        \item The action is transitive on \(\tilde{C}_g(g)\).
        \item The complex \(\tilde{C}_g\) is \((g+1)\)-partite. The color of a lattice class \([L]\) is \(i\) such that \([L]\) is in \(i\)-th place in \eqref{eq:isotropic-flag-Qp}.\footnote{In particular, every vertex is contained in some top-level face and the place of \([L]\) does not depend on a choice of face or representative.}
        \item Let \(v=[L_0]\) be primitive. Then \((\tilde{C}_g)_v \cong C_g\), the \(g\)-partite symplectic spherical building.
    \end{enumerate}}
\end{claim}

With \pref{claim:basic-properties-of-affine-symplectic-building} in hand, we comment that a posteriori we can also define \(\tilde{C}_g[i]\) differently. Let
\[L_i^* = \sp((e_1,e_2,\dots,e_i) \cup (pe_{i+1},pe_{i+2},\dots,pe_g) \cup (pf_1,pf_2,\dots,pf_g)).\]
The group \(Sp(2g,\mathbb{Q}_p)\) acts transitively on \(\tilde{C}_g(g)\), so it also acts transitively on \(\tilde{C}_g[i]\) (see \pref{app:building-proofs}). 

Therefore, \(\tilde{C}_g[i] = Orbit([L_i^*])\). Equivalently, these are all lattice classes \([L_i]\) that have a representative 
\[L_i = \sp(e_1',e_2',\dots,e_i',h_{i+1},h_{i+2},\dots, h_{g},f_1',f_2',\dots,f_g')\]
such that the skew symmetric bilinear form acts the same as in \(L_i^*\) with respect to the above basis. That is,
\[\iprod{e_i,h_j}=\iprod{e_i,e_j}=\iprod{h_i,h_j}=\iprod{f_i,f_j} = 0,\]
\[\iprod{e_i,f_j} = \begin{cases}
    p & i=j \\
    0 & i \ne j
\end{cases},\]
and
\[\iprod{h_i,f_j} = \begin{cases}
    p^2 & i=j \\
    0 & i \ne j
\end{cases}.\]

The link structure of every \([L_i] \in \tilde{C}_g[i]\) is the same because of the transitive action of \(Sp(2g,\mathbb{Q}_p)\). The following fact gives a complete characterization of the links of vertices in this complex.
\begin{fact} \label{fact:links-of-affine-spherical-building}
    Every link of \(v \in \tilde{C}[i]\) is isomorphic to \(C_{i-1}(\mathbb{F}_p) \vee C_{g-i}(\mathbb{F}_p)\).\footnote{For \(i=0\) or \(i=g\) one of the complexes is empty, i.e. this is a single symplectic spherical building of dimension \(g\).}
\end{fact}

This fact together with \pref{cor:link-of-symplectic-link} that shows how links of faces in \(C_{g-i}\) decompose into joins of smaller spherical buildings, immediately imply \pref{prop:links-of-affine-symplectic-building}.

Using \pref{prop:links-of-affine-symplectic-building} we also prove the following. 
\begin{theorem} \label{thm:expansion-of-building-quotients}
    For every \(g \in \N\), a prime \(p\) and a quotient \(X\) of \(\tilde{C}_g(\mathbb{Q}_p)\), the complex \(X\) is a \(O(\frac{1}{\sqrt{p}} )\)-one sided local spectral expander. The skeleton \(X^{\leq k}\) is a \(\max\set{O(\frac{1}{\sqrt{p}}),\frac{1}{d-k+1}}\)-two sided local spectral expander. The same holds true for any symplectic-like complex.
\end{theorem}

We mention that the phenomenon observed in \pref{prop:links-of-affine-symplectic-building} is more general. In fact, one can read the structure of the links of an affine building via its associated Coxeter-Dynkin diagram. Every building has an associated Coxeter-Dynkin diagram (the type of the building is named after it). A link of a vertex of color \(i\) is a building whose diagram is obtained by removing the \(i\)-th vertex from the original Coxeter-Dynkin diagram. If the removal of a vertex disconnects the diagram, this corresponds to the link being a join of complexes whose respective diagrams are the different connected components. For more details see \cite{AbramenkoB2008} or \cite{Weiss2008}.

Towards taking quotients of \(\tilde{C}_g\) we mention the following fact.
\begin{fact}\label{fact:basic-symplectic-building}
    For every \(g \geq 1\), the complex \(\tilde{C}_g\) is a connected clique complex. Moreover, it is simply connected and contractible.
\end{fact}

Finally, we need a criterion for when subgroups of \(Sp(2g,\mathbb{Q}_p)\) act freely on \(\tilde{C}_g\). The group \(Sp(2g,\mathbb{Q}_p)\) comes with a topology induced by the topology of \(\mathbb{Q}_p\). For this we need the following fact. 
\begin{fact} \label{fact:compact-stabilizers}
    Let \(v \in \tilde{C}_g(0)\). Then its stabilizer \(Stab(v) \subseteq Sp(2g,\mathbb{Q}_p)\) is an open compact subgroup.
\end{fact}

We 
note that if \([L] \in \tilde{C}_g[0]\) then the stabilizer is conjugate to \(Stab([L_{std}]) = Sp(2g,\mathbb{Z}_p)\), so in this case the fact is clear. For other colors the stabilizer is commensurable to \(Sp(2g,\mathbb{Z}_p)\), hence also open compact.
\begin{corollary} \label{cor:discrete-torsion-free-acts-freely}
    Let \(\Gamma \leq Sp(2g,\mathbb{Q}_p)\) be a subgroup.
    \begin{enumerate}
        \item If \(\Gamma\) is discrete and cocompact then \(\Gamma \setminus \tilde{C}_g(0)\) is finite.    
        \item If \(\Gamma\) is discrete and torsion free then the action of \(\Gamma\) is free on the vertices of  \(\tilde{C}_g\).    
    \end{enumerate}
\end{corollary}
\begin{proof}[Proof of \pref{cor:discrete-torsion-free-acts-freely}]
    We start with the first item. It is enough to show that \(\Gamma \setminus \tilde{C}_g[i]\) is finite for every color \(i\). By \pref{claim:basic-properties-of-affine-symplectic-building} the action of \(Sp(2g,\mathbb{Q}_p)\) is transitive on top-level faces. This shows it also acts transitively on every \(\tilde{C}_g[i]\): given \(v_1,v_2 \in \tilde{C}_g[i]\), find top level faces \(s_1\ni v_1,s_2 \ni v_2\). There is an element \(A \in Sp(2g,\mathbb{Q}_p)\) so that \(A s_1 = s_2\). The action preserves colors hence \(A v_1 = v_2\). Thus the action of \(\Gamma\) on \(\tilde{C}_g[i]\) is the same as the left multiplication action of \(\Gamma\) on the right cosets \(Sp(2g,\mathbb{Q}_p)/Stab(v)\) for any \(v \in \tilde{C}_g[i]\). By \pref{fact:compact-stabilizers} the stabilizer is open, and therefore \(Sp(2g,\mathbb{Q}_p)/Stab(v)\) is discrete. On the other hand, the orbits of this action are \(\Gamma \setminus Sp(2g,\mathbb{Q}_p)/Stab(v)\). The space \(\Gamma \setminus Sp(2g,\mathbb{Q}_p)\) is compact, so this quotient \(\Gamma \setminus Sp(2g,\mathbb{Q}_p)/Stab(v)\) is compact and discrete, and therefore finite.

    We now show the second item. Let \(\gamma \in \Gamma\) and assume that there exists some \(v \in \tilde{C}_g(0)\) so that \(\gamma \in Stab(v)\); let us show that \(\gamma = 1\). Obviously for every \(m \in \N\), \(\gamma^m \in Stab(v)\). By \pref{fact:compact-stabilizers}, \(Stab(v)\) is compact, therefore the sequence \(\gamma^m\) has a converging subsequence. By assumption \(\Gamma\) is discrete, thus any converging sequence must be constant after some finite point. In particular, there exists \(m_2 > m_1\) such that \(\gamma^{m_2}=\gamma^{m_1}\), or equivalently \(\gamma^{m_2-m_1} = 1\). By torsion freeness of \(\Gamma\) this implies that \(\gamma = 1\).
\end{proof}

\section{Proof of the Agreement Theorem} \label{sec:mainproof}
In this section we prove \pref{thm:main}, in a more general setup. In fact we prove a more general version of this theorem, \pref{thm:main-technical}. We first repeat some definitions from \cite{DiksteinD2023agr}. We begin with the definitions required for our agreement tests, then we define ``suitable complexes'' which are needed for stating the main theorem of \cite{DiksteinD2023agr}. Finally we prove that quotients of \(\tilde{C}_g\) satisfy these requirements and use \pref{thm:agrcover} to conclude our main theorem.

\subsection{Agreement Tests Definitions} \label{sec:agreement-tests}
Let \(k<d\) and let \(X\) be a \(d\)-dimensional simplicial complex. Let \(\Sigma\) be some fixed alphabet and suppose we have an ensemble of functions \(\mathcal{F} = \sett{f_s:s\to \Sigma}{s \in X(k)}\).

A two-query agreement test is a distribution \(\mathcal{D}\) over pairs \(s_1,s_2 \in X(k)\). The agreement of an ensemble is
\begin{equation}
    \agr_{\mathcal{D}}(\mathcal{F}) = \Prob[s_1,s_2\sim \mathcal{D}]{f_{s_1} = f_{s_2}}.
\end{equation}
When writing \(f_{s_1} = f_{s_2}\) we mean that \(f_{s_1}(v)=f_{s_2}(v)\) for every \(v \in s_1 \cap s_2\).

More generally, a \(q\)-ary agreement test is a distribution of \(s_1,s_2,...,s_q \in X(k)\), where the agreement of the ensemble
\begin{equation}
    \agr_{\mathcal{D}}(\mathcal{F}) = \Prob[s_1,s_2,\dots s_q \sim \mathcal{D}]{\forall i,j \; f_{s_i} = f_{s_j}}.
\end{equation}
Let \(\Delta_d(k)\) be the \(k\)-dimensional complete complex over \(d\) vertices. 
\begin{definition}[Extension]
    Let \(\mathcal{D}\) be a symmetric\footnote{i.e. for every permutation \(\pi:[d] \to [d]\),  \(\prob{s_1,s_2,...,s_q} = \prob{\pi(s_1),\pi(s_2),...,\pi(s_q)}.\)} \(q\)-ary agreement test on \(\Delta_d(k)\). Let \(X\) be any \(d\)-dimensional simplicial complex. We define the \emph{extension} \(\mathcal{D}_X\) of \(\mathcal{D}\) to an agreement test on \(X\), as follows:
\begin{enumerate}
    \item Sample \(t \in X(d)\).
    \item Query \(s_1,s_2,...,s_q \subseteq t\) according to \(\Delta_d(k)\).
\end{enumerate}
\end{definition}
We note that by the symmetry of \(\mathcal{D}\) the second step doesn't depend on the way we identify the vertices of \(t\) with the vertices of \(\Delta_d(k)\). Here are some classical examples.
\begin{definition}[Two-query \(V\)-test]
    Let \(d = 2k-\sqrt{k+1}+1\).
    \begin{enumerate}
        \item Sample some \(t \in X(d)\).
        \item Sample uniformly \(s_1,s_2 \in X(k)\) such that \(s_1, s_2 \subseteq t\), conditioned on \(\abs{s_1 \cap s_2} = \sqrt{k+1}\).
    \end{enumerate}
\end{definition}

\begin{definition}[Three-query \(Z\)-test]
    Let \(d=3k-2\sqrt{k+1}+2)\).
    \begin{enumerate}
        \item Sample some \(t \in X(d)\).
        \item Sample three \(s_1,s_2,s_3 \in X(k)\) such that \(s_1, s_2, s_3 \subseteq t\), conditioned on \(\abs{s_1 \cap s_2}, \abs{s_2 \cap s_3} = \sqrt{k+1}\) and \(s_1 \cap s_3 = \emptyset\).
    \end{enumerate}
\end{definition}

A sound distribution is a distribution that supports an agreement theorem.
\begin{definition} \label{def:distribution-soundness}
    Let \(X\) be a simplicial complex and let \(\mathcal{D}\) be an agreement distribution on \(X\). Let \(\eta,\varepsilon_0,e > 0\) be constants. We say that \(\mathcal{D}\) is \emph{\((\eta,\varepsilon_0,e)\)-sound} if for every ensemble of functions \(\mathcal{F}\) such that \(\agr_{\mathcal{D}}(\mathcal{F}) = \varepsilon \geq \varepsilon_0\), there exists a function \(G:X(0)\to \Sigma\), such that
    \begin{equation} \label{eq:dist-soundness}
        \Prob[r_1,r_2,\dots,r_q \sim \mathcal{D}]{\forall j \; G|_{r_j} \overset{1-\eta}{\approx} f_{r_j} \ve \forall i,j \; f_{r_i} = f_{r_j}} \geq \frac{1}{2}\varepsilon^e.    
    \end{equation}
\end{definition}
Here \(f \overset{1-\eta}{\approx} g\) means that \(f,g\) differ on at most a \(\eta\)-fraction of their coordinates, or stated differently \(\dist(f,g) \leq \eta\).

Examples of such distributions include:
\begin{example} \label{ex:sound-distributions}
    \begin{enumerate}
    \item The \(V\)-test extended to \(X=\Delta_n(k)\) is \((1/\sqrt{k},k^{-c},1)\)-sound for \(d \geq k^3\)  and \(c > 0\) \cite{DinurG2008} (see also \cite{ImpagliazzoKW2012}).
    \item The \(Z\)-test extended to \(X=\Delta_n(k)\) in \cite{ImpagliazzoKW2012} show that the \(Z\)-test is \((k^{-0.2},exp(-\Omega(k^{1/2}),O(1))\)-sound. For constant \(\lambda > 0\), this was improved to \((\lambda,exp(-\Omega(k)),O(1))\)-soundness by \cite{DinurL2017}.
\end{enumerate}
\end{example}
The works mentioned in the second item prove a weaker soundness guarantee than that of \pref{def:distribution-soundness} but \cite[Theorem 5.1]{DinurG2008} give a general technique to lift this weaker guarantee into the soundness in \pref{def:distribution-soundness}.

We also need the definition of \emph{cover soundness}.
\begin{definition}[Cover sound] \label{def:distribution-cover-soundness}
    Let \(X\) be a simplicial complex and let \(\mathcal{D}\) be an agreement distribution on \(X\). Let \(\eta,\varepsilon_0,e > 0\) be constants. We say that \(\mathcal{D}\) is \emph{\((\eta,\varepsilon_0,e)\)-cover-sound} if for every ensemble of functions \(\mathcal{F} = \sett{f_s:s \to \Sigma}{r \in X(k)}\) such that \(\agr_{\mathcal{D}}(\mathcal{F}) = \varepsilon \geq \varepsilon_0\),
    there exists a simplicial $\frac{1}{\varepsilon^e}$-cover \(\rho:Y \to X\) and a global function \(G:Y(0)\to \Sigma\) such that 
    \begin{equation} \label{eq:dist-cover-soundness}
        \Prob[\tilde s \in Y(k)]{f_{\rho(\tilde s)}\circ \rho \overset{1-\eta}{\approx} G|_{\tilde s}} \geq \frac{1}{2}\varepsilon^e.
    \end{equation}
\end{definition}
The probability in \eqref{eq:dist-cover-soundness} is equivalent to choosing \(s \in X(k)\) and then a random preimage \(\tilde{s} \in Y(k)\). Therefore this inequality implies that for at least \(\frac{1}{2}\varepsilon^e\)-fraction of the \(f_s\)'s, $f_s$ agrees with \(G|_{\tilde{s}}\) for one of the preimages \(\tilde{s} \in \rho^{-1}(s)\) (i.e. using \(\rho\) to send vertices from \(\tilde{s}\) to \(s\)).

\subsection{Suitable Complexes}
The main theorem of \cite{DiksteinD2023agr} requires certain conditions from the complexes to get a sound agreement test. 
\begin{definition}[Well connected] \label{def:well-connected-complex}
    Let \(X\) be a \(d\)-dimensional simplicial complex. Let \(d_1 \leq d-2\). We say that \(X\) is \emph{\(d_1\)-well-connected} if for every \(r \in X^{\leq d_1}\) it holds that \(\FA[d_1](X_r)\) is connected. Moreover, if \(r \in X(0)\) then we require that \(\FA[d_1](X_r)\) is \emph{simply connected}. When \(d_1\) is clear from context we omit it and say that \(X\) is \emph{well connected}.
\end{definition}

\begin{definition}[Suitable complex] \label{def:suitable-complexes}
    Let \(d>k>0\) be integers, and let \(\alpha > 0\). Let \(X\) be a \(d\)-dimensional simplicial complex. We say that \(X\) is \((d,k,\alpha)\)-\emph{suitable} if it has the following properties: 
\begin{enumerate}
    \item There exists some integer \(d_1\leq d-2\) with the following properties:
    \begin{enumerate}
        \item \(k^3 \leq d_1 \leq d \exp(-\alpha\frac{d_1}{k})\).
        \item \label{item:cob-exp} $X$ has \((\exp(-\alpha \frac{d_1}{k}),d_1)\)-swap-cocycle expansion. 
        \item \label{item:well-connected} \(X\) is \(d_1\)-well connected as in \pref{def:well-connected-complex}.
    \end{enumerate}
    \item \(X\) is a \(\frac{1}{d^2}\)-two-sided local spectral expander.
    \item \(X\) is a clique complex. 
    \end{enumerate}
\end{definition}

We remark that if one assumes swap coboundary expansion (not only swap cocycle expansion), then one can relax the requirements for well connectivity and clique complexes. These are required so to construct a cover, which in the swap coboundary case is immediate, since the cover is just many disjoint copies of the original complex. However, to stay as close to the statement in \cite{DiksteinD2023agr} as possible, we verify the full requirements.

\begin{claim} \label{claim:complexes-are-suitable}
     For every \(\alpha > 0\) and \(k\) and large enough $d$, there exists  \(p_0(d)\) and \(g_0(d)\) such that for \(g \geq g_0\) and prime \(p \geq p_0\), the \(d\)-skeleton of any finite quotient of \(\tilde{C}_g(\mathbb{Q}_p)\) is \((d,k,\alpha)\)-suitable. 
\end{claim}

\begin{proof}[Proof of \pref{claim:complexes-are-suitable}]
We first find \(d_1 \geq k^3\) such that \pref{thm:coboundary-expansion} shall imply our quotients are \((\exp(-\alpha \frac{d_1}{k}),d_1)\)-cocycle expanders.

Let \(C > 0\) be the constant such that any finite quotient \(X\) of \(\tilde{C}_g(\mathbb{Q}_p)\) is a \((\exp(-C\sqrt{d_1}),d_1)\)-swap cocycle expander for all \(d_1\) provided that \(g>d_1^5\) and \(p\) is large enough. Such a constant exists by \pref{thm:coboundary-expansion}. Let \(d_1\) be the smallest integer such that \(\exp(-\alpha \frac{d_1}{k}) < \exp(-C\sqrt{d_1})\) and such that \(d_1 \geq k^3\) (this is \(d_1 = \max \set{k^3, \lceil \frac{C^2}{\alpha^2}k^2 \rceil}\)).

Let \(d \geq \exp(\alpha \frac{d_1}{k}) d_1\) (this is required for item \(1(a)\) in the definition of suitable).
Recall that by \pref{lem:symplectic-local-spectral-expansion}, two-sided spectral expansion of the \(d\)-skeleton \(X\) goes to \(0\) as \(p,g\) go to infinity, so let \(p,g\) be large enough so that \pref{thm:coboundary-expansion} holds and so that \(X\) is \(\frac{1}{d^2}\) two-sided local spectral expander. We claim that for quotients of \(\tilde{C}_g\), the \(d\)-skeleton is \((d,k,\alpha)\)-suitable.
    \begin{enumerate}
        \item Swap cocycle expansion follows from \pref{thm:coboundary-expansion}.
        \item Connectivity of the links of the faces complex follows from their spectral expansion. By \pref{lem:symplectic-local-spectral-expansion} these complexes are local spectral expanders. The quotients are partite, so the swap walks between different colors are local spectral expanders by \pref{claim:color-swap} (for large enough \(p\)). This implies connectivity of the faces complex itself when \(g\) is large enough, because given two faces \(s_1,s_2\) in the face complex, we can find a face \(s_3\) whose colors are disjoint from the colors of \(s_1\) and \(s_2\). Then we can use the color-swap walk to go from \(s_1\) to \(s_3\) and from \(s_3\) to \(s_2\). 
        
        Coboundary expansion of the faces complexes of \(X_v\) for vertices \(v \in X(0)\), implies simple connectivity. The links, which are isomorphic to some links in \(\tilde{C}_g\) itself, are \emph{symplectic like} as in \pref{prop:links-of-affine-symplectic-building}. Therefore by \pref{thm:coboundary-expansion}, they are coboundary expanders for \emph{some positive constant} and therefore simply connected.
        \item As explained above, local spectral expansion is by \pref{lem:symplectic-local-spectral-expansion}.
        \item The building \(\tilde{C}_g\) is a clique complex by \pref{fact:basic-symplectic-building} and hence its quotient is a clique complex by \pref{claim:quotients-are-clique-complexes}.
    \end{enumerate}
\end{proof}

\subsection{Proof of the main theorem}
We can now state the formal version of \pref{thm:agrcover}. This theorem is essentially a reduction from a sound agreement test on $\Delta_d(k)$ to a sound agreement test on a complex $X$, assuming that $X$ is suitable. 
\begin{theorem}[{\cite[Theorem 3.1]{DiksteinD2023agr}}] \label{thm:agrcover-technical}
    For every $k > 0$, $\epsilon_0>\Omega (1/\log k)$, and \(C > 1\) there exists \(\alpha = \poly(\varepsilon_0)\) and \(d_0\in\mathbb{N}\) such that the following holds for any \(d \geq d_0\). Let \(\eta \leq \exp(-\poly(1/\varepsilon_0))\) be sufficiently small. Let \(\mathcal{D}\) be an agreement distribution on \(\Delta_{Ck}(k)\) such that its extension to \(\Delta_m(k)\) is \((\eta,\poly(\frac{1}{k}),O(1))\)-sound for every \(m \geq k^3\). Let \(X\) be a \((d,k,\alpha)\)-suitable complex and let \(\mathcal{D}_X\) be the extension of \(\mathcal{D}\) to $X$. Then \(\mathcal{D}_X\) is \((\varepsilon_0,\gamma,e)\)-cover sound where \(\gamma = \exp(\poly(1/\varepsilon_0))\eta\) and \(e = O(1)\) depends only on \(\varepsilon_0\).
\end{theorem}

We use this to prove our main theorem, the formal version of \pref{thm:main}.
\begin{theorem}[Main]\label{thm:main-technical}
    For every $k > 0$, $\epsilon_0>\Omega (1/\log k)$ \(\eta \leq \exp(-\poly(1/\varepsilon_0))\), and \(C > 1\) there exists a family of \(Ck\)-dimensional simplicial simplicial complexes \(X\) so that \(\mathcal{D}_X\) is \((\varepsilon_0,\gamma,e)\)-sound where \(\gamma = \exp(\poly(1/\varepsilon_0))\eta\) and \(e = O(1)\) depends only on \(\varepsilon_0\).
    These complexes are skeletons of quotients of \(\tilde{C}_g\).
\end{theorem}

A more concrete corollary is this.
\begin{corollary}[\pref{thm:main}]\label{cor:v-z-soundness}
        For every \(k > 0\), \(\varepsilon > \Omega(1/\log k)\) and \(\gamma > 0\) there exists a family of complexes \(X\) such that the extensions of the \(V\)-test and \(Z\)-test on \(k\)-sets are \((\varepsilon_0,\gamma,O(1))\) sound.
        These complexes are skeletons of quotients of \(\tilde{C}_g\).
\end{corollary}

\begin{proof}[Proof of \pref{thm:main-technical}]
        The family of complexes we construct are \(Ck\)-skeletons of the complexes in \pref{thm:CLhighdim}.\footnote{Technically, the \emph{analysis} of these complexes require higher dimensional sets, but the test itself only requires \(Ck\)-sets.} 
        Let \(m=\frac{1}{\varepsilon^e}\). We find \(g,p\) sufficiently large so that the family of \(k\)-skeletons in \pref{thm:CLhighdim} (which are quotients of \(\tilde{C}_g(\mathbb{Q}_p)\)) have the following two properties.
    \begin{enumerate}
        \item They are \((d,k,\alpha)\)-suitable for a sufficiently small \(\alpha\) and large \(d\) required in \pref{thm:agrcover-technical} (such \(g,p\) exists by \pref{claim:complexes-are-suitable}).
        \item They do not have connected \(m'\)-covers for any \(1< m' \leq m\).
    \end{enumerate}

    By cover soundness,
    \begin{equation*}
        \agr (\set{f_s}) > \eps  \quad \Longrightarrow \quad \exists Y\xrightarrowdbl{\rho} X, \exists G:Y(0)\to\Sigma,\quad \Pr_s[f_{\rho(s)}\circ \rho \overset{1-\gamma}{\approx} G|_s]\geq \frac{1}{2}\eps^e.
    \end{equation*} 
    where \(\rho\) is an \(m'\)-cover for some \(m' \leq m\). By assumption any \(m'\)-cover is trivial, i.e. just disjoint copies of \(X\). Therefore there exists some connected component in \(Y'\subseteq Y\) isomorphic to \(X\) such that 
    \[\Pr_{s \in X, \tilde{s}\in Y'\cap \rho^{-1}(s)}[f_{\rho(s)}\circ \rho \overset{1-\gamma}{\approx} G|_s]\geq\frac{1}{2} \eps^e.\]
    Thus \(G|_{Y'(0)}\) is the global function showing the distribution is \((\varepsilon_0,\gamma,e)\)-sound.
\end{proof}

\begin{proof}[Proof of \pref{cor:v-z-soundness}]
    Apply \pref{thm:main-technical} together with the tests in \pref{ex:sound-distributions}.
\end{proof}
\section{Complexes with no small covers} \label{sec:no-small-covers}
In this section we will prove \pref{thm:CLhighdim}.

It is well known that every \(m\)-cover of \(X\) corresponds to an \(m\)-index subgroup in the fundamental group of \(X\) \cite{Surowski1984}. Thus \pref{thm:CLhighdim} follows directly from this proposition.

\begin{proposition}[Main] \label{prop:no-small-subgroups}
    Let \(m\geq 2\) be an integer and let \(g \geq 100\sqrt{m \log m}\). Then for every prime \(p\) the \(p\)-adic group \(G = Sp(2g,\mathbb{Q}_p)\) has infinitely many cocompact and torsion free lattices \(\set{G \geq \Gamma_1 \geq \Gamma_2 \geq \dots}\) that intersect trivially, satisfying that for every \(i\), \(\Gamma_i\) has no proper subgroup of index \(\leq m\).
\end{proposition}

We spell out the proof of \pref{thm:CLhighdim} for completeness.
\begin{proof}[Proof of \pref{thm:CLhighdim}, assuming \pref{prop:no-small-subgroups}]
    Let \(\set{\Gamma_i}_{i=1}^\infty\) be as in \pref{prop:no-small-subgroups}. By \pref{cor:discrete-torsion-free-acts-freely} their action on \(\tilde{C}_g\) is free. By \pref{claim:eventually-no-close-vertices-in-orbit} there is a subsequence \(\set{\Gamma_i}_{i=i_0}^\infty\) so that for every \(v \in \tilde{C}_g(0)\), \(i \geq i_0\) and non-zero element \(\gamma \in \Gamma_i\), \(\dist(v, \gamma. v) \geq 4\). Without loss of generality let us assume that all \(\Gamma_i\) have this property. By cocompactness, the quotients \(X_i = \Gamma_i \setminus \tilde{C}_g\) are finite. By \pref{claim:quotient-is-cover} and \pref{thm:fundamental-group-of-proper-action-quotient}, their universal cover is \(\tilde{C}_g\) and their fundamental groups are \(\Gamma_i\) respectively. By \pref{fact:no-small-index-implies-no-small-covers} and the fact that every \(\Gamma_i\) has no subgroups of index \(\leq m\), we get that all the complexes \(X_i\) have no connected \(m'\)-covers for \(m' \leq m\). Hence the complexes \(\set{X_i}_{i=1}^\infty\) are the desired family.
\end{proof}
See \pref{sec:poly-const} for a modification of this proof that gives a family of polynomially constructible complexes as such.

\subsection{Background}
\subsubsection{Profinite groups}
Before we prove \pref{prop:no-small-subgroups}, let us give some necessary background.

\begin{definition}[Profinite topology]
    Let \(\Gamma\) be a finitely generated group. Its profinite topology is defined as the topology generated by the basis of open sets \(\sett{\gamma H}{\gamma \in \Gamma, H \leq \Gamma, [\Gamma : H] < \infty}\).
\end{definition}
One can verify that in this topology the multiplication and inverse operations are continuous.

\begin{definition}[Profinite completion]
    Let \(\Gamma\) be a finitely generated group. Its profinite completion is the group \(\widehat{\Gamma}\), which is the topological completion of \(\Gamma\) with respect to the profinite topology.
\end{definition}
An equivalent definition is to say that \[\widehat{\Gamma} = \lim_{\leftarrow}\sett{\Gamma /N}{N \trianglelefteq \Gamma, [\Gamma: N] < \infty}.\]
This means that
\[\widehat{\Gamma} \subseteq \prod_{N \trianglelefteq \Gamma, [\Gamma: N] < \infty} \Gamma/N\]
where \((\gamma_{N})_{N} \in \widehat{\Gamma}\) if for all \(N_1 \leq N_2\), \(\pi_{N_1,N_2}(\gamma_{N_1}) = \gamma_{N_2}\), where \(\pi_{N_1,N_2}:\Gamma/N_1 \to \Gamma/N_2\) is the natural projection (see e.g. \cite{RibesZ2000}).

We say that a group \(K\) is profinite if it is an inverse limit of finite groups, an obvious example is that profinite completions are profinite groups. One can equivalently define profinite groups as topological groups where the topology is compact, totally disconnected and Hausdorff \cite{DixonDMS1999}. 

There is a homomorphism \(p:\Gamma \to \widehat{\Gamma}\) where \(p(\gamma) = (\gamma N)_{N}\). This homomorphism is injective exactly when \(\Gamma\) is residually finite (because then for every \(\gamma \ne \gamma'\) there is a normal subgroup \(N\) such that \(\gamma N \ne \gamma' N\)). We will only work with residually finite groups so we hence assume that \(\Gamma \subseteq \widehat{\Gamma}\) (and the inclusion is via this homomorphism).

\begin{proposition}[\cite{LubotzkyD1981}] \label{prop:profinite-correspondence}
    Let \(\Gamma\) be a finitely generated, residually finite group. Then there is a bijection between the finite index subgroups of \(\Gamma\), and the open subgroups in \(\widehat{\Gamma}\). This bijection preserves indexes. It is 
        \[H \leq \Gamma \mapsto \overline{H}\]
    and in the inverse direction by
        \[H' \leq \widehat{\Gamma} \mapsto H' \cap \Gamma.\]

        Moreover, we note that in this case for every finite index subgroup \(H \leq \Gamma\), \(\overline{H} = \widehat{H}\).
\end{proposition}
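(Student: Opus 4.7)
The plan is to prove the bijection using two ingredients: (i) the density of \(\Gamma\) inside \(\widehat{\Gamma}\), which is built into the definition of the profinite completion, and (ii) the fact that in a profinite group, every open subgroup is automatically closed and of finite index, since profinite groups are compact and open subgroups have finite cosets forming a covering.

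First I would show that the two maps are well defined. If \(H'\leq \widehat{\Gamma}\) is open, compactness of \(\widehat{\Gamma}\) forces \([\widehat{\Gamma}:H'] <\infty\), and then \(H'\cap\Gamma\) is of index at most \([\widehat{\Gamma}:H']\) in \(\Gamma\); equality will come out of the next step. Conversely, if \(H\leq\Gamma\) has finite index, let \(N=\bigcap_{\gamma\in\Gamma}\gamma H\gamma^{-1}\) be its normal core, which is normal in \(\Gamma\) of finite index. Using the canonical projection \(\pi_N\colon\widehat{\Gamma}\twoheadrightarrow \Gamma/N\) coming from the inverse-limit description of \(\widehat{\Gamma}\), define \(\overline{H}:=\pi_N^{-1}(H/N)\). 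This is open (as \(\pi_N\) is continuous and \(H/N\) is open in the discrete group \(\Gamma/N\)), and \([\widehat{\Gamma}:\overline{H}]=[\Gamma/N:H/N]=[\Gamma:H]\). I would then verify that \(\overline{H}\) is indeed the topological closure of \(H\) in \(\widehat{\Gamma}\): it clearly contains \(H\) and is closed, and density of \(\Gamma\) in \(\widehat{\Gamma}\) together with the fact that each coset of \(\overline{H}\) contains an element of \(\Gamma\) (namely a coset representative of \(H/N\)) shows nothing smaller will do.

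Next I would check that the two assignments are mutually inverse. For a finite-index \(H\leq \Gamma\), we need \(\overline{H}\cap\Gamma=H\); this follows because \(H\) is a union of cosets of \(N\), each of which is open-and-closed, so \(H\) is already closed in the profinite topology on \(\Gamma\), hence equals the trace of its closure in \(\widehat{\Gamma}\). For an open \(H'\leq\widehat{\Gamma}\), we need the closure of \(H'\cap\Gamma\) in \(\widehat{\Gamma}\) to be \(H'\); density of \(\Gamma\) gives that every coset of \(H'\) meets \(\Gamma\), so \(H'\cap\Gamma\) has exactly \([\widehat{\Gamma}:H']\) cosets in \(\Gamma\), and applying the previous paragraph to \(H'\cap \Gamma\) produces an open subgroup of \(\widehat{\Gamma}\) of the same index contained in \(H'\), which must equal \(H'\). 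Index preservation was built in throughout.

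For the ``moreover'' assertion that \(\overline{H}=\widehat{H}\) when \(H\) has finite index, I would show that the profinite topology of \(H\) as an abstract group coincides with the topology it inherits from the profinite topology of \(\Gamma\). The nontrivial direction requires that every finite-index subgroup of \(H\) contains a subgroup that is of finite index in \(\Gamma\); this is immediate by taking the intersection of \(\Gamma\)-conjugates, since there are only finitely many cosets of \(H\) in \(\Gamma\). This identification lets one interpret \(\overline{H}\) either as the inverse limit over finite-index subgroups of \(H\) or as the closure of \(H\) inside \(\widehat{\Gamma}\), identifying them with \(\widehat{H}\). The main obstacle in the argument is precisely this last step: while the bijection part is formal once density and the structure of open subgroups of profinite groups are in hand, matching the two profinite completions depends on the finite-index hypothesis and would genuinely fail for arbitrary subgroups.
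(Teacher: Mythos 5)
The paper cites this proposition from \cite{LubotzkyD1981} without reproducing a proof, so there is nothing internal to compare against; I can only assess your argument on its own merits. It is essentially the standard argument (compare Ribes--Zalesskii, Prop.\ 3.2.2), and it is correct. The key pieces---compactness forcing open subgroups to have finite index, passing to the normal core $N$ and pulling back along $\pi_N\colon\widehat\Gamma\twoheadrightarrow\Gamma/N$, and density of $\Gamma$ to equate $\pi_N^{-1}(H/N)$ with $\overline H$ and to run the index count in the other direction---are all present and correctly deployed. Two small clean-ups: in the verification that $\overline H$ really is the closure, the slick way to phrase it is that any open $U\subseteq\pi_N^{-1}(H/N)$ meets $\Gamma$ by density, while $\pi_N^{-1}(H/N)\cap\Gamma=H$ forces $U\cap H\ne\emptyset$, so $H$ is dense in $\pi_N^{-1}(H/N)$; and in the ``moreover'' paragraph the reason the intersection of $\Gamma$-conjugates of a finite-index $K\leq H$ has finite index in $\Gamma$ is that $[\Gamma:K]=[\Gamma:H][H:K]<\infty$ (so $K$ has only finitely many $\Gamma$-conjugates), rather than merely that $H$ has finitely many cosets. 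Your final remark---that the cofinality of finite-index subgroups of $H$ among finite-index subgroups of $\Gamma$ is exactly what makes $\overline H\cong\widehat H$, and that this would fail for an infinite-index $H$---correctly isolates where the hypothesis is used.
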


\subsubsection{Preliminary observations}

The following observations are elementary but we prove them here for concreteness.
\begin{claim} \label{claim:continuous-homomorphism} ~
\begin{enumerate}
    \item A group \(\Gamma\) has no proper subgroup of index \(\leq m\) if and only if the only homomorphism from \(\Gamma\) to \(Sym(m)\) is the trivial one.
    \item A profinite group \(K\) has no proper \emph{open} subgroup of index \(\leq m\) if and only if the only \emph{continuous} homomorphism from \(K\) to \(Sym(m)\) (equipped with the discrete topology) is the trivial one.
\end{enumerate}
\end{claim}

\begin{proof}
    Let us prove the contrapositive. I.e., that there is a proper subgroup of index \(\leq m\) if and only if there is a non-trivial homomorphism \(\phi: \Gamma \to Sym(m)\). For the first item, observe that if \(H \leq \Gamma\) is of index \(\leq m\), then \(\Gamma\) acts transitively on the cosets of \(H\). This action gives rise to a non-trivial homomorphism to \(Sym(\Gamma / H)\) which is (isomorphic to) a subgroup of \(Sym(m)\). In the other direction, suppose there is a non-trivial homomorphism to \(Sym(m)\). Then there is an element \(i \in [m]\) such that \(Orb(i)=\sett{\phi(\gamma).i}{\gamma \in \Gamma} \ne \set{i}\). It is easy to see that the stabilizer of \(i\), i.e. \(H = \sett{\gamma \in \Gamma}{\phi(\gamma) .i = i}\) is indeed a subgroup, and its index is the size of the orbit. In particular, this is a proper subgroup of index \(\leq m\).

    Let us move on to the second item. Note that if \(H \leq K\) is an open subgroup, then the homomorphism from $K$ to $Sym(K/H)$ as above is a continuous one. To show continuity we need to show that for any \(\sigma \in Sym(m)\), \(\phi^{-1}(\sigma)\) is closed. \(\phi^{-1}(\sigma)\) is a coset of the kernel, hence it is equivalent to show that \(ker(\phi)=\phi^{-1}(1)\) is closed. It can be verified that \(ker(\phi) = \bigcap_{g \in K} g^{-1} H g\). Let us explain why this is closed. As this is an intersection, it is enough to show that every \(g^{-1}Hg\) is closed. Multiplication is continuous, if \(H\) is closed then every \(g^{-1} H g\) is also closed, so it suffices to show that \(H\) is closed. But indeed, if \(H\) is open, then its complement is a union of cosets, which are also open - thus \(H\) is closed.
    
    For the other direction, let \(\phi\) be the homomorphism. The \(\leq m\)-index subgroup \(H\) constructed above contains \(ker(\phi)\), which is open (from continuity of \(\phi\)). Thus \(H\) is a union of cosets of an open subgroup, which implies it is open.
\end{proof}

\begin{claim} \label{claim:cont-homom-product}
    Let \(\set{K_i}_{i\in I}\) be profinite groups and let \(K = \prod_{i \in I} K_i\). The group \(K\) has a non-trivial continuous homomorphism to \(Sym(m)\) if and only if there exists \(j \in I\) such that \(K_j\) has a non-trivial continuous homomorphism to \(Sym(m)\).
\end{claim}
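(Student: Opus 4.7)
The proof naturally splits by direction. The easy direction $(\Leftarrow)$ is immediate: given a non-trivial continuous homomorphism $\phi_j: K_j \to Sym(m)$, precomposing with the (continuous) coordinate projection $\pi_j: K \to K_j$ yields a non-trivial continuous homomorphism $K \to Sym(m)$.

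For $(\Rightarrow)$, suppose $\phi: K \to Sym(m)$ is non-trivial and continuous. The plan is to use continuity to trap $\ker(\phi)$ inside an open subgroup of $K$ that splits as a product across coordinates, and then to localize the non-triviality of $\phi$ to a single coordinate. Since $Sym(m)$ is discrete, $N := \ker(\phi)$ is an open subgroup of $K$, so by definition of the product topology it contains a basic open neighborhood of the identity of the form $\prod_{i \notin F} K_i \times \prod_{i \in F} U_i$, with $F \subseteq I$ finite and each $U_i \subseteq K_i$ open. Using that open subgroups form a neighborhood basis of the identity in every profinite group, I shrink each $U_i$ to an open subgroup $V_i \leq K_i$, producing an open subgroup $W := \prod_{i \notin F} K_i \times \prod_{i \in F} V_i \leq N$ of $K$.

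Consequently $\phi$ factors through the finite quotient $K / W \cong \prod_{i \in F} K_i / V_i$ as a non-trivial homomorphism $\bar\phi: \prod_{i \in F} K_i / V_i \to Sym(m)$. The canonical embedded copies of the factors $K_i / V_i$ inside the product commute pairwise and jointly generate the product, so if $\bar\phi$ were trivial on every embedded copy it would be trivial on the whole product. Hence some $j \in F$ has the property that the restriction of $\bar\phi$ to the embedded copy of $K_j / V_j$ is non-trivial, and composing with the quotient map $K_j \to K_j / V_j$ (which is continuous because $V_j$ is open in $K_j$) yields the desired non-trivial continuous homomorphism $K_j \to Sym(m)$. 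The one place where the profinite hypothesis, as opposed to mere topological-group structure, is genuinely used is the upgrade from an arbitrary basic open neighborhood of the identity to one that is itself a subgroup; the remaining commuting-factors decomposition is completely routine, so no significant obstacle is anticipated.
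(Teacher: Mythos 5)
Your proof is correct, and it takes a genuinely different route from the paper's. The paper proves the $(\Rightarrow)$ direction by observing that the restricted product $T$ (elements differing from the identity in only finitely many coordinates) is dense in $K$; since $Sym(m)$ is discrete and Hausdorff, a continuous $\phi$ that is trivial on the dense subgroup $T$ must be trivial everywhere, so a non-trivial $\phi$ is non-trivial on $T$ and hence on one of the generating factors $K_j$. You instead exploit that $\ker(\phi)$ is an \emph{open} subgroup, trap inside it a product-shaped open subgroup $W = \prod_{i\notin F} K_i \times \prod_{i\in F} V_i$ (using that open subgroups form a neighborhood basis at the identity in any profinite group), factor $\phi$ through the finite group $K/W \cong \prod_{i\in F} K_i/V_i$, and then run the standard commuting-factors argument for finite products. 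Both are valid; the paper's density argument is a shade more economical (it avoids the shrink-to-a-subgroup step), while yours makes the finite-quotient structure completely explicit, which some readers may find clearer and which also silently reconfirms that $\phi(K)$ is finite.
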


\begin{proof}
    For the first direction, observe that we can embed every \(K_j\) in \(K\) where every \(k \in K_j\) corresponds to \(\tilde{k} \in K\) where 
    \[\tilde{k}_i = \begin{cases}
        k & i=j \\
        1 & i \ne j
    \end{cases}.\]
    Let \(T = \iprod{K_j}_{j \in I} \subseteq K\). It is easy to see that \(T\) consists of all elements that are not the identity on a finite number of components. We note that \(T\) is dense inside \(K\). 
    Thus every continuous homomorphism of \(Sym(m)\) that is non-trivial, must also be non-trivial on \(T\). This implies that it must be non-trivial on one of the sets generating \(T\), i.e. that there exists a \(j\) such that \(\phi|_{K_j}\) is non-trivial.
    
    The other direction is simple. If \(\phi:K_j \to Sym(m)\) is a non-trivial homomorphism, then \(\phi \circ p_j:K \to Sym(m)\) is also a continuous and non trivial homomorphism, where \(p_j\) is the projection to the \(j\)-th coordinate. 
\end{proof}

An immediate corollary from the two claims is:
\begin{corollary} \label{cor:normal-subgroup-of-component}
    Let \(\set{K_i}_{i\in I}\) be profinite groups and let \(K = \prod_{i \in I} K_i\). Then \(K\) has a proper open subgroup of index \(\leq m\) if and only if there exists \(K_j\) that has a proper open subgroup of index \(\leq m\). 
\end{corollary}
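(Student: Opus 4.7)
The plan is to obtain this corollary as an immediate chaining of the two claims that immediately precede it, with no additional work required: \pref{claim:continuous-homomorphism} translates the combinatorial property (``has a proper open subgroup of index $\leq m$'') into a representation-theoretic property (``admits a non-trivial continuous homomorphism to $Sym(m)$''), and \pref{claim:cont-homom-product} tells us that for a product of profinite groups, the latter property is detected on the individual factors.

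More concretely, I would argue as follows. Suppose $K = \prod_{i \in I} K_i$ has a proper open subgroup of index $\leq m$. By the second item of \pref{claim:continuous-homomorphism}, this is equivalent to the existence of a non-trivial continuous homomorphism $\phi\colon K \to Sym(m)$ (where $Sym(m)$ carries the discrete topology). By \pref{claim:cont-homom-product}, such a $\phi$ exists if and only if there is an index $j \in I$ such that $K_j$ admits a non-trivial continuous homomorphism to $Sym(m)$. Applying \pref{claim:continuous-homomorphism} once more in the reverse direction, to the factor $K_j$, this is in turn equivalent to $K_j$ having a proper open subgroup of index $\leq m$. Chaining these three equivalences yields the corollary.

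I do not foresee any real obstacle: both ``if and only if'' steps are already packaged in the previous claims, and the corollary is a one-line consequence once we observe that the quantifier ``there exists $K_j$'' lines up on both sides. The only minor bookkeeping is to make sure we are consistently talking about \emph{open} subgroups on the profinite side (so that the corresponding homomorphism to $Sym(m)$ is continuous), which is exactly the form in which \pref{claim:continuous-homomorphism} is stated.
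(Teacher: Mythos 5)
Your proposal is correct and matches the paper exactly: the paper presents the corollary with no explicit proof, describing it as ``an immediate corollary from the two claims,'' which is precisely the chaining of \pref{claim:continuous-homomorphism} (item 2) and \pref{claim:cont-homom-product} that you spell out. Your extra care to apply \pref{claim:continuous-homomorphism} a second time to the factor $K_j$, and to track openness so the homomorphisms stay continuous, is exactly the bookkeeping the paper elides.
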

Finally, we also need the notion of Frattini subgroups.
\begin{definition}[Frattini subgroup]
    Let \(K\) be a profinite group. Its Frattini subgroup \(\Phi(K)\), is the intersection of all maximal open subgroups \(M \leq K\). 
\end{definition}
The subgroup \(\Phi(K)\) is a normal subgroup since every conjugate of a maximal open group is also a maximal open subgroup. It is also closed, since it is an intersection of closed sets (recall that every open subgroup is also closed since its complement is a union of cosets, which are themselves open).

This is the main observation we need about Frattini subgroups.
\begin{observation} \label{obs:frattini-quotient}
    Let \(K\) be a profinite group. Then \(K\) has a proper subgroup of index \(\leq m\) if and only if \(K/\Phi(K)\) has a proper subgroup of index \(\leq m\).
\end{observation}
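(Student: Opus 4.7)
The plan is a direct argument using the definition of \(\Phi(K)\) as the intersection of the maximal open subgroups of \(K\). Throughout I will treat ``proper subgroup of index \(\leq m\)'' as ``proper open subgroup of index \(\leq m\)'', matching the preceding \pref{claim:continuous-homomorphism} and \pref{cor:normal-subgroup-of-component}; this is equivalent to the literal statement in the topologically finitely generated setting that is actually used in \pref{thm:no-small-subgroups}.

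For the easy direction \((\Leftarrow)\), I just pull back along the continuous quotient map \(\pi\colon K\to K/\Phi(K)\): if \(\bar H \leq K/\Phi(K)\) is a proper open subgroup with \([K/\Phi(K):\bar H]\leq m\), then \(\pi^{-1}(\bar H)\) is a proper open subgroup of \(K\) with the same index.

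For the harder direction \((\Rightarrow)\), suppose \(H\leq K\) is a proper open subgroup with \([K:H]\leq m\). The key step is to enlarge \(H\) to a \emph{maximal} proper open subgroup \(M\) of \(K\). This is possible because the poset of open subgroups of \(K\) containing \(H\) injects into the (finite) collection of unions of \(H\)-cosets, hence has a maximal element \(M\subsetneq K\); this \(M\) is automatically maximal in the whole lattice of proper open subgroups, since any proper open subgroup strictly containing \(M\) would also contain \(H\), contradicting maximality above \(H\). By definition of the Frattini subgroup, \(\Phi(K)\subseteq M\). Moreover, \(\Phi(K)\) is normal in \(K\) because conjugation permutes the maximal open subgroups and hence fixes their intersection. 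Therefore \(H\Phi(K)\) is a subgroup of \(K\), it is contained in \(M\) and so is proper, and its image \(H\Phi(K)/\Phi(K)\) is a proper subgroup of \(K/\Phi(K)\) of index \([K:H\Phi(K)]\leq[K:H]\leq m\), as required.

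The only nontrivial step is the enlargement of \(H\) to a maximal proper open subgroup \(M\); the rest is the standard lattice correspondence between subgroups of \(K\) containing \(\Phi(K)\) and subgroups of \(K/\Phi(K)\). This is the step I expect to be the main obstacle to write cleanly, since one has to argue both that \(M\) exists (using finiteness of the index of \(H\)) and that local maximality above \(H\) implies global maximality in \(K\).
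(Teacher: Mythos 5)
Your proof is correct and takes essentially the same route as the paper: enlarge the given finite-index open subgroup to a maximal proper open subgroup \(M\), observe \(\Phi(K)\subseteq M\), and pass to the quotient via the correspondence theorem (the paper reads off \(M/\Phi(K)\) directly while you use \(H\Phi(K)/\Phi(K)\subseteq M/\Phi(K)\), which is a cosmetic difference). You spell out two points the paper leaves implicit — why a maximal proper open subgroup above \(H\) exists (finiteness of the index), and that ``proper subgroup of index \(\leq m\)'' should be read as ``proper \emph{open} subgroup'' so that the Frattini definition applies — but the underlying argument is the same.
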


\begin{proof}
    Let \(L \leq K\) be a proper subgroup of index \(\leq m\). It is contained in a maximal proper subgroup \(M\) with index \(\leq m\). By definition \(\Phi(K) \leq M \leq K\) and by the correspondence theorem \(M/\Phi(K)\) has the same index in \(K/\Phi(K)\). The other direction follows from the same argument, reversed.
\end{proof}

\subsubsection{Quaternion Algebras}
In this subsection we present without proof some classical material from the theory of quaternion algebras and arithmetic groups. For more on this and complete references see \cite{PlatonovRR1993}, \cite{Morris2001} and \cite{MaclachlanR2003}.
\begin{definition}[Quaternion Algebra]
Let \(\mathbb{F}\) be a field of characteristic \(\ne 2\) and let \(a,b \in \mathbb{F}^*\). The Quaternion algebra is the \(\mathbb{F}\)-algebra
\[ \left (\frac{a,b}{\mathbb{F}} \right ) = \left \langle 1,i,j,k \Big | i^2=a, j^2=b, ij=-ji=k \right \rangle.\] 
\end{definition}
When \(a=b=-1\) and \(\mathbb{F}=\mathbb{R}\) this is the Hamilton's Quaternion algebra. If \(\mathbb{F} \subseteq L\) are two fields and \(a,b \in \mathbb{F}^*\) then \(\left (\frac{a,b}{\mathbb{F}} \right ) \leq \left (\frac{a,b}{L} \right )\).

The \(\mathbb{F}\)-algebra can be either a division algebra (e.g. \(\left (\frac{-1,-1}{\mathbb{R}} \right )\))
or it is isomorphic to \(M_{2}(\mathbb{F})\), the \(2\times 2\) matrices over \(\mathbb{F}\) (e.g. \(\left (\frac{-1,-1}{\mathbb{C}} \right )\)). In the first case we say that it ramifies over \(\mathbb{F}\) and in the second case we say that it splits.

The algebra \(\left (\frac{a,b}{\mathbb{F}} \right )\) splits over \(\mathbb{F}\) if and only if the quadratic form \(ax^2+by^2=1\) has a solution in \(\mathbb{F}\) (see \cite[Theorem 2.3.1 p.87]{MaclachlanR2003}).

Let us concentrate at quaternion algebras over \(\mathbb{Q}\). The field \(\mathbb{Q}\) has infinitely many completions - the \(p\)-adic fields \(\mathbb{Q}_p\) (one for every prime \(p\)) and the reals \(\mathbb{R}\), which we sometimes consider as \(p=\infty\) and write \(\mathbb{Q}_\infty = \mathbb{R}\). Given \(\left (\frac{a,b}{\mathbb{Q}} \right )\) we can extend it to every \(\mathbb{Q}_p\) (\(p \leq \infty\)). A given algebra ramifies only at a finite \emph{even} set of completions \cite[Theorem 2.7.3 p.99]{MaclachlanR2003}.

For concreteness look now at the algebra \(H^\ell(\mathbb{F}) = \left (\frac{-1,-\ell}{\mathbb{F}} \right )\) where \(\mathbb{F} \supseteq \mathbb{Q}\) and \(\ell\) is a prime and \(\ell \equiv 3 \text{ (mod 4)}\).

\begin{proposition}
    The algebra \(H^\ell\) ramifies over \(\mathbb{R}\) and \(\mathbb{Q}_\ell\) and splits over \(\mathbb{Q}_p\) for every \(p\ne \ell,\infty\).
\end{proposition}

\begin{proof}
    Assume first that \(p \ne 2,\ell,\infty\). In this case \(-1,-\ell \in \mathbb{Z}_p\), the \(p\)-adic integers, but not in \(p \mathbb{Z}_p\). By \cite[Theorem 2.6.6, item (1), p. 97]{MaclachlanR2003} the algebra \(H^\ell\) splits over \(\mathbb{Q}_p\).

    In the other hand, by the second item of the same theorem there, \(\left (\frac{-1,-\ell}{\mathbb{Q}_\ell} \right )\) does not split as \(-1\) is not a square mod \(\ell \mathbb{Z}_\ell\), since \(\ell \cong 3 \text{ (mod 4)}\).
    
    Now, the quadratic form \(-x^2-\ell y^2 = 1\) has no solution in \(\mathbb{R}\), thus \(\left (\frac{-1,-\ell}{\mathbb{R}} \right )\) ramifies.
    
    Finally, by \cite[Theorem 2.7.3 p.99]{MaclachlanR2003} the algebra \(\left (\frac{-1,-\ell}{\mathbb{Q}_p} \right )\) must split since the number of ramified completions of \(H^\ell\) is even.
\end{proof}

For \(\alpha = w+xi+yj+zk\) we define the involution \(\overline{\alpha} = \overline{w+xi+yj+zk} = w-xi-yj-zk\). For every \(\mathbb{F} \supseteq \mathbb{Q}\) we also denote the hermitian form 
\[\iprod{,}:H^\ell(\mathbb{F})^g \times H^\ell(\mathbb{F})^g \to H^\ell(\mathbb{F}); \;
\iprod{\alpha,\beta} = \sum_{t=1}^g \overline{\alpha_t} \beta_t.\]

With this form in mind we denote by \(G(\mathbb{F}) = SU_g(H^\ell(\mathbb{F}))\), i.e.
\[SU_g(H^\ell(\mathbb{F})) = \sett{A \in M_{g \times g}(H^{\ell}(\mathbb{F}))}{ \forall x,y\in H^{\ell}(\mathbb{F})^g, \; \iprod{Ax,Ay} = \iprod{x,y}}.\]

When \(\mathbb{F} = \mathbb{Q}_p\), then \(G(\mathbb{Q}_p))\), being a closed group of some \(GL_{m}(\mathbb{Q}_p)\) is locally compact, totally disconnected and Hausdorff. This implies that every compact subgroup of it is profinite.

\begin{fact} \label{fact:chain-of-groups}
    Let \(\ell\) be a prime. Then
    \begin{enumerate}
        \item For every prime \(p \ne \ell\), \(SU_g(H^\ell(\mathbb{Q}_p)) \cong Sp(2g,\mathbb{Q}_p)\).
        \item The group \(SU_g(H^\ell(\mathbb{R}))\) is a compact Lie group.
        \item\label{item:infinite-subgroups} The group \(SU_g(H^\ell(\mathbb{Q}_\ell))\) is an \(\ell\)-adic Lie group, and therefore it has a torsion-free pro-\(\ell\) open subgroup \(H_0\). In particular, the index of an open subgroup in \(H_0\) is a power of \(\ell\) \cite{DixonDMS1999}. Hence there is an infinite sequence of open compact subgroups, 
        \[H_0 \trianglerighteq H_1 \trianglerighteq \dots\] such that every \(H_{i+1}\) is a normal subgroup inside \(H_i\), the index \([H_i:H_{i+1}]=\ell\) and the intersection \(\bigcap_{i=0}^\infty H_i = \set{1}\). 
    \end{enumerate}
\end{fact}
Let us fix \(g \geq 1\) and two primes \(p \ne \ell\). Denote the algebraic group \(G( \cdot ) = SU_g(H^\ell(\cdot))\). The ring of Adeles of \(\mathbb{Q}\), denoted \(\mathbb{A} = \mathbb{A}_{\mathbb{Q}}\), is the restricted product \(\prod_{p \leq \infty}^{*} \mathbb{Q}_p\) (including \(\mathbb{Q}_\infty = \mathbb{R}\)). This is the subring of \(\prod_{p \leq \infty} \mathbb{Q}_p\) where a sequence \((x_p)_p\) is in \(\mathbb{A}\) if for all but a finite number of elements, \(x_p \in \mathbb{Z}_p\) (there is no restriction on the real coordinate). The Adeles come with a natural topology (cf.\ \cite{PlatonovRR1993}) in which \(\mathbb{R} \times \prod_{p < \infty}\mathbb{Z}_p\) is an open subring.

Consider now \(G(\mathbb{A})\). This is the restricted product \(\prod_{p \leq \infty}^{*} G(\mathbb{Q}_p)\), that is, this is a subgroup of the \(\prod_{p \leq \infty}G(\mathbb{Q}_p)\) where \((g_p)_{p} \in \prod_{p \leq \infty}^{*} G(\mathbb{Q}_p)\) if for all but a finite number of elements, \(g_p \in G(\mathbb{Z}_p)\). This group comes with a topology induced by \(\mathbb{A}\). The group \(G(\mathbb{Q})\) is diagonally embedded in \(G(\mathbb{A})\) as a discrete subgroup but for every \(p\ne \infty\), the projection of \(G(\mathbb{Q})\) into \(G(\mathbb{A}) / G(\mathbb{Q}_p)\) is dense by the Strong Approximation Theorem \cite{PlatonovRR1993}.

Let \(K_0 \leq G(\mathbb{A})\) be
\[K_0 = G(\mathbb{R}) \times H_0 \times G(\mathbb{Q}_p) \times \prod_{q \ne p,\ell,\infty} G(\mathbb{Z}_q),\]
and let \(\Gamma_0 = K_0 \cap G(\mathbb{Q})\).

Recall that by \pref{item:infinite-subgroups} in \pref{fact:chain-of-groups}, \(G(\mathbb{R})\) is compact. As a subgroup of \(G(\mathbb{Q})\), the group \(\Gamma_0\) is also discrete, and so is its projection to \(G(\mathbb{Q}_p)\) since the kernel of the projection \(K_0 \to G(\mathbb{Q}_p)\) is compact (this is because all other factors in the product are compact). However, the projection of \(\Gamma_0\) to \(H_0 \times \prod_{q \ne p,\ell,\infty} G(\mathbb{Z}_q) \cong H_0 \times \prod_{q \ne p,\ell,\infty} Sp(2g,\mathbb{Z}_p)\) is dense, again by the Strong Approximation Theorem.
Now the affirmative solution to congruence subgroup problem (see \cite{Rapinchuk1989}) says that in such a situation, this profinite group \(H_0 \times \prod_{q \ne p,\ell,\infty} G(\mathbb{Z}_q)\), is isomorphic to the profinite completion of \(\Gamma_0\).
In summary: 
\begin{fact}\label{fact:gamma-0}~
    \begin{enumerate}
        \item The group \(\Gamma_0\) is a discrete cocompact lattice of \(SU_g(H^\ell(\mathbb{Q}_p)) \cong Sp(2g,\mathbb{Q}_p)\).
        \item The profinite completion \(\widehat{\Gamma_0} \cong H_0 \times \prod_{q \ne \ell,p}Sp(2g,\mathbb{Z}_q)\) as above. Here \(\mathbb{Z}_q\) are the \(q\)-adic integers and \(H_1\) is the pro-\(\ell\) group. The group \(\Gamma_0\) is embedded diagonally in \(H_1 \times \prod_{q \ne \ell,p}Sp(2g,\mathbb{Z}_q)\), and the projection of this embedding to every factor is injective.
    \end{enumerate}
\end{fact}

\subsection[Proof of the main lemma]{Proof of \pref{prop:no-small-subgroups}}
\begin{proof}
Fix some two primes \(p,\ell\) such that \(\ell > m\). Let \(\Gamma_0\) be as above. Let \(\tilde{H}_i \cong H_i \times \prod_{q \ne \ell,p}Sp(2g,\mathbb{Z}_q)\) be open subgroups of \(\widehat{\Gamma_0}\), where \(H_i\) are the subgroups from \pref{fact:chain-of-groups}. Let \(\Gamma_i = \Gamma_0 \cap \tilde{H}_i\). By \pref{prop:profinite-correspondence}, \(\Gamma_i \subseteq \Gamma_0\) is a finite index subgroup of \(\Gamma_0\). In particular, by \pref{fact:gamma-0} it is a discrete cocompact lattice of \(Sp(2g,\mathbb{Q}_p)\). In addition, by \pref{prop:profinite-correspondence}, \(\tilde{H}_i \cong \widehat{\Gamma_i}\), so instead of showing that \(\Gamma_i\) has no subgroups of index \(\leq m\), we will show that \(\tilde{H}_i\) has no \emph{open} subgroups of index \(\leq m\).

Let us fix \(i\). As above, \(\tilde{H}_i = \widehat{\Gamma_i}\) is an infinite product of profinite groups. 
By \pref{cor:normal-subgroup-of-component} showing that \(\tilde{H}_i\) has no  open subgroup of index \(\leq m\) is equivalent to showing that none of the groups in the product have open subgroups of index \(\leq m\).

The group \(H_i\) is a pro \(\ell\)-group so all proper open subgroups must have index at least \(\ell > m\). Let us consider \(Sp(2g,\mathbb{Z}_q)\). Fix some \(q \ne p,\ell\). Assume towards contradiction that \(K = Sp(2g,\mathbb{Z}_q)\) has a subgroup of index at most \(m\). By \pref{obs:frattini-quotient}, \(K/\Phi(K)\) also has a subgroup of index at most \(m\). By \cite{Weigel1996}
\[K/\Phi(K) \cong PSp(2g,\mathbb{F}_q).\]

It is well known that \(PSp(2g,\mathbb{F}_q)\) is a simple group. By \pref{claim:continuous-homomorphism}, if it has a non-trivial homomorphism to \(Sym(m)\) the kernel of this homomorphism is a proper \emph{normal} subgroup of index at most \(m!\). But the only proper normal subgroup in \(PSp(2g,\mathbb{F}_q)\) is the trivial subgroup, which has index larger than \(m!\), since the order of \(PSp(2g,\mathbb{F}_q)\) is \( \abs{PSp(2g,\mathbb{F}_q)} > \abs{PSp(2g,\mathbb{F}_2)} \geq 2^{g^2-g-1}>m!\) for \(g \geq 100\sqrt{m\log m}\).

Finally, the groups \(\Gamma_i\) have no torsion, because by \pref{fact:gamma-0}, they are embedded in \(H'\) which has no torsion. This diagonal embedding also promises that the intersection \(\bigcap_{i=1}^\infty \Gamma_i = \set{1}\) since by the definition this is equal to
\[\Gamma_0 \cap \bigcap_{i=1}^\infty \left ( H_i \times \prod_{q \ne \ell,p}Sp(2g,\mathbb{Z}_q) \right ) = \Gamma_0 \cap \left ( \set{1} \times \prod_{q \ne \ell,p}Sp(2g,\mathbb{Z}_q) \right ).\]
By \pref{fact:gamma-0}, \(\Gamma_0\) is \emph{diagonally} embedded inside \(H_1 \times \prod_{q \ne \ell,p}Sp(2g,\mathbb{Z}_q)\), this implies that this intersection is \(\set{1}\).
\end{proof}

\subsection{Constructing a Family of Quotients in Polynomial Time} \label{sec:poly-const}
In this section we prove a polynomially constructible version of \pref{thm:CLhighdim}. Recall that a family \(\set{X_N}_N\) is polynomially constructible if there is an algorithm that is given \(n\) in unary input, and it produces a member in the family \(X_N\) with at least \(n\) vertices in \(\poly(n)\)-time.

\begin{proposition} \label{prop:poly-constructible}
    Let \(m \geq 2\) and let \(g \geq 100 \sqrt{m \log m}\). There exists a family of \emph{polynomially constructible} complexes \(\set{X_N}_N\) that are finite quotients of \(\tilde{C}_g(\mathbb{Q}_p)\) such that every \(X_N\) has no \(m'\)-connected covers for \(1< m' \leq m\).
\end{proposition}

We fix \(m\) as in the proposition and a sufficiently large prime \(\ell\) as in the construction above. Let \(G\) be a finite group acting on \([\ell]\). To understand the algorithm, we must explain how to construct a cover for a clique complex \(X\), given a cocycle \(\phi:\dir{X}(1) \to G\). Let \(X^\phi\) be the following complex.
\begin{enumerate}
    \item The vertices of the cover are \(X^\phi(0) = X(0) \times [\ell]\).
    \item The edges are all \(\set{(v,i),(u,j)}\) such that \(\set{v,u} \in X(1)\) and \(j = \phi(uv). i\).
    \item The higher dimensional faces are all cliques in the graph \((X^\phi(0),X^\phi(1))\).\footnote{This construction could be modified to work over complexes that are not clique complexes, but we omit the details for simplicity.}
\end{enumerate}
This construction indeed results in an \(\ell\)-cover of \(X\) for any cocycle \(\phi\). For a proof see \cite[Proposition 2.1]{Surowski1984}.

Given \(X\) and such a coycle \(\phi:\dir{X}(1) \to G\), the construction of \(X^\phi\) can clearly be done in \(O_\ell (|X|)\) time. We will use this construction with \(G = \mathbb{Z}/\ell \mathbb{Z}\) that is, \(G = \set{0,1,\dots,\ell-1}\) with addition modulo \(\ell\). For this special case we state the following fact.
\begin{fact} \label{fact:covers-from-l-abelian-cohomology}
    Let \(X\) be a connected clique complex and let \(\phi \in Z^1(X,\mathbb{Z}/\ell \mathbb{Z}) \setminus B^1(X,\mathbb{Z}/\ell \mathbb{Z})\). Then \(X^\phi\) is a connected \(\ell\)-cover of \(X\), and moreover the fundamental group of \(X^\phi\) is a normal index \(\ell\) subgroup of the fundamental group of \(X\).
\end{fact}
The first part is by \cite[Theorem 5.3]{Surowski1984}. The ``moreover'' is by \cite[Proposition 1.39, item (a)]{Hatcher2002}. Before proving the proposition, we will also use the following claim in the analysis.
\begin{claim} \label{claim:any-chain-good}
        Let \(\Gamma\) be a subgroup of \(\Gamma_0\) above such that \(\widehat{\Gamma} = H \times \prod_{q \ne \ell,p} Sp(2g,\mathbb{Z}_q)\), where \(H \leq H_1\) is a finite index subgroup of \(H_1\), the pro-\(\ell\) group constructed above. Then
        \begin{enumerate}
            \item \(\Gamma\) has no proper subgroup of index \(< m\).
            \item There exists a normal subgroup \(\Gamma' \trianglelefteq \Gamma\) of index \(\ell\).
            \item Every normal subgroup \(\Gamma' \trianglelefteq \Gamma\) of index \(\ell\) has that \(\widehat{\Gamma'} = H' \times \prod_{q \ne \ell,p}Sp(2g,\mathbb{Z}_q)\) for some \(H' \trianglelefteq H\) of index \(\ell\). 
        \end{enumerate}
\end{claim}

\begin{proof}[Proof of \pref{claim:any-chain-good}]
    The proof of the first item is just to repeat the proof of \pref{prop:no-small-subgroups}. The proof for the second item is by observing that \(H\), as a finite index subgroup of \(H_1\), is also pro-\(\ell\), and therefore a maximal subgroup \(H' \leq H\) is normal in \(H\) and has index \(\ell\). Therefore, the group \(\Gamma' \leq \Gamma\) whose profinite completion is \(\widehat{\Gamma'} = H' \times \prod_{q \ne \ell,p} Sp(2g,\mathbb{Z}_q)\) is an index \(\ell\) subgroup in \(\Gamma\) (because of \pref{prop:profinite-correspondence} and \([\widehat{\Gamma}: \widehat{\Gamma'}] = \ell\)). It is normal because \(H' \trianglelefteq H\) which implies that \(\widehat{\Gamma'} \trianglelefteq \widehat{\Gamma}\), and this in turn shows that \(\widehat{\Gamma'} \cap \Gamma_0 \trianglelefteq \widehat{\Gamma} \cap \Gamma_0\), but these are equal to \(\Gamma', \Gamma\) respectively by \pref{prop:profinite-correspondence}.

    Finally, let us prove the third theorem. Let \(\Gamma' \trianglelefteq \Gamma\) be a normal subgroup of index \(\ell\). Therefore the same holds for \(\widehat{\Gamma'} \trianglelefteq \widehat{\Gamma}\). Thus the quotient map \(\psi: \widehat{\Gamma} \to \widehat{\Gamma} / \widehat{\Gamma'}\) is non-trivial and by \pref{claim:cont-homom-product}, this implies that \(\psi\) is non-trivial when restricted to one of the factors of the product \(\widehat{\Gamma} = H \times \prod_{q \ne \ell,p} Sp(2g,\mathbb{Z}_q)\)\footnote{Formally \pref{claim:cont-homom-product} is about the non-trivial homomorphism that a proper subgroup induces from \(\widehat{\Gamma}\) to the group of symmetries. In the \emph{normal} subgroup case, one can repeat the same proof for the quotient map.}. However, for every factor \(Sp(2g,\mathbb{Q}_q)\) the homomorphism \emph{is trivial}. Indeed, fix \(K = Sp(2g,\mathbb{Q}_q)\), and assume for sake of contradiction that \(\psi|_K\) is not trivial. As \(Im \psi \cong \mathbb{Z}/\mathbb{Z}_\ell\), if it is not trivial, then the kernel of \(\psi|_K\) is an index \(\ell\) subgroup inside \(K\) and because \(\ell\) is prime this is a maximal subgroup, and so it contains the Frattini subgroup of \(K\), which we denoted \(\Phi(K)\). In other words, \(ker(\psi|_K)/\Phi(K)\) is a normal subgroup of index \(\ell\) inside \(K/\Phi(K)\). But by \cite{Weigel1996}, \(K/\Phi(K) \cong PSp(2g,\mathbb{F}_q)\) which is simple and whose order is not a prime when \(g \geq 2\), thus it has no normal subgroup of index \(\ell\) reaching a contradiction.
    

    Thus \(\widehat{\Gamma'} \leq H' \times \prod_{q \ne \ell,p} Sp(2g,\mathbb{Z}_q)\) and because both subgroups have the same finite index \(\ell\) inside \(\widehat{\Gamma}\), we have that \(\widehat{\Gamma'} = H' \times \prod_{q \ne \ell,p} Sp(2g,\mathbb{Z}_q)\).
\end{proof}

\begin{proof}[Proof of \pref{prop:poly-constructible}]
    Let \(\Gamma = \Gamma_j \leq \Gamma_0\) be a subgroup of the chain of groups defined above that acts properly cocompactly on \(\tilde{C}_g(\mathbb{Q}_p)\) (there exists such a subgroup by \pref{claim:eventually-no-close-vertices-in-orbit}). The algorithm we present is the following.
\begin{algorithm} \label{alg:find-complex}
    Input: Number of vertices \(n\).
    \begin{enumerate}
        \item Set \(X_0 := \Gamma \setminus \tilde{C}_g(\mathbb{Q}_p)\) and \(i:=0\).
        \item While \(|X_i(0)| < n\):
        \begin{enumerate}
            \item Find an arbitrary cocycle \(\phi_i: \dir{X_i}(1) \to \mathbb{Z}/\ell \mathbb{Z}\) that is not a coboundary. If there is no such cocycle, output `FAIL'.
            \item Construct \(X_{i+1} := X_i^\phi\) and increment \(i:=i+1\).
        \end{enumerate}
        \item Output \(X_{i}\).
    \end{enumerate}
\end{algorithm}

We will show the following to prove the proposition.
\begin{enumerate}
    \item Assuming the algorithm has not output `FAIL' up to the \(i\)-th iteration, the complex \(X_{i+1}\) is a quotient of \(\tilde{C}_g(\mathbb{Q}_p)\) with no proper covers of index \(<m\).
    \item The algorithm never outputs `FAIL'. That is, for every \(i \geq 0\), there exists a cocycle \(\phi_i: \dir{X_i}(1) \to \mathbb{Z}/\ell \mathbb{Z}\) that is not a coboundary.
    \item The algorithm terminates in \(\poly(n)\) time.
\end{enumerate}

\paragraph{Item 1}We prove by induction on \(i\) that \(X_i = \Gamma_i' \setminus \tilde{C}_g(\mathbb{Q}_p)\) for some \(\Gamma_i' \leq \Gamma_0\) such that its profinite completion \(\widehat{\Gamma_i'} = H_i' \times \prod_{q \ne \ell,p} Sp(2g,\mathbb{Q}_q)\) as in \pref{claim:any-chain-good}. Then we will conclude by \pref{claim:any-chain-good} that \(\Gamma_i'\) has no proper subgroup of index \(<m\), which by \pref{fact:connected-covers-and-subgroups} implies that \(X_i\) has no non-trivial \(m'\)-covers for \(m' < m\).

For \(X_0\) this is simply by the choice of complex. Assuming this is true for \(X_i\), then \(X_{i+1} = X_i^{\phi_i}\) is a connected cover by \pref{fact:covers-from-l-abelian-cohomology}. Thus by this fact its fundamental group is an index \(\ell\) normal subgroup \(\Gamma_{i+1}' \trianglelefteq \Gamma_i'\). As a connected cover of a quotient of \(\tilde{C}_g\), this complex is also a quotient of \(\tilde{C}_g(\mathbb{Q}_p)\), that is, \(X_{i+1} = \Gamma_{i+1}' \setminus \tilde{C}_g(\mathbb{Q}_p)\). The profinite completion of  \(\Gamma_{i+1}'\) is \(\widehat{\Gamma_{i+1}'} = H_{i+1}' \times \prod_{q \ne \ell,p} Sp(2g,\mathbb{Q}_q)\) as in \pref{claim:any-chain-good}.

\paragraph{Item 2}Fix \(X_i\) and recall that its fundamental group is some \(\Gamma_i' \leq \Gamma_0\) whose profinite completion \(\widehat{\Gamma_i'} = H_i' \times \prod_{q \ne \ell,p} Sp(2g,\mathbb{Q}_q)\) is as in \pref{claim:any-chain-good}. By \pref{claim:any-chain-good} the group \(\Gamma_i'\) has a normal subgroup \(\Gamma' \trianglelefteq \Gamma_{i}'\) of index \(\ell\). By \cite[Theorem 5.3]{Surowski1984} the normal subgroup \(\Gamma'\) induces a connected \(\ell\)-cover that can be realized via the cocycle construction  described above, using a cocycle \(\phi_i' : \dir{X_i}(1) \to \Gamma_{i}'\) with the action of \(\Gamma_i'\) on \(\Gamma_i' / \Gamma'\) by left multiplication.  Because the subgroup \(\Gamma' \trianglelefteq \Gamma_{i}'\) is normal, this realization is the same as the realization given by \(\phi_i : \dir{X_i}(1) \to \Gamma_i' / \Gamma'\) with \(\phi_i(uv) = \phi_i'(uv) \Gamma'\). The index \(\ell\) is prime, thus \(\Gamma_i' / \Gamma' \cong \mathbb{Z}/\ell \mathbb{Z}\) so \(\phi_i \in Z^1(X_i,\mathbb{Z}/\ell \mathbb{Z})\). It is not a coboundary because for coboundaries the construction \(X^\phi\) is not connected.

\paragraph{Item 3} There are at most \(\left \lceil \frac{\log n}{\log \ell} \right \rceil\) iterations of the loop in \pref{alg:find-complex}, because the size \(|X_{i+1}(0)| = \ell|X_i(0)|\). As we saw above the construction time of \(X_{i+1}\) is \(O_\ell(|X_{i+1}|)\). By the first item all these complexes are quotients of \(\tilde{C}_g\) and are therefore bounded degree complexes of a fixed dimension \(g\), hence this is \(O_{\ell,g}(|X_{i+1}(0)|) = O(n)\).
    
Finding a \(\phi_i: \dir{X_i}(1) \to \mathbb{Z}/\ell \mathbb{Z}\) that is not a coboundary is done via linear algebra. The group \(\mathbb{Z}/\ell \mathbb{Z}\) is actually a field. The space of cocycles is a linear subspace of all functions \(\phi\) on the edges satisfying
    \[ \phi(uv) + \phi(vw) + \phi(wu) = 0\]
for every \(\set{u,v,w} \in X_i(2)\). There are \(O(n)\) edges and triangles so finding the space of \emph{all} cocycles could be done in \(O(n^3)\)-time via Gauss elimination. Moreover, the subspace of coboundaries is also a linear subspace \(B^1(X,\mathbb{Z}/\ell \mathbb{Z}) = Im(\delta_0)\), so finding \(\phi \in Z^1(X,\mathbb{Z}/\ell \mathbb{Z}) \setminus B^1(X,\mathbb{Z}/\ell \mathbb{Z})\) can be done in \(O(n^3)\) by linear algebraic means.
\end{proof}

We end this subsection with an open question. Let \(\set{X_N}\) be a family of simplicial complexes and without loss of generality identify the vertices of every complex \(X_N\) with \(\set{1,2,\dots,|X_N(0)|}\). We say that the family \(\set{X_N}\) is \emph{strongly explicit}, if there is an algorithm that takes as input \(N\), a vertex \(i \in X_N(0)\) and a number \(j\). The algorithm runs in \(\poly \log |X_N(0)|\) and outputs the \(j\)-th neighbor of \(i\) in the underlying graph of \(X_N\), where we order the neighbors with the natural number ordering \(<\). 

\pref{alg:find-complex} \emph{does not} suffice for this definition because in order to find \(X_{i+1}\) we need a full description of \(X_i\). In comparison, the quotients of \(\tilde{A}_n\) in \cite{LubotzkySV2005b} are proven there to be a strongly explicit family by using the fact that they are also Cayley graphs. 

Is there a strongly explicit construction of the complexes we present as well?

\section{Expansion Properties of the Symplectic Building} \label{sec:expansion-of-symplectic-building}
In this section we prove that the symplectic spherical building is a \(\Omega(1)\)-coboundary expander and use this to show that quotients of the affine symplectic building are \((\d_1,\exp(-O(\sqrt{\d_1})))\)-swap cocycle expanders.

\subsection{Coboundary expansion of the symplectic spherical building}
In this subsection we show that the spherical symplectic building is a \(\Omega(1)\)-coboundary expander for \(1\)-cochains. 
\begin{theorem} \label{thm:symplectic-building-coboundary-expansion}
There exists an absolute constant $\beta>0$ such that for all $g\in\N$ and all finite fields \(h^1(C_g) \geq\beta\).
\end{theorem}
We note that \cite{LubotzkyMM2016} already gave a bound that depends on the rank \(g\) (for coefficients in \(\mathbb{F}_2\), but the same technique applies for all coefficients). So without loss of generality we may assume that \(g\) is sufficiently large.

\begin{proof}[Proof of \pref{thm:symplectic-building-coboundary-expansion}]
    By \pref{thm:coboundary-expansion-from-colors} it suffices to give a lower bound to a noticeable fraction of color restrictions. The colors we use are
    \[ \Col = \sett{\set{i_0, i_1, i_2} \in \binom{[2g]}{3}}{i_1 \geq 2i_0, i_2 \geq 3i_1,  17i_1 \leq 2g }.\]
We will prove below the following lemma.
\begin{lemma} \label{lem:color-restriction-of-symplectic-is-cob-exp}
    Let \(I \in \Col\). Then \(h^1(C_g^I) \geq \Omega(1)\).
\end{lemma}

Ordering the colors of \(\Col\) by their size, we have that
\[\Col \supseteq \sett{\set{i_0,i_1,i_2}}{i_0 \in [0,0.01g], i_1 \in  [0.02g,0.1g], i_2 \in [0.3g,g]}\]
so \(|\Col| = \Omega \left ( \binom{n}{3} \right )\), or equivalently \(p = \frac{|\Col|}{\binom{n}{3}} = \Omega(1)\). Thus by \pref{thm:coboundary-expansion-from-colors} \(h^1(S) = \Omega(1)\).
\end{proof}

Our main effort will be proving \pref{lem:color-restriction-of-symplectic-is-cob-exp}. We will do so using non-abelian cones. For constructing them we will need the following claim, that will imply that the diameter of \(C_g^I\) is constant.

\begin{claim} \label{claim:symplectic-diameter}
    Let \(i < j\), and let \(G\) be the bipartite containment graph between \(C_g[i]\) and \(C_g[j]\), namely connecting $u\in C_g[i]$ to $v\in C_g[j]$ iff $u\subset v$ as subspaces. Let \(H\) be the bipartite containment graph between \(C_g'[i]\) and \(C_g'[j]\) where \(C_g'[x] = \sett{u \subseteq V}{dim(u)=x}\) (i.e. all subspaces, not just isotropic subspaces).
    Then for every \(v_1,v_2 \in S[i]\) it holds that \(\dist_G(v_1,v_2) \leq 2\dist_H (v_1,v_2)\). In particular, \(diam(G) \leq 2 diam(H)\).
\end{claim}
For intuition, the reader is encouraged to verify the claim for the case where $i=1$ and $j=2$.

\begin{proof}[Proof of \pref{claim:symplectic-diameter}]
The claim will follow if we show that for every \(v_1,v_2 \in C_g[i]\) that are of distance \(2\) in \(H\) have distance at most \(4\) in \(G\). 
   We will use the following easy fact that holds for every two subspaces \(u,u'\),
    \begin{equation}\label{eq:dimsum}
        \dim(u+u') = \dim(u)+\dim(u')-\dim(u\cap u').
    \end{equation}
    Let \(u_1\supset v_1\) be a $j$-dimensional isotropic subspace. Let $w = u_1+v_2$. By \eqref{eq:dimsum},
    \[\dim(w) = \dim(u_1) + \dim (v_2) - \dim (u_1\cap v_2) = j + i - \dim(v_1\cap v_2)\]
    We claim that there exists a \(j\)-dimensional isotropic subspace \(u_2 \subseteq w\) such that \(u_2 \supseteq v_2\):
    a corollary from Witt's theorem says that every maximal isotropic subspace has the same dimension (see e.g. \cite[Theorem 3.10]{Artin1957}). We will use this corollary on the bilinear form restricted to \(w\). In particular, \(u_1 \subseteq w\) is an isotropic subspace in \(w\) of dimension \(j\), so a maximal isotropic subspace inside \(w\) has dimension \(\geq j\). Thus there is also an isotropic subspace \(v_2 \subseteq u_2 \subseteq w\) of dimension \(j\) (that is contained in a maximal isotropic subspace that contains \(v_2\)).
    
    Next, observe that 
        \[\dim (u_1\cap u_2)= 2j-\dim(u_1+u_2) \geq 2j-\dim(w) \geq 2j - (j+i-\dim(v_1\cap v_2) =j-i+
    \dim(v_1\cap v_2)\geq i
    \]
    where the last inequality follows from the fact that $\dist_H(v_1,v_2)=2$ so $j\geq \dim(v_1+v_2) = 2i-\dim(v_1\cap v_2)$, again using \eqref{eq:dimsum}.
    This means that there some $i$-space $v\subseteq u_1\cap u_2$ and a length $4$ path $v_1\to u_1\to v\to u_2\to v_2$ in $G$ as needed.
\end{proof}

\begin{proof}[Proof of \pref{lem:color-restriction-of-symplectic-is-cob-exp}]
    The symplectic group induces a transitive action on the triangles of \(C_g^I\) (for every \(I \in \Col\)), therefore by \pref{lem:group-and-cones}, it is enough to find a constant sized cone for \(C_g^I\), and we can conclude that \(h^1(C_g^I) = \Omega(1)\).

    We define the following cones. Fix \(v_0 \in C_g[i_0]\) as a base point.

    For any $u\in C_g[i_0]$, $\dim(v_0+u) \leq 2i_0 \leq i_1$ so there is some $i_1$-space (not necessarily isotropic) that contains both $v_0$ and $u$. By \pref{claim:symplectic-diameter}, there must be a $4$-path between them in $G$. This implies that there exists a \(5\)-path from \(v_0\) to any \(u \in C_g^I\) so that for any \(u' \ne u\) in this path it holds that \(u' \notin C_g[i_2]\) (first go from \(u\) to some \(i_0\)-dimensional subspace of \(u'\) and then traverse from that subspace to \(v_0\) in four steps).

    Hence, for every \(u \ne v_0\) we fix an arbitrary shortest path \(P_u\) from \(v_0\) to \(u\) so that for every vertex \(u' \ne u\) in the path, \(u' \in C_g[i_0] \cup C_g[i_1]\).

    Now for every edge \(\set{w_1,w_2} \in C_g^I(1)\) we need to define a contraction of \(\cont = P_{w_1} \circ (w_1,w_2) \circ P_{w_2}^{-1}\) to the trivial loop around \(v_0\).
    We begin by showing how to contract \(\cont\) assuming that \(w_1,w_2 \in C_g[i_0] \cup C_g[i_1]\) and then we show how to reduce from the general case to this case.
    Observe that \(\cont\) is a cycle of length \(\leq 11\) with at most \(5\) subspaces of dimension \(i_1\) (and these contain all subspaces of dimension \(i_0\)), so the sum of all subspaces in $\cont$ has dimension \(\leq 5i_1\).
    We will contract $\cont$ in four steps:
    \begin{enumerate}
        \item We find an isotropic subspace \(u^*\) of dimension $6i_1$ that is perpendicular to every subspace participating in \(\cont\). Here we use the assumption that the sum of all subspaces in the cycle has low enough dimension \(\leq 5i_1\). It follows that for every \(v \in \cont\), \(u^* + v\) is isotropic.
        \item Fix an arbitrary ``middle vertex'' \(u^{**} \subseteq u^{*}\) such that \(u^{**}\in C_g[i_0]\).
        \item We will connect every vertex \(v_j\) of color \(i_0\) in the cycle to \(u^{**}\) by finding an isotropic subspace \(u_j \in C_g[i_1]\) that contains \(v_j+u^{**} \subseteq u_j\). We can do this since \(u^{**}\) is chosen to be isotropic and perpendicular to \(v_j\).
        \item We separately tile every cycle of the form \((u^{**},u_j,v_j,v_{j+1},v_{j+2},u_{j+2},u^{**})\). Here we use the fact (proven below) that the sum of the subspaces in this cycle is contained in the subspace \(u^* + v_{j+1}\) which is isotropic.
    \end{enumerate}
    The first step is the content of this claim, which we postpone to later.
    \begin{claim} \label{claim:large-perp-space}
        There exists an isotropic subspace \(u_{\bot}\) of dimension \(6i_1\) such that for every \(x \in \bigoplus_{v \in \cont}v\) and \(y \in u_{\bot}\), \(\iprod{x,y} = 0\).
    \end{claim}
    In particular, we can find an isotropic subspace \(u^* \subseteq u_{\bot}\) of dimension \(i_1\) that also intersects all subspaces in \(\cont\) trivially.
    Choose some arbitrary \(i_0\)-subspace \(u^{**} \subseteq u^*\) as in the second step. 
    
    Let us relabel \(\cont = (v_0,v_1,v_2,v_3,\dots,v_m,v_0)\) for \(m \leq 11\), where \(v_{2j} \in C_g[i_0]\) and \(v_{2j+1} \in C_g[i_1]\). We note that for any \(v_{2j}\), there exists some isotropic space \(u_{2j} \in C_g[i_1]\) so that \(v_{2j}, u^{**} \subseteq u_{2j}\): we start from \(u^{**} \oplus v_{2j}\) which is isotropic since \(u^{**},v_{2j}\) are both isotropic and perpendicular to one another, and then we add independent vectors to it from \(u^{*}\) until getting an \(i_1\)-dimensional subspace \(u_{2j}\).

    We denote by
        \begin{align}
            \cont' := &(v_0,u_0,u^{**},u_0,v_0,v_1, \\
            & v_2,u_2,u^{**},u_2,v_2,v_3 \\
            & v_4,u_4,u^{**},u_4,v_4,v_5, \\
            & \dots \\
            & v_m,v_0),
        \end{align}
        i.e. from every \(v_{2j} \in \cont\) we add \((v_{2j},u_{2j},u^{**},u_{2j},v_{2j})\) before going to \(v_{2j+1}\).
        and note that \(\cont \overset{(BT)}{\sim} \cont'\) so we can contract \(\cont'\) instead of \(\cont\). We have just completed the third step.
        
        We note that \(\cont'\) (shifted to start and end at \(w_0\)) is composed from a constant number of loops of the form \(\cont_j = (u^{**},u_{2j},v_{2j},v_{2j+1},v_{2j+2},u_{2j+2},u^{**})\). Thus if we can contract any such loop to the trivial loop with a constant number of steps, then we can find a contraction of all \(\cont\) with a constant number of steps.

        Indeed, fix \(\cont_j\), and note that by construction \(v_{2j+1} \oplus u^* \supset \bigoplus_{y \in \cont_j}y\). This is because \(v_{2j},v_{2j+2} \subseteq v_{2j+1}\) and by construction, the vectors in \(u_{2j},u_{2j+2}\) and \(u^{**}\) all lie in \(v_{2j+1} \oplus u^*\) as well. We note that \(u^* \oplus v_{2j+1}\) is isotropic since both subspaces are isotropic and perpendicular to one another. Thus there is an \(i_2\)-dimensional subspace \(x_j\) that contains all subspaces in \(\cont_j\) (there is a \(g\)-dimensional maximal isotropic space that contains \(v_{2j+1} \oplus u^*\) so we take an \(i_2\)-subspace of it containing this subspace as well. Here we have used $ i_0+i_1\leq 3i_1\leq i_2$). Hence for every edge \((a,b)\) in \(\cont_j\) the triangle \(\set{a,b,x_j}\) is in  \( S^I(2)\). We have shown that with a constant number of triangles \(\cont_j\) could be contracted to \[(w_0,x_j,u_{2j},x_j,v_{2j},x_j,v_{2j+1},x_j,v_{2j+2},x_j,u_{2j+2},x_j,w_0)\]
        which is equivalent to the trivial loop around \(w_0\) by a sequence of \((BT)\) relations.

        \medskip

        For the general case we can do the following contraction to a path that contains only subspaces from \(i_0,i_1\) as above (we also recommend looking at \pref{fig:general-case-reduction}).
        For the case where \(\set{w_1,w_2}\) is such that \(w_2 \in C_g[i_2]\) and \(w_1 \in C_g[i_0]\) we denote by \(w'\) the other neighbor of \(w_2\) in $\cont$ and note that \(w' \in C_g[i_0]\). Thus we can find an \(i_1\)-dimensional (isotropic) subspace \(w_2' \subseteq w_2\) such that \(w_2' \supseteq w_1 + w'\) (by assumption that \(i_1 \geq 2i_0\)). Thus using the triangles \(\set{w_1,w_2,w_2'}, \set{w',w_2,w_2'} \in C_g^I(2)\) we can contract \((w_1,w_2,w')\) to \((w_1,w_2',w')\) removing the subspace \(w_2\) resulting in the previous case (using \(2\) triangles). 
        
        For the case where \(w_1 \in C_g[i_1], w_2 \in C_g[i_2]\) we denote by \(w'' \in C_g[i_0]\) the other neighbor of \(w_1\) in $\cont$. By containment we have that \(\set{w'',w_1,w_2} \in C_g^I(2)\) so we can contract \((w'',w_1,w_2)\) to \((w'',w_2)\) using a single triangle. We then do the same contraction as above (using \(2\) triangles) to remove \(w_2\) and get to a cycle of the same length with vertices only from \(i_0,i_1\). 
        \begin{figure}
            \centering
            \includegraphics[scale=0.34]{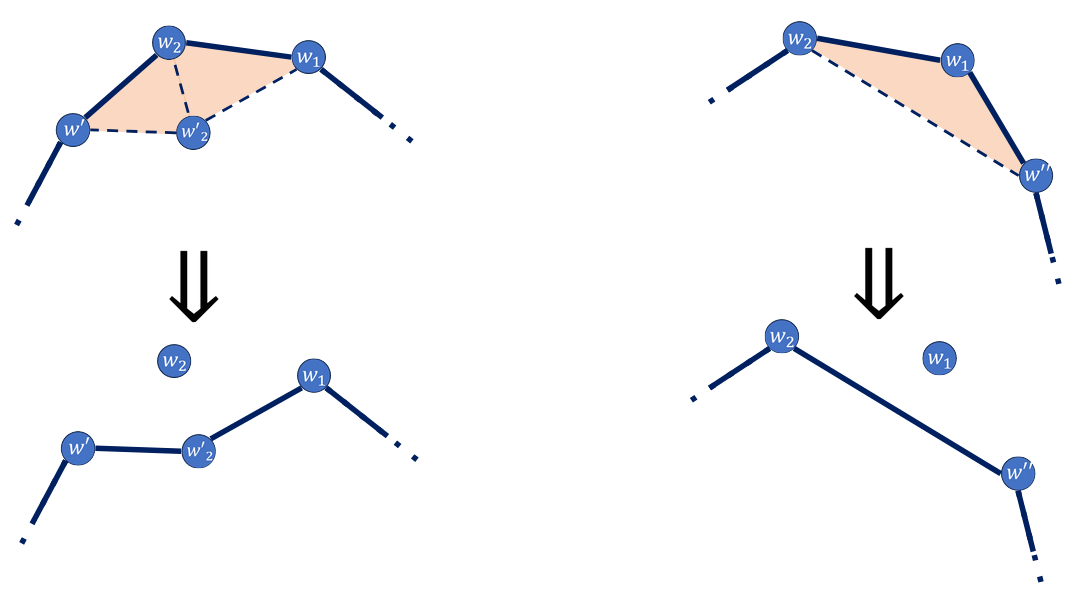}
            \caption{The case where \(w_1 \in C_g[i_0]\) is on the left. The case where \(w_1 \in C_g[i_1]\) is on the right.}
            \label{fig:general-case-reduction}
        \end{figure}
\end{proof}

\begin{proof}[Proof of \pref{claim:large-perp-space}]
    We find a basis for \(u_{\bot}\) one vector at a time as follows. Let \(B_0 = \emptyset\), and for \(j=1,2,\dots 6i_1\) the set \(B_j\) will denote the basis vectors that we found thus far. Let \(t_j = span(B_j) \oplus \left (\bigoplus_{v \in \cont}v \right )\). We note that \(dim(t_j) \leq j + dim(t_0)\). Moreover, \(t_0 = \bigoplus_{v \in \cont}v = \bigoplus_{v \in \cont: v \in S[i_1]}v\) since every subspace in this path is contained in an \(i_1\)-dimensional subspace. As we saw, there are at most \(5\) such spaces, therefore \(dim(t_j) \leq j + 5 i_1 \leq 11i_1 \).

    In particular, the subspace perpendicular to \(t_j\) always contains at least \(2g - 11i_1 \) independent vectors. This is greater or equal \(6i_1\) from the assumption that \(17i_1 \leq 2g\) which is due to the fact that $\set{i_0,i_1,i_2}\in \Col$. Thus given \(B_j\) we take \(B_{j+1} = B_j \dunion \set{x_{j+1}}\) where \(x_{j+1}\) is perpendicular to \(t_j\) and independent from \(B_j\). We note that by construction the vectors of \(B_j\) are perpendicular to one another, so \(u_{\bot}=span(B_{6i_1})\) is indeed an isotropic subspace of dimension \(6i_1\).
\end{proof}

\subsubsection{Bounds for additional color restrictions}
Towards proving swap cocycle expansion we show that various color restrictions also have coboundary expansion. Our goal is to prove the following lemma which we will use later on in \pref{sec:proof-of-faces-complex-lower-bound}. 
\begin{lemma} \label{lem:inductive-bound-on-S-I}
    Let \(I = \set{i_0<i_1<i_2<i_3} \in \binom{[2g]}{4}\) such that \(i_0<i_1-79\).
    Then \[h^1(C_g^I) \geq \exp \left (O \left (\log \left (\frac{i_3}{i_1-i_0} \right) \min \left (\log(\frac{i_1-i_0}{i_3-i_0}), \log (\frac{i_1-i_0}{i_0}) \right ) \right ) \right).\]
\end{lemma}

Our starting point is triples of colors as in \pref{lem:color-restriction-of-symplectic-is-cob-exp}, i.e. colors such that \(i_1 \geq 2i_0, i2 \geq 3i_1\) and \(17i_1 \leq 2g\). Then we will gradually relax the requirements from the the colors until we reach \pref{lem:inductive-bound-on-S-I}. For technical reasons, we will need to introduce a fourth color, since we intend to use \pref{thm:coboundary-expansion-from-colors}. The theorem \pref{thm:coboundary-expansion-from-colors} says that adding another color decreases the coboundary expansion by at most a multiplicative constant.

We start by removing the requirements from \(i_1\).
\begin{claim} \label{claim:color-restriction-lowest-smaller-than-highest}
    Let \(I=\set{i_0,i_1,i_2,i_3}\) be such that \(80i_0 \leq i_3\) then 
    \[ h^1(C_g^I) \geq \Omega \left (\min \set{\frac{i_1-i_0}{i_0}, \frac{i_3-i_1}{i_3}} \right ).\]
\end{claim}

\begin{proof}[Proof of \pref{claim:color-restriction-lowest-smaller-than-highest}]
    First, we notice that by the \pref{thm:coboundary-expansion-from-colors}, \(h^1(C_g^I) = \Omega(C_g^{\set{i_0,i_1,i_3}})\). The reason is that the set \(\set{i_0,i_1,i_3}\) is a constant fraction (\(\frac{1}{4}\)-fraction) of the subsets of \(I\) of size \(3\). Similarly, \(h^1(C_g^I) = \Omega(C_g^{\set{i_0,i_2,i_3}})\). Therefore, we obtain that if \(i_1 \geq 2i_0, i_3 \geq 3i_1\) and \(17i_1 \leq 2g\) then by \pref{lem:color-restriction-of-symplectic-is-cob-exp} \(h(C_g^I) = \Omega(1)\).   
    Let \(I'=\set{i_0,i_1'=2i_0,i_2,i_3}\).
    By \pref{claim:color-swap}, \(h^1(S^I) =\Omega( h^1(S^{I'})) \cdot \min_{v \in {S[i_1']}}h^1(S_v^{I})\).
    By \pref{lem:color-restriction-of-symplectic-is-cob-exp}, \(h^1(S^{I'}) = \Omega(1)\) (if \(i_1' =2i_0 \leq i_3/40\) then in particular \(17i_1' \leq 2g\) so the lemma applies in this case). As for \(S_v^{I}\), note that this is the join of the complex whose vertices are subspaces that are contained in \(v\), and isotropic subspaces that contain \(v\) (for clarity we emphasize that one of these joins may be one sided, i.e. contain only subspaces of \(v\) of dimension \(i_0\), or only isotropic subspaces of dimension \(i_3\)). \pref{claim:cob-exp-of-join} tells us that in this case, the coboundary expansion is bounded by \(O\) of one over the diameter, so let us find the diameter in this case.
    
    If \(i_1' < i_1\) then by \pref{claim:symplectic-diameter} the complex that has isotropic spaces that contain \(v\) has diameter that is at most \(2\) times the diameter of the complex of \emph{all} subspaces that contain \(v\). This complex is isomorphic to the complex that contains subspaces of dimension \(i_1-i_1', i_2-i_1'\) and \(i_3-i_1'\) and by \pref{claim:diam-of-spherical-building} it has diameter at most \(O(\frac{i_3-i_1'}{i_3 - i_1}) =  O(\frac{i_3}{i_3-i_1})\).
    
    Otherwise, the complex that has (all) subspaces contained in \(v\) has diameter \(O(\frac{i_0}{i_1-i_0})\): Indeed, this complex contains all subspaces of dimensions \(i_0,i_1\) inside a space of dimension \(i_1' = 2i_0\). This is isomorphic to the graph of subspaces of dimensions \(2i_0 - i_1, 2i_0-i_0\) inside a space of dimension \(2i_0\), the isomorphism goes from a subspace of dimension \(i_j\) to its perpendicular subspace with respect to the standard bilinear form (this isomorphism \(v \mapsto v^\perp\) takes subspaces of dimension \(x\) to dimension \(i_1'-x\), and reverses the containment relation). Thus its diameter is \(O(\frac{2i_0 - i_0}{(2i_0-i_0)-(2i_0-i_1)}) = O(\frac{i_0}{i_1-i_0})\) by \pref{claim:diam-of-spherical-building}.
    
    In both cases the diameter is at most the maximum between the two expressions. By \pref{claim:cob-exp-of-join}, \(h^1((C_g)_v^{I}) = \Omega(\min\set{\frac{i_0}{i_1-i_0},\frac{i_3-i_1}{i_3}})\). The claim follows.
\end{proof}

Our main effort in going from \pref{claim:color-restriction-lowest-smaller-than-highest} to \pref{lem:color-restriction-of-symplectic-is-cob-exp} is to relax the inequality on \(i_0\). To do so we will apply \pref{claim:color-swap} multiple times to go from \(i_0\) to \(i_0'\), to \(i_0''\) etc. until we finally reach a complex for which we can apply \pref{claim:color-restriction-lowest-smaller-than-highest}. Given \(i_0\), we would like to choose \(i_0'\) as small as possible while allowing us to use \pref{claim:color-swap}. It will be apparent in the proof of \pref{lem:color-restriction-of-symplectic-is-cob-exp} that the following function \(T(x)\) will give us the smallest possible index.

Let \(T(x)=T_{i_3}(x) = \max \set {1,\left \lceil \frac{80x-i_3}{79} \right \rceil}\). We denote by \(T^m(x)\) the $m$-fold composition of \(T\) (and let \(T^0(x)=x\)). Let us note that for every \(x < i_0\), \(T(x) < x\), which implies that \(T^{m+1}(x) \leq T^m (x)\) for every \(m \geq 0\), and the inequality is sharp while \(T^m(x) \ne 1\).

We record the following claim that bounds the number \(T\)'s needed to go from \(x=i_0\) to \(x=1\). We prove it after proving the lemma.
\begin{claim} \label{claim:bound-on-number-of-iterations}
    Let \(n = n(I)= \log \left ( \frac{i_3}{i_3-i_0} \right )\). Then \(T^{n}(i_0) = 1\).
\end{claim}
\begin{proof}[Proof of \pref{lem:inductive-bound-on-S-I}]
    Let \(n(I) = \log \left ( \frac{i_3}{i_3-i_0} \right )\). By \pref{claim:bound-on-number-of-iterations},  \(T_{i_3}^{n(I)}(i_0) = 1 \leq 80i_3\). 
    In every iteration we will show the following guarantee below.
    \begin{proposition}\label{prop:one-step-in-induction}~
        Let \(c = \Omega(\min \set{\frac{i_1-i_0}{i_0}, \frac{i_3-i_1}{i_3}})\). Then for every \(m \geq 0\),
        \[h^1(C_g^{\set{T^m(i_0),i_1,i_2,i_3}}) \geq c \cdot h^1(C_g^{\set{T^{m+1}(i_0),i_1,i_2,i_3}}).\]
    \end{proposition}
    By using \pref{prop:one-step-in-induction} \(n(I)\) times we have that
    \begin{align*}
        h^1(C_g^I) &\geq c^{n(I)} h^{1}(S^{1,i_1,i_2,i_3}) \\
        &\geq \frac{i_1}{i_3} \cdot c^{n(I)}\\
        &\geq \exp \left (\log \left (\frac{i_1}{i_3} \right ) + n(I)\log c \right) \\
        &\geq \exp \left (O \left (\log \left (\frac{i_3}{i_1-i_0} \right) \min \left (\log(\frac{i_3-i_1}{i_3}), \log (\frac{i_1-i_0}{i_0}) \right ) \right ) \right).        
    \end{align*}
        
The second inequality is due to \pref{claim:color-restriction-lowest-smaller-than-highest} (where we also use the fact that \(\frac{i_1}{i_3} \leq \frac{i_1-1}{1}\) is the minimum in the expression).

\begin{proof}[Proof of \pref{prop:one-step-in-induction}]
        Fix \(m \geq 0\). and let \(I_m = \set{T^m(i_0),i_1,i_2,i_3}\), \(I_{m+1} = \set{T^{m+1}(i_0),i_1,i_2,i_3}\). The complex \(S^{I_m}\) is a high dimensional expander so by \pref{claim:color-swap}
        \[h^1(C_g^{I_m}) \geq h^1(C_g^{I_{m+1}}) \Omega(\min_{v \in {C_g[T^{m+1}(i_0)]}}h^1((C_g)_v^{I_m})).\]
        It remains to show that for every \(v \in C_g[T^{m+1}(i_0)]\), \(h^1((C_g)_v^{I_m}) = \Omega(\min \set{\frac{i_1-i_0}{i_0}, \frac{i_3-i_1}{i_3}})\).
        Fix \(v \in C_g[T^{m+1}(i_0)]\) and denote by \(J = \set{j_0,j_1,j_2,j_3}\) where \(j_0=T^{m}(i_0) - T^{m+1}(i_0)\) and for 
        \(t=1,2,3\), \(j_t = i_t - T^{m+1}(i_0)\). Let us here recall that \(T^{m+1}(i_0) \leq T^m(i_0)\) so these constants are indeed positive.
        We have seen in the second item of \pref{prop:isomorphism-of-symplectic-link} that \((C_g)_v^{I_{m+1}} \cong (C_{g'})^J\) where \(C_{g'}\) for \(g' = g-T^{m+1}(i_0)\).
        
        Let us show that we can use \pref{claim:color-restriction-lowest-smaller-than-highest} on \(C_g'\), i.e. that \(80j_0 \leq j_3\).
        By definition of \(T\),
        \[79 T^{m+1}(i_0)=79 T(T^{m}(i_0)) = \max \set{ 79\left \lceil \frac{80T^m(x)-i_3}{79} \right \rceil, 79} \geq 80T^{m}(i_0)-i_3\]
        where the inequality comes from the first term in the maximum.
        Equivalently this implies that \(80T^m(i_0) \leq 79T^{m+1}(i_0)+i_3\) so 
        \[80j_0 = 80\left( T^{m}(i_0) - T^{m+1}(i_0) \right) \leq i_3 - T^{m+1}(i_0) = j_3.\]
        Thus
        \(80j_0 \leq j_3\) and we are justified to use \pref{claim:color-restriction-lowest-smaller-than-highest} and deduce that \(h^1((C_g)_v^I) = \Omega( \min \set{\frac{j_1-j_0}{j_0},\frac{j_3-j_1}{j_3}} ) = \Omega( \min \set{\frac{i_1-i_0}{i_0}, \frac{i_3-i_1}{i_3}} )\). 
        The inequality \(\min \set{\frac{j_1-j_0}{j_0},\frac{j_1}{j_3}} ) = \Omega( \min \set{\frac{i_1-i_0}{i_0}, \frac{i_3-i_1}{i_3}} )\) uses the following:
        \begin{enumerate}
            \item The inequality \(\frac{j_1-j_0}{j_0} = \frac{i_1-i_0}{j_0} \geq \frac{j_1-j_0}{i_0}\).
            \item The inequality \(\frac{j_3-j_1}{j_3} = \frac{i_3-i_1}{i_3 - T^{m+1}(i_0)} \geq \frac{i_3-i_1}{i_3}\).
        \end{enumerate}
        In both inequlities we just increased the denominator.
    \end{proof}
\end{proof}

It remains to prove \pref{claim:bound-on-number-of-iterations}.
\begin{proof}[Proof of \pref{claim:bound-on-number-of-iterations}]
        Let \(\tilde{T}(x) = \frac{80x-i_3}{79}+1\). We note the following facts that allow us to bound the number of iterations needed to apply on \(\tilde{T}\) instead of \(T\):
        \begin{enumerate}
            \item The inequality \(T(x) \leq \tilde{T}(x)\) holds for every \(x<i_0\) while \(T(x) >1\).
            \item while \(x \leq i_0\), \(\tilde{T}(x)<x\).
            \item One can show by induction on \(m\) that while \(T^m(x) > 1\) (and therefore for all \(m'<m\), \(T^{m'}(x) \geq T^m(x)>1\)), \(T^m(x) \leq \tilde{T}^m(x)\). Indeed, for \(m=1\) this holds by the above observation. Assuming for \(m\) the proof for \(m+1\) is direct-
            \[T^{m+1}(x) \leq \tilde{T}(T^m(x))\leq \tilde{T}(\tilde{T}^m(x))=\tilde{T}^{m+1}(x).\]
        \end{enumerate}
        Thus if we show that after \(m=O(\log \frac{i_3}{i_3-i_0})\), it holds that \(\tilde{T}^m(i_0) \leq 1\), then it follows that \(T^m(i_0) = 1\).
        
        Indeed, solving a recursion relation yields
        \(\tilde{T}^m(i_0) = (i_0+79-i_3) \left (\frac{80}{79} \right )^m + (i_3-79)\). The term \(i_0+79-i_3<0\) by the assumption on \(i_0\). Thus,
        \[\tilde{T}^m(i_0) \leq 1 \Leftrightarrow m \geq \log_{\frac{80}{79}}\frac{i_3-80}{i_3-i_0+79} .\]
        The constant \(m \geq \log_{\frac{80}{79}}\frac{i_3-80}{i_3-i_0+79}\) so we are done.
\end{proof}

\subsubsection{Coboundary expansion of bounds of joins of spherical buildings} \label{sec:spherical-building-color-complexes}
In the next subsection we will prove swap coboundary expansion of the faces complexes of links of the affine symplectic building. These links are (isomorphic to) joins of complexes, some of type \(A\) and some of type \(C\). Towards a bound on the faces complex of some link, we will need to bound the coboundary expansion of the link itself (and its sub-links).

Towards this, let us introduce some notation for joins of complexes. Let \(\S_1,\S_2,\dots,\S_k\) be spherical buildings of either type \(A\) or type \(C\), of dimensions \(\ell_1,\ell_2,\dots, \ell_m\). Let \(\S = \bigvee_{i=1}^k \S_i\). Every such building is \((\ell_i+1)\)-partite, so \( \S \) is also a partite complex. We denote its colors by \((i,j)\) where \(i\) indicates the subcomplex \(\S_i\) and \(j\) indicates the color inside \(\S_i\).
For two distinct colors $(i_1,j_1)$ and $(i_2,j_2)$ define their distance to be the number of colors between them, \[dist((i_1,j_1),(i_2,j_2)) = |\sett{(i,j)}{(i_1,j_1)< (i,j) \leq (i_2,j_2)}|.\] 

\begin{definition}
    A set of colors $\set{(i_1,j_1)<\ldots<(i_r,j_r)}$ is {\em $(a_1,a_2)$-spread} if the distance of every two consecutive colors is between $a_1$ and $a_2$. Namely, letting $(i_0,j_0) = (0,0)$,
\[\forall t=0,\ldots, r-1,\qquad
a_1< dist((i_t,j_t),(i_{t+1},j_{t+1})) \leq a_2.
\]
\end{definition}
Let us denote \(\sum_{i=1}^k (\ell_i+1) = \d_1\) and fix \(\S\) to be the above join.
We prove the following lemma.
 
\begin{lemma}\label{lem:join-links}
Let $a_1=\frac{n}{(m(\d_1+1))^3}$ and $a_2 =\frac{100n \log (\d_1+1)}{(\d_1+1) m}$ and let
$c=\set{(i_1,j_1)<\ldots<(i_r,j_r)}$ be an $(a_1,a_2)$-spread set of colors. Let  $I \subset c$ be a set of $5$ colors, and let $w\in \S$ be such that $col(w) \subset {c\setminus I}$.  
Then  $h^1(\S_{w}^I)\geq exp(-O(\log ^2(\d_1)))$.
\end{lemma}
\begin{proof}[Proof of \pref{lem:join-links}]
    It is enough to prove that for any \(I' = \set{(i_1,j_1) < (i_2,j_2)< (i_3,j_3)<(i_4,j_4)}\subset I\), \(\S_{w}^{I'}\) is a \(\beta\)-coboundary expander. By \pref{thm:coboundary-expansion-from-colors}, it follows that \(h^1(\S_{w}^{I}) \geq \Omega(\beta)\). So from now on we just re-annotate $I = \set{(i_1,j_1) < (i_2,j_2)< (i_3,j_3)<(i_4,j_4)}$.
    The coboundary expansion $h^1(\S^I_{w})$ depends on $I$ and $w$. We address first the easier ``direct'' cases, and then move to the general case which is gradually reduced to the easier cases, via decomposition steps.

    If \(\S_{w}^{I}\) is a join of three complexes (or more), i.e. we can write \(\S_{w}^{I} = A_1 \vee (A_2 \vee A_3)\). The diameter of \((A_2 \vee A_3)\) is constant so by \pref{claim:cob-exp-of-join} \(h^1(\S_w^I) \geq \Omega(\frac{1}{diam(A_2\vee A_2)}) = \Omega(1)\). If \(\S_{w}^{I}\) is a join of exactly two complexes then one of the complexes is a color restriction of a spherical building. Without loss of generality let us assume that \((i_1,j_1),(i_2,j_2)\) belong to the same complex (i.e. \(i_1=i_2\)) and let us assume that the complex has dimension \(t\). By \pref{claim:diam-of-spherical-building} and \pref{claim:symplectic-diameter} the diameter of this complex is \(O(\frac{j_2}{j_2-j_1})\). The dimension of the complex \(t\) is bounded from above by the maximal distance of two consecutive colors of \(\cup s\) which is at most \(a_2=\frac{100n \log (\d_1+1)}{(\d_1+1) m}\) (and the link of \(w\) can only split to more complexes of lower dimension). On the other hand $j_2-j_1$ is at least $a_1=\frac{n}{(m(\d_1+1))^3}$ by assumption. 
    Therefore the diameter is bounded by \(poly(\d_1)\). By \pref{claim:cob-exp-of-join} in this case \(h^1(\S_{w}^{I}) =\exp(-O(\log(\d_1)))\).

    If \(\S_{w}^{I}\) is a color restriction of a single type \(A\) spherical building then by \pref{lem:general-case-subspace-complex-sl} \[h^1(\S_w^I) \geq \exp \left (-O\left ( \log \left (\frac{j_3}{j_1-j_0} \right ) \cdot \log\left (\frac{j_3}{j_1}\right ) \right ) \right ).\]
    Similarly, if \(\S_{w}^{I}\) is a color restriction of a single type \(C\) spherical building then by \pref{lem:inductive-bound-on-S-I} \[h^1(\S_{w}^{I}) \geq \exp \left (O \left (\log \left (\frac{j_3}{j_1-j_0} \right) \min \left (\log(\frac{j_3-j_1}{j_3}), \log (\frac{j_1-j_0}{j_0}) \right ) \right ) \right).\]
    As before, the quantities \(\frac{j_3}{j_1},\frac{j_3}{j_1-j_0},\frac{j_3}{j_3-j_1}\) and \(\frac{j_1-j_0}{j_0}\) are \(\poly(\d_1)\) from the spreadness assumption, so in both cases \(h^1(\S_{w}^{I}) \geq \exp(-O(\log^2(\d_1)))\).
\end{proof}

\subsection[Swap coboundary expansion of the of tilde C]{Swap coboundary expansion of the of the links of \(\tilde{C}\)} \label{sec:proof-of-faces-complex-lower-bound}
In this section we modify the proof in \cite{DiksteinD2023swap} to show that the finite quotients of the affine symplectic building's faces complex is a \((\exp(-O(\sqrt{r})),r)\)-swap coboundary expander. The proof follows the same lines as \cite{DiksteinD2023swap}, where the main difference is that in some cases we need to use \pref{lem:inductive-bound-on-S-I} instead of \pref{lem:general-case-subspace-complex-sl}. 

As we saw in Section \pref{sec:preliminaries}, the link of every \(j\)-face in a quotient of an affine building is a join of \(j' \leq j+2\) spherical buildings (as in the definition of a join in \pref{sec:preliminaries}). If the building is symplectic then these buildings are either symplectic or special linear. The following theorem deals with swap coboundary expansion of such complexes.

\begin{theorem}\label{thm:coboundary-expansion}
   Let \(\d_1\) be an integer. There is some \(p_0=p_0(d)\). Let \(p > p_0\) be any prime power. Let \(k \geq 1\) and let \(\set{\S_i}_{i=1}^k\) be so that for every \(i=1,2,\dots,k\), \(\S_i\) are either \(SL_{\ell_i}(\mathbb{F}_p)\) or \(Sp(2\ell_i,\mathbb{F}_p)\) spherical buildings. 
   Assume that \(\sum_{i=1}^k \ell_i = n \geq \d_1^5\). \(\S = \bigvee_{i=1}^k \S_i\). Let \(\X = \FS[\d_1]\) be its faces complex. Then \(\X\) is a coboundary expander and \(h^1(\X) \geq \exp(-O(\sqrt{\d_1}))\).
\end{theorem}
From this theorem we immediately derive swap cocycle expansion of the quotients of the affine symplectic building.
\restatetheorem{thm:swap-coboundary-expansion}
\begin{proof}[Proof of \pref{thm:swap-coboundary-expansion} from \pref{thm:coboundary-expansion}]
    Let \(\X = \FX[\d_1]\). For every \(s \in \X(0)\), \(\X_s\), being itself a faces complex of \(X_s\), is a faces complex of a complex that satisfies the conditions of \pref{thm:coboundary-expansion} so it is a \(\exp(-O(\sqrt{\d_1}))\)-coboundary expander. In addition, for \(p_0\) large enough \(\X\) is a sufficiently good spectral high dimensional expander, so by \pref{lem:trick} (applied to the two skeleton of \(\X\) for \(i=0\)) , it holds that \(h^1(\X) \geq \exp(-O(\sqrt{\d_1}))\) (as a cocycle expander).
\end{proof}

\subsubsection*{Notation for this section}
Fix $\S = \bigvee_{i=1}^k \S_i$ and $d,n$ as in the theorem statement. Denote by \(\X = \FS[\d_1]\) and \(\tilde{\X} = \FS[]\). Fix $m = \sqrt{\d_1+1}$.

All \(\S_i\)'s are partite and come with colors associated with the dimension of the subspace.
The color of a vertex $v$ in $\S$ (denoted by $col(v) = col_{\S}(v)$), is a pair \((i,j)\) such that \(v \in S_i\) is a subspace of dimension \(j\) (i.e. \(j=col_{\S_i}(v)\)). We let \(C_0 = \sett{(i,j)}{i \in [k], j \in [\ell_i]}\) be the possible colors of \(\S\) and we order the colors lexicographically, that is \((i,j) \leq (i',j)\) if \(i \leq i'\) or \(i=i'\) and \(j\leq j'\).
We set $\mathcal{C}= \binom{C_0}{\d_1+1}$ be the set of possible colors of vertices of $\X$.

We use $u,v$ to denote vertices of $\S$, and $w$ to denote vertices of $\X$, which are faces of $\S$. Faces of $\X$ are denoted by $s$.
We denote subsets of colors of $\FS[]$ that are mutually disjoint by the letters $J,I$ (so $J,I\in\FD[]$).

\subsubsection*{Well spread colors}
\label{sec:well-spread-colors}
In light of \pref{lem:colorest}, it suffices to show swap coboundary expansion of certain color sets of the quotients of the affine building, in order to deduce swap coboundary expansion of the quotients themselves. We now describe which colors interest us. These are exactly the same colors which were used in \cite{DiksteinD2023swap}.

Let \(N\) be an ordered set. Let \(c = \set{i_1 < i_2 < \dots < i_T} \subseteq N\) be any subset. A \(c\)-bin is one of the following sets
\[B_0 = \sett{i \in N}{i < i_1}, \; B_{T} = \sett{i \in N}{i > i_T}\]
or
\[\forall t = 1,2,\dots,T-1 \qquad B_t = \sett{i \in N}{i_t < i < i_{t+1}}.\]
Let \(J = \set{c_1,c_2,\dots,c_m}\) be mutually disjoint and disjoint from \(c\). We say that a \(c\)-bin \(B_t\) is \(J\)-crowded if there are two distinct \(c_{j_1},c_{j_2} \in J\) such that \(B_t \cap c_{j_1}, B_t \cap c_{j_2} \ne \emptyset\). If there is only a single \(c_j \in J\) such that \(B_t \cap c_j \ne \emptyset\) we say that \(B_t\) is \(J\)-lonely. Otherwise, if for all \(c_j \in J\), \(B_t \cap c_j = \emptyset\) we say that \(B_t\) is \(J\)-empty.

We define a well-spread color to have good pseudo-random properties, that is, all indices are roughly equally spaced, and interlaced with one another so that many colors will be isolated. This will facilitate the lower bounds in the next sections.

Let \(\mathcal{C} = \binom{N}{d+1}\). Let \(J \subseteq \mathcal{C}\). Recall that \(\dunion J = \bigcup_{c \in J}c\).

\begin{definition}[Well-spread subsets of colors] \label{def:good-colors}
    Let \(m > 5\) and let \(J\) be a set of \(m\) colors in \(\mathcal{C}\). We say that \(J\) is \emph{well-spread} if the following properties hold.
    \begin{enumerate}
        \item Every \(c_1,c_2 \in J\) are disjoint.
        \item Renaming the colors \(N = \set{0,1,\dots,n}\) (with the usual order), for every \(\ell_1,\ell_2 \in (\cup J) \cup \set{0,n}\) it holds that \(\Abs{\ell_1 - \ell_2} \geq \frac{n}{(m(\d_1+1))^3}\).
        \item For every \(J' \subseteq J\),  \(\Abs{J'} = 5\), let \(c^*=\cup (J \setminus J')\):
        \begin{enumerate}
            \item Every \(c^*\)-bin has size at most \(\frac{100n \log (\d_1+1)}{(\d_1+1) m}\). 
            \item For every \(c \in J'\), the number of colors \(i \in c\) that are in \(J'\)-crowded \(c^*\)-bins is at most \(\frac{100(\d_1+1) \log (\d_1+1)}{m \log m}\).
            \item For every \(c \in J'\) and every \(c^*\)-bin \(B\), it holds that  \(\abs{B \cap c } \leq 20\frac{\log (\d_1+1)}{\log m}\).
        \end{enumerate}
    \end{enumerate}
We denote by $\mathcal{J}\subset\FD[\d_1]$ the set of well-spread color sets.
\end{definition}
The following proposition was proven in \cite{DiksteinD2023swap}.
\begin{proposition}[{\cite{DiksteinD2023swap}}] \label{prop:prob-of-good-colors-tend-to-one}
    Let \(\d_1\) be an integer. Let \(6 \leq m \leq \d_1+1\). The probability that \(m\) uniformly chosen colors out of \(n\) colors are \emph{well-spread} tends to \(1\) as \(\d_1,n \to \infty\) as long as \(\d_1^5 \leq n\).
\end{proposition}

\subsubsection*{Roadmap and proof of \pref{thm:swap-coboundary-expansion}}
Let us explain the proof idea. Swap coboundary expansion of $\S$ is, by definition, coboundary expansion of $\X$. We begin with two reductions.
\begin{enumerate}
    \item Using \pref{lem:colorest} we deduce that to show \(\exp(-\Omega(\sqrt{d}))\)-coboundary expansion of \(\X\) it is enough to show coboundary expansion of \(\X^J\) where \(J\) is a set of well-spread colors as in \pref{def:good-colors}.
    \item Using \pref{lem:trick} we deduce that in order to show \(\exp(-\Omega(\sqrt{d}))\)-coboundary expansion of \(\X^J\) it is enough to show coboundary expansion of \(\X_w^J\) for every \(w \in \X^J(m-6)\).  
\end{enumerate}
The reason we reduce to a link of a complex that has well spread colors, is because we can decompose such a link to a tensor product of a complete partite complex and a ``remainder'' complex which itself is a faces complex with \(5\)-colors of size \(O(\sqrt{\d_1})\) (see \pref{claim:tensoring-corollary}).  \pref{claim:triangle-complex} allows us to bound the expansion of the complete tensor part, and \pref{prop:colored-exponential-decay-bound} gives an exponential lower bound on the remainder part. Note that the exponent is \(\sqrt{\d_1}\) since this is the size of the remaining color sets. In order to use \pref{prop:colored-exponential-decay-bound} we need to bound the coboundary expansion of  (color-restrictions of) the links of \(\S\). For this we turn to \pref{lem:join-links} proven in the previous subsection.

Using the quasi-polynomial bounds we get in \pref{lem:join-links} together with \pref{prop:colored-exponential-decay-bound}, we get a bound of
\begin{equation} \label{eq:weaker-bound}
    h^1(\X) \geq \exp(-\sqrt{\d_1} \poly(\log \d_1)).
\end{equation}
This bound is almost as strong as claimed in \pref{thm:swap-coboundary-expansion}, and is already strong enough for proving \pref{thm:main}. However, we claimed in \pref{thm:coboundary-expansion} (and in \pref{thm:swap-coboundary-expansion}) a slightly stronger bound that doesn't suffer from the the poly-logarithmic factors. Proving this stronger bound requires a more complicated version of \pref{prop:colored-exponential-decay-bound}. We do this in full in \pref{sec:full-version}. We encourage the readers to go over the proof below of the weaker bound \eqref{eq:weaker-bound}, and save \pref{sec:full-version} for a second read.

\begin{proof}[Proof of Equation \pref{eq:weaker-bound}]
To bound $h^1(\X)$ we follow the steps of the decomposition.
Let $\J$ be the set of well-spread $J$'s per \pref{def:good-colors}. By \pref{prop:prob-of-good-colors-tend-to-one}, at least half of the sets $J$ are in $\J$. Therefore, by \pref{lem:colorest},
\begin{equation}\label{eq:colors}
        h^1(\X) \geq \Omega(1)\cdot \min_{J\in\J} h^1(\X^J) .
\end{equation}
Fix $J\in\J$. We note that every link of $\X$ is simply connected by \pref{claim:trivial-simple-connectivity} (proven in \pref{sec:simple-connectivity}). We use \pref{lem:symplectic-local-spectral-expansion} to deduce that such a complex is a sufficient local spectral expander for large enough prime \(p\), so by \pref{lem:trick}
\begin{equation}\label{eq:trickle}
        h^1(\X^J) \geq \exp(-O(m))\cdot \min_{s\in \X^J(m-6)} h^1(\X_s^J) .
\end{equation}
The well-spreadness of $J$ allows us to bound the right hand side by a much smaller color-restriction,
\begin{claim} \label{claim:tensoring-corollary}
    Let \(J\) be a set of well-spread colors and let \(s\in \X^J(m-6)\). Then there exists a set of colors \(\tilde{J} = \set{\tilde{c}_1,\tilde{c}_2,\dots,\tilde{c}_5} \leq J\) so that \(\sum_{j=1}^5 \abs{\tilde{c}_j} = O \left (\frac{\d_1 \log \d_1}{m \log m} \right )\) and
    \begin{equation}\label{eq:tensor}
        h^1(\X_s^J) \geq const \cdot  h^1(\tilde \X^{\tilde J}_s).
    \end{equation}
\end{claim}
Next, we wish to use \pref{prop:colored-exponential-decay-bound} to bound the coboundary expansion of $\tilde \X_s^{\tilde J} = F\S_{\cup s}^{\tilde J}$. For this we need to bound $\beta = \min_{w,I} h^1(\S_{\cup s \cup w}^I)$ 
where $I = \set{i_1,\ldots ,i_5}$ such that $i_j\in \tilde c_j$ for $j=1,\ldots,5$ and where $w\in\S_{\cup s}^{\cup \tilde J \setminus I}$.
By \pref{prop:colored-exponential-decay-bound},
\begin{equation}\label{eq:GK}
        h^1(\tilde \X_s^{\tilde J}) \geq const\cdot(\beta_1)^R
\end{equation}
where \(\beta_1 = \Omega(\beta)\) and $R = \sum_{j=1}^5 |\tilde{c}_j| =  O \left (\frac{\d_1 \log \d_1}{m \log m} \right )$. 

By \pref{lem:join-links},
\begin{equation}
    \beta = \min_{w,I} h^1(\S_{\cup s \dunion w}^I) \geq \exp(-O(\log ^2 \d_1))
\end{equation}
We are justified in applying this lemma since the well-spreadness of $J$ implies that $\cup  J$ is $(\frac{n}{(m(\d_1+1))^3},\frac{100n \log (\d_1+1)}{(\d_1+1) m})$-spread. 
We now plug in each equation into the previous one, to get the desired bound,
\[ 
    h^1(\X) \geq const \cdot \exp \left (- O\left(m + \frac{\d_1\poly\log \d_1}{m\log m}\right) \right ) = \exp(-O(\sqrt{\d_1}\log^2 \d_1)).
\] 
\end{proof}
\subsubsection{Links of the spherical building} \label{sec:links-of-colored-faces-complex} 
In the next two subsections we proceed towards proving \pref{claim:tensoring-corollary}. Towards this, let us understand how a link of a join of spherical bulding looks like.

Let \(v \in \S_i(0)\) be a vertex, 
 and let $j=col(v)$. We can write \((S_i)_v = (\S_i)_v[ {[j-1]}]  \vee (\S_i)_v[ {[\ell_i] \setminus [j]}] \). The reason is that \((\S_i)_v[ {[j-1]}]\) consists of the subspaces contained in \(v\), and \((\S_i)_v[ {[\ell_i] \setminus [j]}]\) consists of the subspaces that contain \(v\), and containment is transitive. This holds for spherical buildings of types both $A$ and $C$, where in the later case we only consider isotropic subspaces.

Therefore, it is immediate to see that the join \(\S\) has the same structure, joined with the other complexes, i.e.
\[ \S_v = \left ( \bigvee_{t=1 ,t \ne i}^k S_t \right ) \vee (\S_i)_v^{ {[j-1]}}  \vee (\S_i)_v^{ {[\ell_i] \setminus [j]}}).\]
 We observe that in particular, if \(col(v)=(i,j)\), then we can write \(\S_v\) as a join of two complexes: one contains vertices whose colors are \(<(i,j)\) and the other contains vertices whose colors are \(> (i,j)\).

Let us understand how this generalizes to links of arbitrary faces. Fix a general face $w = \set{ v_1,\cdots,v_T}\in \S$ and let us study $\S_w$. Let the colors of $w$ be \(c = \set{(i_1,j_1) <(i_2,j_2) < \dots < (i_T,j_T)}\). Recall the notion of a \(c\)-bin from \pref{sec:well-spread-colors}.  

We can write \(\S_w\) as 
\begin{equation}\label{eq:decomposition-of-link}
    \S_w = \bigvee_{t=0}^T \S_w^{B_t}
\end{equation}
where the \(B_t\)'s are \(c\)-bins as above (and it is possible that \(\S_w^{B_t}\) is itself also a join of complexes).

\subsubsection{A tensor decomposition of the faces complex of a join}\label{sec:tensor}
Let \(w \in \S(T-1)\), let $c= col(w)$ and let the $c$-bins be $B_0,\ldots,B_T$ as in \pref{sec:well-spread-colors}. Let \(\S_w^{B_t}\) for $t=0,\ldots,T$ be the components of the decomposition of \(\S_w\) as in \eqref{eq:decomposition-of-link}. Let \(J = \set{c_1,c_2,\dots,c_m}\) be subsets of mutually disjoint colors in \(\X\) that are also disjoint from \(c\). We denote by \(c_j^t = c_j \cap B_t\), and let \(J_t = \set{c_1^t,c_2^t,\dots,c_m^t}\) be the corresponding subsets of mutually disjoint colors in \(\S_w^{B_t}\) (technically this should be a multiset but only the empty set can appear more than once). 
\begin{claim} \label{claim:tensor-decomposition-of-faces-complex}
    \begin{equation} \label{eq:join-faces-decomposition}
        \tilde{\X}_w^J = \bigotimes_{t=0}^T \tilde{\X}_w^{J_t}.
    \end{equation}    
\end{claim}
Verifying this is a direct calculation:
\begin{proof}
    Recall that by \eqref{eq:decomposition-of-link}, \(\S_w = \bigvee_{t=0}^T \S_w^{B_t}\). In particular, for every \(c_j\),
    \[\S_w[c_j] = \prod_{t=0}^T \S_w^{B_t} [c_j^t]\]
    since specifying a \(c_j\)-colored face in \(\S_w\) corresponds to independently specifying a face of color \(c_j^t\) in each part of the join (and taking disjoint union). 
    
    Next consider a top-level face \(s = \set{w_1,w_2,\dots, w_m} \in \tilde{\X}_w[J]\) so that $col(w_j) = c_j$ for each $j\in [m]$. Let $\bar w = \dunion_{j\in[m]} w_j$ and let $\bar c = \dunion_{j\in[m]} c_j = \cup J$.
    Specifying $s$ corresponds to specifying a \(\bar c\)-colored face \(\bar w \in \S_w[\cup J]\) and then partitioning it to $m$ parts according to the colors $J$. By properties of the join, this is the same as sampling \(s_t=\set{w_1^t,w_2^t,\dots,w_m^t} \in \S_w[J_t]\) for every \(t=0,1,\dots,T\) and taking the partite disjoint union of these faces (i.e. \(w_j = w_j^0 \dunion w_j^1 \dots \dunion w_j^T\)). This is, in turn, the same as sampling a top-level face in \(\bigotimes_{t=0}^T \tilde{\X}_w^{J_t}\). Complexes with the same vertex sets and same distribution on top level faces are equal and the claim is proven.
\end{proof}

We can refine this decomposition by separating the bins to empty, lonely, and crowded, as defined in \pref{sec:well-spread-colors}. Recall that a \(c\)-bin \(B_t\) is \(J\)-crowded if there are two distinct \(c_{j_1},c_{j_2} \in J\) such that \(c_{j_1}^t,c_{j_2}^t \ne \emptyset\). If there is exactly one \(c_j^t \ne \emptyset\) then we say that \(B_t\) is \(J\)-lonely and if all \(c_j^t = \emptyset\) we say that \(B_t\) is \(J\)-empty. 

Let us use this separation to crowded and lonely/empty bins to prove \pref{claim:tensoring-corollary}.
\begin{proof}[Proof of \pref{claim:tensoring-corollary}]
    Let \(s \in \X^J(m-6)\), let $w = \dunion s$ and recall that \(\tilde\X_s = \tilde\X_{w}\). Let $c=col(w)$ and let $B_0,\ldots,B_T$ be the $c$-bins which we partition into crowded and not crowded.
    Let
        \[I_1 = \sett{0 \leq t \leq T}{B_t \text{ is \(J\)-crowded}}\] 
    and let \(I_2 =  \set{0,\ldots,T} \setminus I_1\). We can write \eqref{eq:join-faces-decomposition} as
        \[ \tilde{\X}_{w}^J = \left (\bigotimes_{t \in I_1} \tilde{\X}_{w}^{J_t} \right ) \otimes \left ( \bigotimes_{t \in I_2} \tilde{\X}_{w}^{J_t} \right ).\]
    For every \(t \in I_2\), \(J_t\) has at most one non-empty set of colors, so \(\tilde{\X}_{w}^{J_t}\) is a complete partite complex. Therefore, \(\bigotimes_{i \in I_2} \tilde{\X}_w^{J_i}\) is also a complete partite complex. 
    For every \(c_j \in J \setminus col(s)\), Let \(\tilde{c}_j = c_j \cap \bigcup_{t \in I_1} B_t\) and let \(\tilde{J} = \set{\tilde{c}_1,\tilde{c}_2,\dots \tilde{c}_5}\). 
    By \pref{claim:triangle-complex} (and the fact that this is a sufficiently local spectral expander for large enough primes \(p\) \pref{lem:symplectic-local-spectral-expansion}),
        \[h^1(\tilde{\X}_{s} ) \geq const \cdot h^1 \left ( \bigotimes_{t \in I_1} \tilde{\X}_{w}^{J_t} \right ) = const  \cdot h^1(\tilde{\X}_{s}^{\tilde{J}}).\]
        
        Finally, by definition of well spread colors, the number of crowded bins for every \(\tilde{c}_j\) is \( O \left ( \frac{\d_1 \log \d_1}{m \log m} \right )\) and $|\tilde C_j\cap B_t|=O(1)$ by \pref{def:good-colors}, item 3(c). Thus \(\sum_{j=1}^5 \abs{\tilde{c}_j} = O \left ( \frac{\d_1 \log \d_1}{m \log m} \right )\).
\end{proof}

\subsubsection{Simple connectivity of the links} \label{sec:simple-connectivity}
\begin{claim} \label{claim:trivial-simple-connectivity}
    Let \(J\) be a set of well-spread colors. For every \(i \leq \d_1\) and \(s \in \X^J(i)\), the complex \(\X_s^J\) is simply connected. 
\end{claim}
\begin{proof}[Proof of \pref{claim:trivial-simple-connectivity}]
Showing simple connectivity is equivalent to showing that the complex is a coboundary expander with some positive constant. We do so using \pref{prop:colored-exponential-decay-bound} on \(\X_s^J\).  

Recall that \(\X_s^J= F^J(\S_{ w_1})\), where $w_1=\dunion s$. By \pref{lem:join-links}, for every \(I = \set{i_1,i_2,i_3,i_4,i_5}\) such that \(i_j \in c_j\) and \(w \in \S_{w_1}^{\cup J \setminus I}\), we have \(h^1(\S^I_{w_1 \cup  w}) > \beta\). 
We are justified in applying this lemma because the well-spreadness of $J$ implies that $\cup J$ is $(\frac{n}{(m(\d_1+1))^3},\frac{100n \log (\d_1+1)}{(\d_1+1) m})$-spread.

In addition, for large enough \(q\), \(\S_{w_1}\) is a \(\frac{1}{2\d_1^2}\)-local spectral expander and hence we are justified to apply \pref{prop:colored-exponential-decay-bound} deducing that \(h^1(F^J(\S_{\cup w})) = h^1(\X_s^J) > 0\) which implies $\X_s^J$ is simply connected.
\end{proof}
We remark that the well-spreadness is most probably unneeded, but it shortens our proof.

\subsubsection{Proof of the full version of the theorem} \label{sec:full-version}
In the beginning of \pref{sec:proof-of-faces-complex-lower-bound} we have proven \((\d_1,\exp(-\tilde{\Omega}(\sqrt{\d_1})))\)-swap cocycle expansion. In this subsection we shave off the log factors.

For this we need a stronger version of \pref{prop:colored-exponential-decay-bound}, for which we need some terminology.
Let \(q \leq R\) be an integer. \(\mathcal{J}_q= \mathcal{J}_q(J)\) be all the \(J' = \set{c_1',c_2',\dots,c_\ell'} \leq J\) such that \( \sum_{j\in J'} |c_j'|= q\). Let 
\[T_q(X,J) = \min_{(J',X_w), J' \in \mathcal{J}_q, w \in X[\cup J \setminus \cup J']}  \left ( \max_{i_1,i_2,\dots,i_\ell \text{ s.t. } i_j \in c_j'}  \left ( h^1(X_{w}^{\set{i_1,i_2,\dots ,i_\ell}}) \right ) \right ).\]
It may not be clear why this quantity is the correct thing to look at to improve \pref{prop:colored-exponential-decay-bound}. A full explanation appears in \cite[Section 8]{DiksteinD2023swap}.

\begin{proposition} \label{prop:improved-generic-lower-bound-for-colors}
    Let \(X\) be a partite \(\lambda\)-one sided local spectral expander for \(\lambda \leq \frac{1}{2\d_1^2}\). Let \(J = \set{c_1,c_2,\dots ,c_\ell}\) and let \(R = \sum_{j=1}^\ell |c_j| \leq \d_1\). Then \(h^1(\FX[J]) \geq \prod_{q=1}^R \Omega_{\ell}(T_q(X,J))\).
\end{proposition}

The following claim is the same as \cite[Claim 8.8.1]{DiksteinD2023swap} for this setup. The proof is also exactly the same so we omit it.
\begin{claim} \label{claim:good-tqs}
Let \(J  \subseteq \FD[](4)\). Let \(w \in \S\) be such that \(col(w) \cap (\cup J) = \emptyset\). Let 
\[q_0 = \max_{B, c}|c \cap B|\]
where \(B\) is a \(col(w)\)-bin and \(c \in J\).
Then for all \(q > 10q_0\), \(T_q(\S_{w},J) = \Omega(1)\).
\end{claim}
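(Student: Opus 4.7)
The plan is to fix any $J' = \set{c_1', \ldots, c_5'} \leq J$ with $\sum_j |c_j'| = q > 10q_0$ and any $w' \in \S_w[\cup J \setminus \cup J']$, and exhibit indices $i_j \in c_j'$ for which $h^1(\S_{w \dunion w'}^{\set{i_1, \ldots, i_5}}) = \Omega(1)$; taking the minimum over $J'$ and $w'$ will then yield $T_q(\S_w,J) = \Omega(1)$. I will work with the $col(w \dunion w')$-bin structure; since these bins refine the $col(w)$-bins, the analogous max-intersection quantity $q_0'$ satisfies $q_0' \leq q_0$, so $q > 10q_0'$ still holds.

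The main combinatorial tool will be the bipartite incidence graph $G$ whose left vertices are $c_1', \ldots, c_5'$, right vertices are the $col(w \dunion w')$-bins, and whose edges are $\set{c_j', B}$ whenever $c_j' \cap B \ne \emptyset$. If $G$ admits a matching of size at least $3$, I use it to pick $i_{j_1}, i_{j_2}, i_{j_3}$ lying in three distinct bins $B_1, B_2, B_3$ and complete the choice by picking the remaining two $i_j$'s arbitrarily in $c_j'$. By the link decomposition in \eqref{eq:decomposition-of-link} applied to $w \dunion w'$, the complex $\S_{w \dunion w'}^{\set{i_1,\ldots,i_5}}$ is then a join of at least three sub-complexes. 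Writing it as $A_1 \vee (A_2 \vee A_3 \vee \cdots)$, the join of the other factors has diameter at most $2$, and since the symplectic and special-linear groups act transitively on the top faces of these color restrictions, \pref{claim:cob-exp-of-join} delivers $h^1 = \Omega(1)$.

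The more delicate case is when the matching number of $G$ is at most $2$. By K\"onig's theorem, $G$ then admits a vertex cover of size at most $2$. A two-color cover is ruled out since we may assume every $c_j'$ is non-empty (otherwise the max in the definition of $T_q$ is vacuous). A two-bin cover $\set{B, B'}$ forces $c_j' \subseteq B \cup B'$ for every $j$, giving $\sum_j |c_j'| \leq 10 q_0'$ and contradicting $q > 10 q_0'$. The only remaining case is a mixed cover $\set{c_{j_0}', B}$, which forces $c_k' \subseteq B$ for all $k \ne j_0$; then $\sum_{k \ne j_0}|c_k'| \leq 4q_0'$, so $|c_{j_0}'| > 6q_0'$, which forces $c_{j_0}'$ to contain an element in some bin $B'' \ne B$. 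Choosing $i_{j_0} \in c_{j_0}' \cap B''$ and $i_k \in c_k' \subseteq B$ for $k \ne j_0$ places color $i_{j_0}$ in a distinct join component of $\S_{w \dunion w'}$ from the other four colors, making $i_{j_0}$ a ``free'' color in $\S_{w \dunion w'}^{\set{i_1, \ldots, i_5}}$: any $4$-face on the colors in $B$ extends by an arbitrary vertex of color $i_{j_0}$. Then \pref{claim:complex-with-one-free-side}, applied with $k = 5$, delivers $h^1 = \Omega(1)$.

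The main point to verify carefully will be that the refined bin decomposition of $\S_{w \dunion w'}$ in \eqref{eq:decomposition-of-link} really does place indices from distinct $col(w \dunion w')$-bins into distinct join components, since this is what makes both the ``three-bin'' and ``free color'' arguments go through. Once that is in place, the applicability hypotheses of \pref{claim:complex-with-one-free-side} and \pref{claim:cob-exp-of-join} (enough parts, transitive action on top faces) are routine from the structure of the symplectic and special-linear spherical buildings.
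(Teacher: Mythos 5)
Your argument is correct, and it isolates the same two structural facts as the paper (that $\S_{w\dunion w'}$ is a join indexed by bins, and that $q>10q_0$ forces a lot of indices to land outside any pair of bins), but the combinatorics are organized differently. The paper's proof is a short direct argument: it locates some $c_{j_0}'$ with $|c_{j_0}'|>2q_0$, picks the other four indices arbitrarily, and then observes that either those four already span $\geq 3$ bins (in which case a pigeonhole count shows any completion leaves some bin with a single index of $I$), or they sit in $\leq 2$ bins $B_1,B_2$ and $|c_{j_0}'\cap(B_1\cup B_2)|\leq 2q_0<|c_{j_0}'|$ furnishes an index of $c_{j_0}'$ in a fresh bin. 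Either way the paper produces a color restriction with a singleton bin and closes with \pref{claim:complex-with-one-free-side} alone. Your route goes through K\"onig's theorem on the bipartite incidence graph between the $c_j'$ and the bins: a size-$\geq 3$ matching gives a join of $\geq 3$ factors, which you settle by \pref{claim:cob-exp-of-join} (though the same pigeonhole count of five indices in $\geq 3$ bins would again yield a singleton bin, letting you use \pref{claim:complex-with-one-free-side} uniformly), while a size-$\leq 2$ cover forces the mixed-cover configuration and you again reach the singleton-bin case. Both are valid; the paper's is more elementary and uses one lemma fewer, and yours makes the ``spread over many bins'' vs.\ ``concentrated in one bin plus one wild color'' dichotomy explicit. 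One point you handle a bit more carefully than the paper's write-up: you refine to $col(w\dunion w')$-bins (with $q_0'\leq q_0$), which is precisely what the free-color property of $\S_{w\dunion w'}^I$ requires, since its join decomposition is governed by $col(w\dunion w')$-bins; the paper uses the coarser $col(w)$-bins, which suffices only because those are refined by the former.
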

Note that when \(J\) is well spread, and \(w= \cup s\) for \(s \in \X^J(m-6)\), then \(q_0 = O(1)\).

\begin{proof}[Proof of \pref{thm:coboundary-expansion} (full version)]
Our starting point is, combining \eqref{eq:colors},\eqref{eq:trickle}, and \eqref{eq:tensor} in the proof of the weaker version, is that
\[h^1(\X) \geq \dots \geq \exp(-O(m)) \min_{J \in \J, s \in \X^{J}(m-6)} h^1(\tilde{\X}^{\tilde{J}}_s).\]
Recall that $J$ is a set of well-spread colors, $s\in \X^J(m-6)$, and
$\tilde J$ is a set of five colors from \pref{claim:tensoring-corollary}.
Note that by \pref{claim:link-of-a-faces-complex}, $\tilde \X_s^{\tilde J} = F^{\tilde J}(S_{\cup s})$.
By \pref{prop:improved-generic-lower-bound-for-colors}
    \[h^1(\FS[\tilde{J}]_{\cup s}) \geq \exp(-O(R)) \cdot \prod_{q=1}^R T_q(\S_{\cup s},\tilde{J})\]
where \(R = \sum_{j=1}^5 |\tilde{c}_j|\). By \pref{claim:tensoring-corollary}, \(R = O(\frac{\d_1 \log \d_1}{m \log m})\). By definition, 
\[T_q(\S_{\cup s},\tilde{J})\geq \min_{w,I} h^1(S_{\cup s \dunion w}^I).\]
We could use \pref{lem:join-links} to deduce  
\(T_q(\S_{\cup s},\tilde{J}) \geq \exp(-O(\log ^2 \d_1))\). 

However, by \pref{claim:good-tqs} we can obtain a tighter bound on \(T_q(\S_{\cup s},\tilde{J})\). 
Let \[q_0 = \max_{B, c}|c \cap B|\]
where \(B\) is a \(col(\cup s)\)-bin and \(c \in \tilde{J}\).
By \pref{def:good-colors} \(q_0 = O \left (\frac{\log \d_1}{\log m} \right )\) and by \pref{claim:good-tqs} for every \(q > 10q_0\),
\(T_q(\S_{\cup s},\tilde{J}) = \Omega(1)\). Thus 
\[\FS[\tilde{J}]_{\cup s} \geq \exp(-O(R)) \cdot \exp(-O(\log ^2 \d_1))^{10q_0} \cdot \exp(-O(R-10q_0)).\]
Plugging in \(m=\sqrt{\d_1}\) we have that \(q_0=O(1)\) so this is at least
\(\exp(-O(R + \log^2 \d_1)) = \exp(-O(\sqrt{\d_1}))\).
In conclusion, we have that \(h^1(\X) = \exp(-O(\sqrt{\d_1}))\).
\end{proof}
\subsection{Local Spectral Expansion of the Symplectic Buildings} \label{sec:local-spectral-expansion-C}
In this section we prove \pref{lem:symplectic-local-spectral-expansion} and \pref{thm:expansion-of-building-quotients}, showing local spectral expansion of \(C\) and quotients of \(\tilde{C}_g\). The main proposition we need is the following one.
\begin{proposition} \label{prop:the-actual-spectral-bound}
        Let \(0\leq j<\ell\leq g\), then \(C_g[j],C_g[\ell]\) is a \(\frac{c^{\ell-j}}{\sqrt{p}^{\ell-j}}\)-spectral expander where \(c > 1\) is some universal constant independent of \(p\). In particular this is a \(O(\frac{1}{\sqrt{p}})\)-spectral expander.
\end{proposition}

The theorem follows quite easily from \pref{prop:the-actual-spectral-bound} and the following elementary claim:

The proof of the theorem and corollary rely on the following elementary claim (see e.g. \cite[Claim 2.9]{Dikstein2022} for a proof of a more general statement).
\begin{claim} \label{claim:sides-expand-imply-partite-expands}
    Let \(X\) be a \(d\)-partite complex and suppose that for every two parts \((X[i],X[j])\) the induced bipartite graph is a \(\lambda\)-spectral expander. Then the underlying graph of \(X\) is a \(\lambda\)-spectral expander.
\end{claim}

\begin{proof}[Proof of \pref{lem:symplectic-local-spectral-expansion}]
    Fix \(g\) and denote by \(C:=C_g\). Let \(w \in C_g\) and we consider \(C_w[i], C_w[j]\). By \pref{cor:link-of-symplectic-link} \(C_w\) is a join of lower dimensional spherical buildings. If \(C_w[j], C_w[\ell]\) belong to different complexes with respect to the join, then the graph between the two sides is a complete bipartite graph which is a \(0\)-one sided spectral expander. If \(C_w[j],C_w[\ell]\) both belong to belong to a building of type \(A\), then the vertices of \(C_w[j]\) are (isomorphic to) subspaces of dimension \(j'\) in \(\mathbb{F}_p^m\), the vertices of \(C_w[\ell]\) are subspaces of dimension \(\ell'\) in \(\mathbb{F}_p^m\). There is an edge between \(u_1\) and \(u_2\) if and only if \(u_1 \subseteq u_2\). It was shown by e.g. \cite{DiksteinDFH2018} that this graph is an \(O(\frac{1}{\sqrt{p}})\)-expander (where the constant is independent of \(p\)).

    The remaining case is when \(C_w[j],C_w[\ell]\) belong to a part in the join which is itself isomorphic to a spherical building associated with \(Sp(2m,\mathbb{F}_p)\) for some \(m \leq g\). In this case the graph is a \(O(\frac{1}{\sqrt{p}})\)-one sided spectral expander by \pref{prop:the-actual-spectral-bound}.

    By \pref{claim:sides-expand-imply-partite-expands} the itself if \(\frac{c}{\sqrt{p}}\)-one sided spectral expander so \(C_g\) is also a \(\frac{c}{\sqrt{p}}\)-one sided local spectral expander.
\end{proof}

The \pref{thm:expansion-of-building-quotients} is just a consequence of the theorem.
\begin{proof}[Proof of \pref{thm:expansion-of-building-quotients}]
    By \pref{fact:basic-symplectic-building} \(\tilde{C}_g\) is connected, thus \(X\) is also connected. The links of \(\tilde{C}_g\) are joins of spherical buildings of type \(C\). For every \(v \in X(v)\) the bipartite graph between two colors \((X_v[i],X_v[j])\) is either the complete bipartite graph (if \(i,j\) belong to differend parts of the join) or a graph as in \pref{lem:symplectic-local-spectral-expansion}. In both cases this graph is a \(\frac{c}{\sqrt{p}}\)-spectral expander. By \pref{claim:sides-expand-imply-partite-expands} the links are are \(\frac{c}{\sqrt{p}}\)-spectral expanders so by \pref{thm:trickle-down} \(X\) is a \(\lambda\)-one sided spectral expander for \(\lambda = \frac{\frac{c}{\sqrt{p}}}{1-\frac{c}{\sqrt{p}}} = O (\frac{1}{\sqrt{p}})\). By \pref{cor:skel-two-sided-hdx} the \(k\)-skeleton is a \(\max\set{O(\frac{1}{\sqrt{p}}),\frac{1}{d-k+1}}\)-two sided local spectral expander.
\end{proof}

The proof of \pref{prop:the-actual-spectral-bound} follows from the theory developed in \cite{DiksteinDFH2018} regarding expanding posets. We give a brief discussion of the parts of the theory we need. 

\subsubsection{Sub Posets of the Grassmann} \label{sec:subposets-of-grassmann}
The \((n,p,d)\)-Grassmann poset is the poset
\[Gr(n,p,d) = \sett{u \subseteq \mathbb{F}_p^n}{dim(u) \leq d},\]
where the order is by containment.

A simplicial sub-poset of \(Gr(n,p,d)\) is a subset \(P \subseteq Gr(n,p,d)\) such that for every \(v \in P\) and \(u \subseteq v\). We denote the \(i\)-dimensional subspaces in \(P\) by \(P(i)\). A simplicial sub-poset is \emph{pure} if for every \(u \in P\) there exists some \(v \in P(d)\) such that \(u \subseteq v\).

The measure on flags in \(P\) is via sampling a uniform \(v_d \in P(d)\) and then a uniform flag \(\set{v_0,v_1,\dots,v_d}\) where \(v_i \in P(i)\).

For every \(i < j\) we consider the containment graph \(C(P,i,j)\) between \(P(i)\) and \(P(j)\) where the probability of an edge \(\set{v_i,v_j}\) is the probability of sampling a flag containing \(\set{v_i,v_j}\).

Fix \(i\). We denote the bipartite graph operator of the containment graph between \(P(i)\) and \(P(i+1)\) by \(U_{i}\). That is, for every \(f:P(i) \to \RR\), \(U_i f: P(i+1) \to \RR\) is given by \(U_i f(v) = \Ex[u \in P(i), u \subseteq v]{f(u)}\). Denote its adjoint by \(D_{i+1}\).

The bipartite graph operator of the containment graph between \(P(i)\) and \(P(j)\) is the composition \(U_{j-1} \circ \dots U_{i+1}\circ U_i\). Therefore 
\begin{equation} \label{eq:lambda-of-containment-graph-decomposed}
    \lambda_2(C(P,i,j)) \leq \prod_{t=i}^{j-1}\lambda_2 (C(P,t,t+1)).
\end{equation}

There are two natural two-step walks on \(P(i)\) using these containment graphs.
\begin{enumerate}
    \item The upper walk that chooses a pair \(v,v' \in P(i)\) by choosing \(u \in P(i+1)\) and then two \(v,v' \subseteq u\). The graph operator for this walk is \(D_{i+1}U_i\). We also denote its non-lazy version by \(M_i\) (i.e. the walk that samples \(v,v'\) conditioned on \(v \ne v'\)). It holds that \(D_{i+1}U_i = \frac{p-1}{p^{i+1}-1}I + \left (1- \frac{p-1}{p^{i+1}-1} \right )M_i\). 
    \item The lower walk is the one that chooses a pair \(v,v' \in P(i)\) by choosing \(u \in P(i-1)\) and then two \(v,v' \supseteq u\). The graph operator for this walk is \(U_i D_{i}\).
\end{enumerate}

The following notion generalizes graph expansion to posets.
\begin{definition}[eposet]
    Let \(\overline{\gamma} = (\gamma_0,\gamma_1,\dots)\) be a vector of non-negative numbers. A sub-poset of the Grassmann is a \(\overline{\gamma}\)-eposet if for every \(i=1,\dots,d-1\)
    \[\norm{M_i - U_i D_{i}} \leq \gamma_i.\]
\end{definition}

The following theorem is by \cite{DiksteinDFH2018}.
\begin{theorem}[Theorem 8.23 in \cite{DiksteinDFH2018}] \label{thm:eposet-of-grassmann}
    Let \(P\) be a pure \(d\)-dimensional sub-poset of the \((n,p,d)\)-Grassmann. Then if \(P\) is a \(\overline{\gamma}\)-eposet then.
    \[\lambda(D_{i+1}U_i) \leq \sum_{t=1}^{i} \frac{1}{p^t} + \sum_{t=0}^i \gamma_t.\]
\end{theorem}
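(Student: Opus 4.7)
The plan is to prove this bound by an inductive spectral decomposition. Concretely, I would decompose the function space $L^2(P(i))$ into pieces associated with each level $t \le i$, and bound the action of $D_{i+1}U_i$ on each piece separately, with level $t$ contributing $1/p^t$ (from the lazy/diagonal part of the walk at that level) and $\gamma_t$ (from the eposet defect at that level). The proof therefore runs from the top level $i$ downward, peeling off one level of operators at a time.

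The first step uses the two algebraic facts given just before the theorem: the identity
\[ D_{i+1}U_i \;=\; \tfrac{p-1}{p^{i+1}-1}\, I + \Bigl(1 - \tfrac{p-1}{p^{i+1}-1}\Bigr) M_i \]
together with the eposet condition $\|M_i - U_i D_i\| \le \gamma_i$. Since $(p-1)/(p^{i+1}-1) \le 1/p^i$, these combine to give $\lambda(D_{i+1}U_i) \le 1/p^i + \gamma_i + \lambda(U_i D_i)$ on the orthogonal complement of constants. This reduces bounding $D_{i+1}U_i$ to bounding $U_i D_i$, a walk effectively one level lower.

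The second step analyses $U_i D_i$ by decomposing $L^2(P(i)) = \mathrm{Im}(U_{i-1}) \oplus \mathrm{Im}(U_{i-1})^{\perp}$. On the orthogonal complement the operator $D_i$ annihilates the ``new'' harmonic piece, so $U_i D_i$ vanishes there. On $\mathrm{Im}(U_{i-1}) = U_{i-1}\,L^2(P(i-1))$ one intertwines: $U_i D_i U_{i-1}$ differs from $U_{i-1}(D_i U_{i-1})$ by a commutator whose norm is controlled by the eposet defect at level $i-1$. This reduces bounding $U_i D_i$ on the image to bounding $D_i U_{i-1}$ one level lower, after which one returns to the first step and recurses down to the base case at level $0$. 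Summing the per-level contributions yields exactly $\sum_{t=1}^i 1/p^t + \sum_{t=0}^i \gamma_t$.

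The main obstacle is setting up the approximate harmonic decomposition cleanly. In the complete Grassmann, $U$ and $D$ satisfy exact commutation relations coming from the group action, making the decomposition an orthogonal direct sum preserved by the walk. In an arbitrary eposet we do not have such exact relations; rather, each commutator is controlled only approximately, with defect $\gamma_t$. The technical work is to show that the errors from each level accumulate \emph{additively} (rather than multiplicatively), so that the final bound is the clean sum of the $\gamma_t$'s instead of a product. As shown in \cite{DiksteinDFH2018}, this is handled by inductively constructing an approximate harmonic basis level by level and tracking operator norms rather than exact eigenvalues.
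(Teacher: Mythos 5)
This theorem is not proved in the present paper: it is quoted verbatim from \cite{DiksteinDFH2018} (Theorem~8.23 there), so there is no ``paper's own proof'' to compare against. Judging your reconstruction on its own merits, the first step is exactly right and is the correct entry point: writing
\[D_{i+1}U_i \;=\; \frac{p-1}{p^{i+1}-1}\,I \;+\;\Bigl(1-\frac{p-1}{p^{i+1}-1}\Bigr)M_i,\]
noting $\frac{p-1}{p^{i+1}-1}\le 1/p^i$, and then replacing $M_i$ by the lower walk $U_{i-1}D_i$ at a cost of $\gamma_i$ gives $\lambda_2(D_{i+1}U_i) \le 1/p^i + \gamma_i + \lambda_2(U_{i-1}D_i)$ on the complement of constants. (Note that the indexing here — the paper writes $U_iD_i$ for the lower walk but the operator that maps $L^2(P(i))$ to itself is $U_{i-1}D_i$, with $D_i=U_{i-1}^*$.)

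Your second step, however, contains a genuine error. The passage from the lower walk $U_{i-1}D_i$ on $L^2(P(i))$ to the upper walk $D_iU_{i-1}$ on $L^2(P(i-1))$ is \emph{exact}, not approximate: since $D_i=U_{i-1}^*$, the operators $U_{i-1}D_i$ and $D_iU_{i-1}$ are of the form $AA^*$ and $A^*A$ and hence have identical nonzero spectra; equivalently, on $\mathrm{Im}(U_{i-1})$ one has $(U_{i-1}D_i)\circ U_{i-1} = U_{i-1}\circ(D_iU_{i-1})$ as an algebraic identity, with no commutator defect, and on $\mathrm{Im}(U_{i-1})^{\perp}$ the operator $U_{i-1}D_i$ vanishes outright. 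There is therefore no contribution ``controlled by the eposet defect at level $i-1$'' at this stage. The $\gamma_{i-1}$ term you expect here actually arises only when you go back to step~1 at level $i-1$ and replace $M_{i-1}$ by $U_{i-2}D_{i-1}$. If one took your step~2 literally and charged a $\gamma_{i-1}$ error to the intertwining \emph{and} another $\gamma_{i-1}$ to the next application of step~1, the recursion would overshoot the claimed bound. The closing paragraph about constructing an approximate harmonic basis and ensuring errors accumulate additively is likewise not needed here: once one observes the exact $AA^*$/$A^*A$ spectral identity, the recursion
\[\lambda_2(D_{i+1}U_i)\;\le\;\frac{1}{p^i}+\gamma_i+\lambda_2(D_iU_{i-1})\]
unwinds directly to $\sum_{t=1}^{i}1/p^t + \sum_{t}\gamma_t$ with no extra bookkeeping — the additivity of the errors is automatic, not something that requires a delicate harmonic-decomposition argument.
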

Work in \cite{DiksteinDFH2018} also proposes a criterion for showing \(\gamma\)-eposetness.

Let \(w \in P(i-1)\). Its link graph \(P_w\) is the graph whose vertices are all 
\(P_w(0) = \sett{w' \in P(i)}{w' \supseteq w}\).
The edges are sampled by sampling some \(u \in P(i+1), u \supseteq w\) and then sampling \(w \subseteq w',w'' \subseteq u\) conditioned on \(w' \ne w''\). We say that a poset \(P\) is a \(\overline{\gamma}\)-expander if for every \(i=0,1,\dots,d\) and every \(w \in P(i)\) \(\lambda(P_w) \leq \gamma_i\).

\begin{theorem}[Theorem 8.21 in \cite{DiksteinDFH2018}] \label{thm:grassmann-local-to-global}
    Let \(P\) be a \(\overline{\gamma}\)-link expander. Then \(P\) is a \(\overline{\gamma}\)-eposet.
\end{theorem}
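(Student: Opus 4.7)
The plan is to run the standard Garland/Oppenheim style local-to-global argument, adapted to the Grassmann poset setting. The central observation is that the operator $M_i - U_i D_i$ on $P(i)$ admits a clean decomposition as an average, over lower-dimensional faces $w$, of the adjacency-minus-stationary operator of the link graph $P_w$. Once this decomposition is in place, the hypothesis $\lambda(P_w) \le \gamma$ immediately yields the desired operator norm bound.

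First I would fix $f \colon P(i) \to \RR$ and compute the quadratic form $\iprod{f,(M_i - U_i D_i) f}$ by conditioning on the intersection $w = v \cap v'$, which by the simplicial property of $P$ lies in $P(i-1)$. For the upper walk $M_i$, we sample $u \in P(i+1)$ together with distinct $v,v' \subseteq u$; two distinct $i$-dimensional subspaces of an $(i+1)$-dimensional $u$ always intersect in an $(i-1)$-dimensional space, so after conditioning on $w$ the pair $(v,v')$ is precisely an edge of the link graph $P_w$ with its intrinsic edge measure. For the lower walk $U_i D_i$, unfolding the definition gives that we sample $w \in P(i-1)$ and then two independent copies of $v,v' \supseteq w$, i.e.\ two i.i.d.\ draws from the stationary measure on $P_w(0)$.

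Second, I would verify that the two marginal distributions on $w \in P(i-1)$ induced by these two sampling processes coincide; this is the only place where the specific combinatorics of the Grassmann (the Gaussian binomial coefficients) enter, and it is the analogue of the elementary double-counting identity that makes Garland's method work in the simplicial setting. With the marginals matched, one obtains
\[
\iprod{f, (M_i - U_i D_i)f} \;=\; \Ex[w \in P(i-1)]{\iprod{f_w, (A_{P_w} - J_{P_w}) f_w}},
\]
where $f_w$ is the restriction of $f$ to $P_w(0)$, $A_{P_w}$ is the normalized adjacency operator of the link graph and $J_{P_w}$ is the rank-one projector onto the constants with respect to the stationary measure.

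Finally, the operator $A_{P_w} - J_{P_w}$ has spectral norm equal to $\lambda(P_w)$, which is bounded by the link-expansion hypothesis. A pointwise Cauchy-Schwarz bound followed by averaging yields $\norm{M_i - U_i D_i} \le \gamma$, establishing the eposet bound. The main obstacle is the measure-matching step: one must be careful with the $q$-binomial factors so that the induced distributions on $w$ from the upper and lower walks genuinely agree. Everything else is formal once this identity is secured, and the indexing conventions between \emph{link expander} and \emph{eposet} need to be tracked consistently throughout.
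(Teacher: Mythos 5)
The paper does not actually prove this statement; it imports Theorem~8.21 wholesale from \cite{DiksteinDFH2018}, so there is no in-paper proof to compare against. That said, your sketch is the standard Garland-style local-to-global argument and is, to my knowledge, exactly the strategy used in \cite{DiksteinDFH2018}: write the quadratic form of $M_i-U_{i-1}D_i$ as an average over $w\in P(i-1)$ of the quadratic forms of $A_{P_w}-J_{P_w}$, using that two distinct $i$-spaces below a common $(i+1)$-space meet in an $(i-1)$-space, then match the two induced distributions on $w$ and on $(v,v')$ given $w$. Your identification of the measure-matching step as the only place the Grassmann combinatorics genuinely enter is correct, and the closing Cauchy–Schwarz is routine once the identity
\[
\iprod{f,(M_i-U_{i-1}D_i)f} \;=\; \Ex[w\in P(i-1)]{\iprod{f|_{P_w},(A_{P_w}-J_{P_w})f|_{P_w}}_{\mu_w}}
\]
is in hand together with $\Ex[w]{\norm{f|_{P_w}}_{\mu_w}^2}=\norm{f}^2$.

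One point where the sketch is imprecise and where you should be careful if you write this out: you assert that $A_{P_w}-J_{P_w}$ has spectral norm \emph{equal} to $\lambda(P_w)$. With this paper's conventions, $\lambda(\cdot)$ denotes the second-largest eigenvalue (one-sided), whereas $\norm{A_{P_w}-J_{P_w}}$ equals $\max\set{\lambda_2(P_w),\,\abs{\lambda_{\min}(P_w)}}$, a two-sided quantity. So the decomposition only yields $\norm{M_i-U_{i-1}D_i}\le\gamma$ if you also control the bottom of the link spectrum. In \cite{DiksteinDFH2018} this is handled (either the link-expansion hypothesis there is two-sided, or one exploits the near-complete structure of Grassmann links to bound $\lambda_{\min}$), but as stated here the step ``$\norm{A_{P_w}-J_{P_w}}=\lambda(P_w)$'' is a genuine gap that needs to be patched before the $\gamma$-eposet bound follows. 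Everything else in the proposal is sound.
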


\subsubsection{Proof of \pref{prop:the-actual-spectral-bound}}
Recall that the graph between the two parts is the containment graph between istropic spaces of dimension \(j,\ell\) respectively, inside some \(2g\)-dimensional space \(V\). Hence the poset \(P\) we consider is the poset of isotropic subspaces, with respect to some non-degenerate skew symmetric bilinear form.

We observe that \(P\) is a pure \(n\)-dimensional sub-poset of the \((2n,q,n)\)-Grassmann poset. Also the measure on edges in any containment graph is uniform.

\begin{proof}[Proof of \pref{prop:the-actual-spectral-bound}]
With the notation \(U_\ell\) in \pref{sec:subposets-of-grassmann}. To prove the proposition it suffices to show that there exists a universal constant \(c>1\) such that \(\lambda(U_\ell) \leq \frac{c}{\sqrt{p}}\). If we show that \(P\) is a \(\bar{\gamma}\)-eposet for \(\gamma_\ell = \frac{c'}{p^{g-\ell-1}}\) then this follows from \pref{thm:eposet-of-grassmann}. To prove this we will show that \(P\) is a \(\bar{\gamma}\)-link expander and invoke \pref{thm:grassmann-local-to-global}.

Fix \(w \in P(\ell)\). By \pref{prop:isomorphism-of-symplectic-link} the link graph is isomorphic to the link of \(\set{0}\) in \(C_{g-\ell}\). That is, the vertices are all \(1\)-dimensional subspaces inside a \(2(g-\ell)\)-dimensional space (which are all subspaces since the bilinear form is skew-symmetric), and we connect two subspaces via traversing through a \(2\)-dimensional isotropic subspace - i.e. two subspaces are connected if and only if their sum is an isotropic subspace. If \(v \oplus u\) is isotropic we write \(v \bot u\). Recall we denote the adjacency operator of this graph by \(M_0\).
It will be more convenient to analyze this graph once we add a self loop to every vertex, i.e. add laziness. This corresponds the graph whose matrix is \(M_0' = \frac{1}{\Delta}I + \frac{\Delta-1}{\Delta}M_0\) where \(\Delta\) is the regularity of the graph. As we will see shortly, in this case \(\Delta = \Omega(p^{g-\ell-1})\), so \(\norm{M_0'-M_0} = O(\frac{1}{p^{g-\ell-1}})\) and we can analyze \(M_0'\) instead of \(M_0\). We note that this is not \(D_0U_0\), since the amount of laziness we add is much smaller. It corresponds to the number of neighbors a one-dimensional space has, and not the number of one-dimensional spaces are contained in a two-dimensional space.

Let us consider the double cover of \(M_0'\), i.e. the bipartite graph whose vertices are all \(V \times \set{0,1}\) and there is an edge between \((v,i)\) and \((u,j)\) if \(v \bot u\) and \(i\ne j\). This graph is isomorphic to the containment graph of the Grassmann between subspaces of dimension \(1\) and \(2(g-\ell)-1\), and \(v \sim u\) if and only if \(v \subseteq u^{\bot}\). The isomorphism is given by \((v,0) \mapsto v\) and \((u,1) \mapsto u^{\bot} = \sett{x \in V}{\forall y \in u, \iprod{x,y}=0}\). It is well known that this graph is an \(O(1/\sqrt{p}^{2(g-\ell-1)})=O(1/p^{g-\ell-1})=\gamma_\ell\) one-sided spectral expander (see e.g. \cite{DiksteinDFH2018}). 

As for the degree of every vertex, one observes from the double cover that the degree \(\Delta\) of a subspace \(v\) (in either \(M_0'\) or the double cover), is the number of co-dimension \(1\) subspaces that contain \(v\), i.e. \(\Delta = \Omega(p^{g-\ell-1})\).
\end{proof}
\printbibliography

@inproceedings{EvraK2016,
  author =	 {Shai Evra and Tali Kaufman},
  booktitle =	 {Proc.\ $48$th ACM Symp.\ on Theory of Computing
                  (STOC)},
  pages =	 {36--48},
  title =	 {Bounded degree cosystolic expanders of every
                  dimension},
  year =	 2016,
  doi =		 {10.1145/2897518.2897543},
  eprint =	 {1510.00839},
}

@article{Garland1973,
  author =	 {Howard Garland},
  journal =	 {Ann.\ of Math.},
  number =	 3,
  pages =	 {375--423},
  title =	 {$p$-Adic Curvature and the Cohomology of Discrete
                  Subgroups of $p$-Adic Groups},
  volume =	 97,
  year =	 1973,
  doi =		 {10.2307/1970829},
}

@inproceedings{KaufmanKL2014,
  author =	 {Tali Kaufman and David Kazhdan and Alexander
                  Lubotzky},
  booktitle =	 {Proc.\ $55$th IEEE Symp.\ on Foundations of Comp.\
                  Science (FOCS)},
  pages =	 {484--493},
  title =	 {Ramanujan Complexes and Bounded Degree Topological
                  Expanders},
  year =	 2014,
  doi =		 {10.1109/FOCS.2014.58},
  eprint =	 {1408.6351},
}

@unpublished{KaufmanM2018,
  author =	 {Tali Kaufman and David Mass},
  note =	 {(manuscript)},
  title =	 {Good Distance Lattices from High Dimensional
                  Expanders},
  year =	 2018,
  eprint =	 {1803.02849},
}

@article{LubotzkySV2005a,
  author =	 {Lubotzky, Alexander and Samuels, Beth and Vishne,
                  Uzi},
  journal =	 {Israel J.\ Math.},
  number =	 1,
  pages =	 {267--299},
  title =	 {Ramanujan complexes of type $\tilde{A_d}$},
  volume =	 149,
  year =	 2005,
  doi =		 {10.1007/BF02772543},
  eprint =	 {math/0406208},
}

@article{LubotzkySV2005b,
  author =	 {Alexander Lubotzky and Beth Samuels and Uzi Vishne},
  journal =	 {European J. Combin.},
  number =	 6,
  pages =	 {965-993},
  title =	 {Explicit constructions of {R}amanujan complexes of
                  type $\tilde{A_d}$},
  volume =	 26,
  year =	 2005,
  doi =		 {10.1016/j.ejc.2004.06.007},
  eprint =	 {math/0406217},
}

@inproceedings{KaufmanO181,
  author    = {Tali Kaufman and
               Izhar Oppenheim},
  title     = {Construction of new local spectral high dimensional expanders},
  booktitle = {Proceedings of the 50th Annual {ACM} {SIGACT} Symposium on Theory
               of Computing, {STOC} 2018, Los Angeles, CA, USA, June 25-29, 2018},
  pages     = {773--786},
  year      = {2018},
}

@inproceedings{DiksteinD2019,
  author    = {Yotam Dikstein and Irit Dinur},
  title     = {Agreement testing theorems on layered set systems},
  booktitle = {Proceedings of the 60th Annual IEEE Symposium on
Foundations of Computer Science},
  year      = {2019},
}

@inproceedings{AlevFT2019,
author = {Alev, Vedat and Jeronimo, Fernando and Tulsiani, Madhur},
year = {2019},
month = {11},
pages = {180-201},
booktitle = {Proc. 59th IEEE Symp. on Foundations of Computer Science},
title = {Approximating Constraint Satisfaction Problems on High-Dimensional Expanders},
doi = {10.1109/FOCS.2019.00021}
}

@Article{DinurM2019,
author={Dinur, Irit
and Meshulam, Roy},
title={Near coverings and cosystolic expansion},
journal={Archiv der Mathematik},
year={2022},
month={5},
day={01},
volume={118},
number={5},
pages={549-561},
issn={1420-8938},
doi={10.1007/s00013-022-01720-6},
url={https://doi.org/10.1007/s00013-022-01720-6}
}

@article{Surowski1984,
author = {Surowski, David},
year = {1984},
month = {04},
pages = {35-62},
title = {Covers of simplicial complexes and applications to geometry},
volume = {16},
journal = {Geometriae Dedicata},
doi = {10.1007/BF00147420},
}

@article{Gromov2010,
author = {Gromov, M.},
year = {2010},
pages= {416–526},
title = {Singularities, Expanders and Topology of Maps. Part 2: from Combinatorics to Topology Via Algebraic Isoperimetry},
volume = {20},
journal = {Geom. Funct. Anal.},
doi = {10.1007/s00039-010-0073-8},
}

@article{LinialM2006,
author = {Linial, Nathan and Meshulam, Roy},
year = {2006},
pages= {475–487},
title = {Homological Connectivity Of Random 2-Complexes},
volume = {26},
journal = {Combinatorica},
doi = {10.1007/s00493-006-0027-9},
}

@article{MeshulamW09,
  author    = {Roy Meshulam and
               N. Wallach},
  title     = {Homological connectivity of random \emph{k}-dimensional complexes},
  journal   = {Random Struct. Algorithms},
  volume    = {34},
  number    = {3},
  pages     = {408--417},
  year      = {2009},
  url       = {https://doi.org/10.1002/rsa.20238},
  doi       = {10.1002/rsa.20238},
  timestamp = {Mon, 09 Mar 2020 09:23:13 +0100},
  biburl    = {https://dblp.org/rec/journals/rsa/MeshulamW09.bib},
  bibsource = {dblp computer science bibliography, https://dblp.org}
}

@misc{GotlibK2022,
Author = {Roy Gotlib and Tali Kaufman},
Title = {List Agreement Expansion from Coboundary Expansion},
Year = {2022},
Eprint = {arXiv:2210.15714},
}

@article{LubotzkyMM2016,
  title={Expansion of building-like complexes},
  author={Lubotzky, Alexander and Meshulam, Roy and Mozes, Shahar},
  journal={Groups, Geometry, and Dynamics},
  volume={10},
  number={1},
  pages={155--175},
  year={2016}
}

@inproceedings{KaufmanO2021,
  author    = {Tali Kaufman and
               Izhar Oppenheim},
  title     = {Coboundary and Cosystolic Expansion from Strong Symmetry},
  booktitle = {48th International Colloquium on Automata, Languages, and Programming,
               {ICALP} 2021, July 12-16, 2021, Glasgow, Scotland (Virtual Conference)},
  series    = {LIPIcs},
  volume    = {198},
  pages     = {84:1--84:16},
  publisher = {Schloss Dagstuhl - Leibniz-Zentrum f{\"{u}}r Informatik},
  year      = {2021},
  url       = {https://doi.org/10.4230/LIPIcs.ICALP.2021.84},
  doi       = {10.4230/LIPIcs.ICALP.2021.84},
  bibsource = {dblp computer science bibliography, https://dblp.org}
}

@article{KozlovM2019,
  title={Quantitative aspects of acyclicity},
  author={Kozlov, Dmitry N and Meshulam, Roy},
  journal={Research in the Mathematical Sciences},
  volume={6},
  number={4},
  pages={1--32},
  year={2019},
  publisher={Springer}
}

@unpublished{Dikstein2022,
  author = "Yotam Dikstein",
  title  = "New High Dimensional Expanders from Covers",
  year = "2022"
}

@inproceedings{DinurK2017,
  author =	 {Irit Dinur and Tali Kaufman},
  booktitle =	 {Proc.\ $58$th IEEE Symp.\ on Foundations of Comp.\
                  Science (FOCS)},
  pages =	 {974--985},
  title =	 {High dimensional expanders imply agreement
                  expanders},
  year =	 2017,
  doi =		 {10.1109/FOCS.2017.94},
  eccc =	 {2017/TR17-089},
}

@inproceedings{GurLL22,
  author    = {Tom Gur and
               Noam Lifshitz and
               Siqi Liu},
  editor    = {Stefano Leonardi and
               Anupam Gupta},
  title     = {Hypercontractivity on high dimensional expanders},
  booktitle = {{STOC} '22: 54th Annual {ACM} {SIGACT} Symposium on Theory of Computing,
               Rome, Italy, June 20 - 24, 2022},
  pages     = {176--184},
  publisher = {{ACM}},
  year      = {2022},
  url       = {https://doi.org/10.1145/3519935.3520004},
  doi       = {10.1145/3519935.3520004},
  timestamp = {Tue, 21 Mar 2023 20:51:33 +0100},
  biburl    = {https://dblp.org/rec/conf/stoc/GurLL22.bib},
  bibsource = {dblp computer science bibliography, https://dblp.org}
}

@inproceedings{DiksteinDFH2018,
  author =	 {Yotam Dikstein and Irit Dinur and Yuval Filmus and
                  Prahladh Harsha},
  booktitle =	 {Proc.\ $20$th International Workshop on
                  Randomization and Computation (RANDOM)},
  title =	 {Boolean function analysis on high-dimensional
                  expanders},
  volume =	 116,
  year =	 2018,
  eccc =	 {2018/TR18-075},
  eprint =	 {1804.08155},
  pages =	 {37:1--37:21},
  doi =		 {10.4230/LIPIcs.APPROX/RANDOM.2018.37},
  address={Princeton, NJ},
  publisher={RANDOM/APPROX}
}

@article{Oppenheim2018,
  author =	 {Izhar Oppenheim},
  journal =	 {Discrete Comput.\ Geom.},
  number =	 2,
  pages =	 {293--330},
  title =	 {Local Spectral Expansion Approach to High
                  Dimensional Expanders Part {I}: Descent of Spectral
                  Gaps},
  volume =	 59,
  year =	 2018,
  doi =		 {10.1007/s00454-017-9948-x},
  eprint =	 {1709.04431},
}

@article{DiksteinD2023cbdry,
  author       = {Yotam Dikstein and
                  Irit Dinur},
  title        = {Coboundary and cosystolic expansion without dependence on dimension
                  or degree},
  journal      = {CoRR},
  volume       = {abs/2304.01608},
  year         = {2023},
  url          = {https://doi.org/10.48550/arXiv.2304.01608},
  doi          = {10.48550/arXiv.2304.01608},
  eprinttype    = {arXiv},
  eprint       = {2304.01608},
  timestamp    = {Thu, 20 Apr 2023 14:44:53 +0200},
}

@article{DiksteinD2023agr,
  author       = {Yotam Dikstein and
                  Irit Dinur},
  title        = {Agreement theorems for high dimensional expanders in the small soundness
                  regime: the role of covers},
  journal      = {CoRR},
  volume       = {abs/2308.09582},
  year         = {2023},
  url          = {https://doi.org/10.48550/arXiv.2308.09582},
  doi          = {10.48550/ARXIV.2308.09582},
  eprinttype    = {arXiv},
  eprint       = {2308.09582},
  timestamp    = {Fri, 25 Aug 2023 11:24:49 +0200},
  biburl       = {https://dblp.org/rec/journals/corr/abs-2308-09582.bib},
  bibsource    = {dblp computer science bibliography, https://dblp.org}
}

@inproceedings{DinurG2008,
  title={Locally testing direct products in the high error range},
  author={Dinur, Irit and Goldenberg, Elazar},
  year={2008},
  organization={FOCS}
}

@article{ImpagliazzoKW2012,
  title={New Direct-Product Testers and 2-Query PCPs},
  author={Impagliazzo, Russell and Kabanets, Valentine and Wigderson, Avi},
  journal={SIAM Journal on Computing},
  volume={41},
  number={6},
  pages={1722--1768},
  year={2012},
  publisher={Society for Industrial and Applied Mathematics}
}

@inproceedings{DinurL2017,
  title={Exponentially small soundness for the direct product z-test},
  author={Dinur, Irit and Livni Navon, Inbal},
  booktitle={32nd Computational Complexity Conference (CCC 2017)},
  year={2017},
  organization={Schloss Dagstuhl-Leibniz-Zentrum fuer Informatik}
}

@unpublished{DiksteinD2023swap,
    Author = {Yotam Dikstein and Irit Dinur},
    Title = {Swap cosystolic expansion},
    Year = {2023},
    Eprint = {arXiv:2312.15325},
}

@article{DinurR06,
  author       = {Irit Dinur and
                  Omer Reingold},
  title        = {Assignment Testers: Towards a Combinatorial Proof of the {PCP} Theorem},
  journal      = {{SIAM} J. Comput.},
  volume       = {36},
  number       = {4},
  pages        = {975--1024},
  year         = {2006},
  url          = {https://doi.org/10.1137/S0097539705446962},
  doi          = {10.1137/S0097539705446962},
  timestamp    = {Sat, 27 May 2017 14:22:59 +0200},
  biburl       = {https://dblp.org/rec/journals/siamcomp/DinurR06.bib},
  bibsource    = {dblp computer science bibliography, https://dblp.org}
}

@inproceedings{DinurS14,
  author       = {Irit Dinur and
                  David Steurer},
  title        = {Direct Product Testing},
  booktitle    = {{IEEE} 29th Conference on Computational Complexity, {CCC} 2014, Vancouver,
                  BC, Canada, June 11-13, 2014},
  pages        = {188--196},
  publisher    = {{IEEE} Computer Society},
  year         = {2014},
  url          = {https://doi.org/10.1109/CCC.2014.27},
  doi          = {10.1109/CCC.2014.27},
}

@article{DinurM11,
  author       = {Irit Dinur and
                  Or Meir},
  title        = {Derandomized Parallel Repetition via Structured PCPs},
  journal      = {Comput. Complex.},
  volume       = {20},
  number       = {2},
  pages        = {207--327},
  year         = {2011},
  url          = {https://doi.org/10.1007/s00037-011-0013-5},
  doi          = {10.1007/s00037-011-0013-5},
}

@InProceedings{GolSaf97,
  author =       {Goldreich, Oded and Safra, Shmuel},
  title =        "A Combinatorial Consistency Lemma with Application to
                 Proving the {PCP} Theorem",
  booktitle =    "RANDOM: International Workshop on Randomization and
                 Approximation Techniques in Computer Science",
  publisher =    "LNCS",
  year =         "1997",
}

@Article{Raz-parrep,
  title =        "A Parallel Repetition Theorem",
  author =       "Ran Raz",
  pages =        "763--803",
  journal =      "SIAM Journal on Computing",
  month =        jun,
  year =         "1998",
  volume =       "27",
  number =       "3",
  siam-key =     "28089"}

@article{MR10,
  author    = {Dana Moshkovitz and
               Ran Raz},
  title     = {Two-query {PCP} with subconstant error},
  journal   = {J. ACM},
  volume    = {57},
  number    = {5},
  year      = {2010},
}

@inproceedings{DKKMS,
  author    = {Irit Dinur and
               Subhash Khot and
               Guy Kindler and
               Dor Minzer and
               Muli Safra},
  editor    = {Ilias Diakonikolas and
               David Kempe and
               Monika Henzinger},
  title     = {Towards a proof of the 2-to-1 games conjecture?},
  booktitle = {Proceedings of the 50th Annual {ACM} {SIGACT} Symposium on Theory
               of Computing, {STOC} 2018, Los Angeles, CA, USA, June 25-29, 2018},
  pages     = {376--389},
  publisher = {{ACM}},
  year      = {2018},
  url       = {https://doi.org/10.1145/3188745.3188804},
  doi       = {10.1145/3188745.3188804},
}

@inproceedings{KMS,
  author    = {Subhash Khot and
               Dor Minzer and
               Muli Safra},
  editor    = {Mikkel Thorup},
  title     = {Pseudorandom Sets in Grassmann Graph Have Near-Perfect Expansion},
  booktitle = {59th {IEEE} Annual Symposium on Foundations of Computer Science, {FOCS}
               2018, Paris, France, October 7-9, 2018},
  pages     = {592--601},
  publisher = {{IEEE} Computer Society},
  year      = {2018},
  url       = {https://doi.org/10.1109/FOCS.2018.00062},
  doi       = {10.1109/FOCS.2018.00062},
}

@inproceedings{RaSa,
author ={R.~Raz and S.~Safra},
title ={A Sub-Constant Error-Probability Low-Degree Test, and a
                  Sub-Constant Error-Probability {PCP}
                  Characterization of {NP}},
booktitle = "Proc. 28th ACM Symp. on Theory of Computing",
pages = {475--484},
year = 1997}

@InProceedings{ArSu,
  title =        "Improved Low Degree Testing and its Applications",
  author =       "Sanjeev Arora and Madhu Sudan",
  pages =        "485--495",
  booktitle =    "Proceedings of the Twenty-Ninth Annual {ACM} Symposium
                 on Theory of Computing",
  year =         "1997",
  address =      "El Paso, Texas"}

@inproceedings{DinurS14-parrep,
  author    = {Irit Dinur and
               David Steurer},
  title     = {Analytical approach to parallel repetition},
  booktitle = {Symposium on Theory of Computing, {STOC} 2014, New York, NY, USA,
               May 31 - June 03, 2014},
  pages     = {624--633},
  year      = {2014},
}

@unpublished{BafnaM2023,
author = {Mitali Bafna and Dor Minzer},
title = {Characterizing Direct Product Testing via Coboundary Expansion},
url = {https://arxiv.org/pdf/2308.09668.pdf},
year = 2023}

@article{KaufmanO2020,
  title={High order random walks: Beyond spectral gap},
  author={Kaufman, Tali and Oppenheim, Izhar},
  journal={Combinatorica},
  volume={40},
  pages={245--281},
  year={2020},
  publisher={Springer}
}

@book{Hatcher2002,
  address = {Cambridge},
  author = {Hatcher, Allen},
  groups = {public},
  isbn = {0-521-79540-0},
  keywords = {},
  mrnumber = {1867354 (2002k:55001)},
  publisher = {Cambridge University Press},
  title = {Algebraic topology},
  username = {mwpb479},
  year = 2002
}

@Article{LubotzkyD1981,
author={Lubotzky, Alex
and van den Dries, Lou},
title={Subgroups of free profinite groups and large subfields of Q},
journal={Israel Journal of Mathematics},
year={1981},
month={3},
day={01},
volume={39},
number={1},
pages={25-45},
doi={10.1007/BF02762851},
url={https://doi.org/10.1007/BF02762851}
}

@book{RibesZ2000,
  title={Profinite groups},
  author={Ribes, Luis and Zalesskii, Pavel},
  year={2000},
  publisher={Springer}
}

@inproceedings{Weigel1996,
  title={On the profinite completion of arithmetic groups of split type},
  author={Thomas S. Weigel},
  year={1996},
  url={https://api.semanticscholar.org/CorpusID:118307002}
}

@misc{ChapmanL2023,
Author = {Michael Chapman and Alexander Lubotzky},
Title = {Stability of Homomorphisms, Coverings and Cocycles II: Examples, Applications and Open problems},
Year = {2023},
Eprint = {arXiv:2311.06706},
}

@book{PlatonovRR1993,
 place={Cambridge}, edition={2}, series={Cambridge Studies in Advanced Mathematics}, title={Algebraic Groups and Number Theory}, publisher={Cambridge University Press}, author={Platonov, Vladimir and Rapinchuk, Andrei and Rapinchuk, Igor}, year={2023}, collection={Cambridge Studies in Advanced Mathematics}
}

@book{Morris2001,
  title={Introduction to Arithmetic Groups},
  author={Morris, Dave Witte},
  year={2001}
}

@book{Artin1957,
  title     = "Geometric Algebra",
  author    = "Artin, Emil",
  publisher = "John Wiley \& Sons",
  series    = "Wiley Classics Library",
  month     =  feb,
  year      =  1957,
  address   = "Nashville, TN",
  language  = "en"
}

@book{AbramenkoB2008,
  title={Buildings: theory and applications},
  author={Abramenko, Peter and Brown, Kenneth S},
  volume={248},
  year={2008},
  publisher={Springer Science \& Business Media}
}

@book{MaclachlanR2003,
  title={The arithmetic of hyperbolic 3-manifolds},
  author={Maclachlan, Colin and Reid, Alan W},
  volume={219},
  publisher={Springer}
}

@book{DixonDMS1999,
  title={Analytic Pro-P Groups},
  author={Dixon, John D and Du Sautoy, Marcus PF and Mann, Avinoam and Segal, Dan},
  volume={61},
  year={1999},
  publisher={Cambridge University Press}
}

@inproceedings{Rapinchuk1989,
  title={On the congruence subgroup problem for algebraic groups},
  author={Rapinchuk, Andrei Stepanovich},
  booktitle={Doklady Akademii Nauk},
  volume={306},
  number={6},
  pages={1304--1307},
  year={1989},
  organization={Russian Academy of Sciences}
}

@book{Weiss2008,
  title={The Structure of Affine Buildings.(AM-168)},
  author={Weiss, Richard M},
  volume={168},
  year={2008},
  publisher={Princeton University Press}
}

@article{Serre1970,
  title={Le probleme des groupes de congruence pour SL2},
  author={Serre, Jean-Pierre},
  journal={Annals of Mathematics},
  pages={489--527},
  year={1970},
  publisher={JSTOR}
}

@misc{BafnaLM2024,
Author = {Mitali Bafna and Noam Lifshitz and Dor Minzer},
Title = {Constant Degree Direct Product Testers with Small Soundness},
Year = {2024},
Eprint = {arXiv:2402.00850},
}

@article{Ballantine2000,
  title={Ramanujan type buildings},
  author={Ballantine, Cristina M},
  journal={Canadian Journal of Mathematics},
  volume={52},
  number={6},
  pages={1121--1148},
  year={2000},
  publisher={Cambridge University Press}
}

@article{CartwrightSZ2003,
  title={Ramanujan geometries of type An},
  author={Cartwright, Donald I and Sol{\'e}, Patrick and {\.Z}uk, Andrzej},
  journal={Discrete mathematics},
  volume={269},
  number={1-3},
  pages={35--43},
  year={2003},
  publisher={Elsevier}
}

@article{Li2004,
  title={Ramanujan hypergraphs},
  author={Li, Wen-Ching Winnie},
  journal={Geometric \& Functional Analysis GAFA},
  volume={14},
  pages={380--399},
  year={2004},
  publisher={Springer}
}

@Book{Tits1974,
  author = {Tits, Jacques},
  title = {Buildings of Spherical Type and Finite BN-Pairs},
  series = {Lecture Notes in Mathematics},
  volume = {386},
  publisher = {Springer-Verlag},
  year = {1974},
  isbn = {978-3-540-06757-3},
  doi = {https://doi.org/10.1007/978-3-540-38349-9}
}

@misc{ChapmanL2023stability,
Author = {Michael Chapman and Alexander Lubotzky},
Title = {Stability of Homomorphisms, Coverings and Cocycles I: Equivalence},
Year = {2023},
Eprint = {arXiv:2310.17474},
}

@inproceedings{KaufmanM2017high,
  title={High dimensional random walks and colorful expansion},
  author={Kaufman, Tali and Mass, David},
  booktitle={8th Innovations in Theoretical Computer Science Conference (ITCS 2017)},
  year={2017},
  organization={Schloss-Dagstuhl-Leibniz Zentrum f{\"u}r Informatik}
}

\appendix
\section{Affine building proofs} \label{app:building-proofs}
Before proving \pref{claim:basic-properties-of-affine-symplectic-building}, we show the following.
\begin{claim} \label{claim:unique-reps}
    Let \(\set{[L_0],[L_1],\dots,[L_g]} \in \tilde{C}_g(g)\). The representatives \(L_i \in [L_i]\) such that \eqref{eq:isotropic-flag-Qp} holds (i.e. such that \(L_0\) is primitive and such that \(L_i/pL_0\) are isotropic), are unique.
\end{claim}
\begin{proof}[Proof of \pref{claim:unique-reps}]
    First let us prove that there are no two primitive lattices \(L_1,L_2\) such that \(L_1 \subsetneq L_2\). Indeed, assume otherwise. By translating via \(Sp(2g,\mathbb{Q}_p)\), we can assume that \(L_2 = L_{std}\). That is 
    \[L_{2} = \sp_{\mathbb{Z}_p}(e_1,e_2,\dots,e_g,f_1,f_2,\dots,f_g)\]
    the standard basis. We also write 
    \[L_1 = \sp_{\mathbb{Z}_p}(e_1',e_2',\dots,e_g',f_1',f_2',\dots,f_g')\]
    so that \eqref{eq:asymmetric-bilinear-form} holds for this basis. There exists \(v \in \set{e_1,e_2,\dots,e_g,f_1,f_2,\dots,f_g}\) such that in the linear combination \(v = \sum_{j=1}^g \alpha_j e_j' + \beta_j f_j'\) one of the \(\alpha_j\) or \(\beta_j\) are in \(\mathbb{Q}_p \setminus \mathbb{Z}_p\). Without loss of generality \(v=e_1\) and \(\alpha_1 \in \mathbb{Q}_p \setminus \mathbb{Z}_p\) (the proof is the same for every choice). On the one hand, from primitivity of \(L_2\), \(\iprod{e_1,f_j'} = \alpha_1 \notin \mathbb{Z}_p\). On the other hand, as \(f_j' \in L_2\) then \(f_j' = \sum_{j=1}^g \gamma_j e_j + \delta_j f_j\) with \(\gamma_j,\delta_j \in \mathbb{Z}_p\). It follows that \(\iprod{e_1,f_j'} = \delta_1 \in \mathbb{Z}_p\), a contradiction.

    This shows that if \([L_0]\) is primitive, then no other \([L_i]\) is primitive. This is because there are representatives such that \(L_i \subsetneq L_0\), and if  there were some primitive representative \(L_i' \in [L_i]\), i.e.\ \(L_i'=p^j L_i\), then \(L_0 \subsetneq L_i'\) or vice versa, which contradicts the above. In addition, the above also shows that there is a unique \(L_0 \in [L_0]\) that is primitive.

    Thus, by the definition of the relation, there is a unique representative \(L_i \in [L_i]\) such that \(pL_0 \subsetneq L_i \subsetneq L_0\) and in particular, there is a unique choice for the flag to hold. The claim is proven.
\end{proof}

\restateclaim{claim:basic-properties-of-affine-symplectic-building}
\begin{proof}[Proof of \pref{claim:basic-properties-of-affine-symplectic-building}]~
\paragraph{First item}First we note that \(L_1 \sim L_2\) if and only if for every \(A \in Sp(2g,\mathbb{Q}_p)\), \(AL_1 \sim AL_2\). Thus \(Sp(2g,\mathbb{Q}_p)\) acts on \emph{lattice classes}. Moreover, it is easy to verify that for every \([L] \in \tilde{C}_g(0)\) and \(A \in Sp(2g,\mathbb{Q}_p)\), \([AL] \in \mathbb{C}_g(0)\): if \(pL_0 \subseteq L \subseteq L_0\), then \(pAL_0 \subseteq AL \subseteq AL_0\) where \(AL_0\) is primitive. Finally, we note that by definition of \(Sp(2g,\mathbb{Q}_p)\), for any \(A \in Sp(2g,\mathbb{Q}_p)\) and \(u_1,u_2 \in L\), \(\iprod{Au_1,Au_2} = \iprod{u_1,u_2}\). In particular for every \(u_1,u_2 \in L\) and primitive lattice \(L_0\), 
\[\iprod{u_1 + pL_0, u_2 + pL_0} = \iprod{Au_1 + pAL_0, Au_2 + pAL_0}.\]
Thus \(L/pL_0\) is isotropic if and only if \(AL/pAL_0\) is isotropic. Finally, as \(A \in Sp(2g,\mathbb{Q}_p)\) clearly preserves containment between lattices, it also preserves flags as in \eqref{eq:isotropic-flag-Qp}, and even sends the \emph{unique} representatives of the flag \(\set{[L_0],[L_1],\dots,[L_g]}\) to the unique representatives of \(\set{[AL_0],[AL_1],\dots,[AL_g]}\).

\paragraph{Second item}The group \(Sp(2g,\mathbb{Q}_p)\) acts transitively on primitive lattices since these are exactly the orbit of a \(L_{std}\). Thus they also act transitively on primitive lattice classes. Let us show that the stabilizer of \([L_{std}]\) acts transitively on \((\tilde{C}_g)_{[L_{std}]}\). This shows that \(Sp(2g,\mathbb{Q}_p)\) acts transitively, since given a pair \(s_1,s_2 \in \tilde{C}_g(g)\) we can send them to a pair of faces \(s_1',s_2'\) containing \([L_{std}]\) via some \(A_1,A_2 \in Sp(2g,\mathbb{Q}_p)\). Then we find some \(B \in Stab([L_{std}])\) that sends \(s_1'\) to \(s_2'\). This will show that \(A_2^{-1}BA_1s_1 = s_2\). 

By \pref{claim:unique-reps} the space of flags is isomorphic to all the lattice flags \(\set{L_1 \subsetneq L_2 \subsetneq \dots \subsetneq L_g}\) such that 
\[L_1/pL_{std} \subsetneq L_2/pL_{std} \subsetneq \dots \subsetneq L_g/pL_{std} \subsetneq L_{std}/pL_{std}\]
is a flag of isotropic subspaces. There is a bijection between the flags \(\set{L_1 \subsetneq L_2 \subsetneq \dots \subsetneq L_g}\) with this isotropic quotient, and the quotent itself 
\[\set{L_1/pL_{std} \subsetneq L_2/pL_{std} \subsetneq \dots \subsetneq L_g/pL_{std}}.\]
So it is enough to show that the action \(A. (L_1/pL_{std}) := (A.L_1)/pL_{std}\) is transitive on flags of isotropic subspaces\footnote{I the proof of the first item, we saw that this action is well defined.}.

It is easy to verify that \(Stab(L_{std}) = Sp(2g,\mathbb{Z}_p)\). Moreover, using the projection from \(\mathbb{Z}_p\) to \(\mathbb{F}_p\) on the matrices \(Sp(2g,\mathbb{Z}_p)\) entrywise, gives a surjective homomorphism to \(\psi: Sp(2g,\mathbb{Z}_p) \to Sp(2g,\mathbb{F}_p)\). Finally, one can verify directly that \(A. (L_1/pL_{std}) = (\psi(A).L_1)/pL_{std}\). The action of \(Sp(2g,\mathbb{F}_p)\) is transitive on flags, so we can conclude the proof.

\paragraph{Third item}We note that every vertex \([L]\) has some primitive \(L'\) so that \(pL' \subseteq L \subseteq L'\) (for some \(L \in [L]\)). Inside the link of \([L']\) there is a flag containing \(L\) and \(L'\) (because there is a maximal flag in \(L/pL\) containing \(L'\)). The action of \(Sp(2g,\mathbb{Q}_p)\) is transitive on \(g\)-faces and preserves the dimension of every \([L]\) with respect to every primitive lattice that it shares a face with. Thus the colors are well defined.

\paragraph{Fourth item}This is a direct conclusion of the above; we already concluded that set of flags containing some primitive \([L_0]\) is (isomorphic to) the set of \(\set{L_1/pL_{0},L_2/pL_{0},\dots,L_g/pL_{0}}\) such that 
\[L_1/pL_{0} \subsetneq L_2/pL_{0} \subsetneq \dots \subsetneq L_g/pL_{0}\]
is an isotropic flag inside \(L_0/pL_0\) (where \(L_0 \in [L_0]\) is the primitive element).
\end{proof}
\end{document}